\DeclareDocumentCommand{\hcancel}{mO{0pt}O{1pt}O{0pt}O{-1pt}}{%
    \tikz[baseline=(tocancel.base)]{
        \node[inner sep=0pt,outer sep=0pt] (tocancel) {#1};
        \draw[gray] ($(tocancel.south west)+(#2,#3)$) -- ($(tocancel.north east)+(#4,#5)$);
    }%
}%
\let\myfresh\#
\def\#{\ensuremath{\text{\tt\myfresh}}}
\newcommand\mjg[1]{} 
\newcommand\figref[2]{\text{Figure~\ref{#1}} \rulefont{#2}}
\newcommand\integer{{\mathbb Z}}
\newcommand\minus{{\text{-}}}
\newcommand\plus{{\text{+}}}
\newcommand\minlev{\f{minlevel}}
\newcommand\finsubseteq{\mathbin{\subseteq_{\text{\it fin}}}}
\newcommand\level{\f{level}}
\newcommand\age{\f{age}}
\newcommand\compressthis[1]{\pmb{\hspace{.8pt}\raisebox{.5pt}{\scalebox{.85}{$#1$}}\hspace{.2pt}}}
\newcommand\tneg{{\pmb\neg}}
\newcommand\tbot{{\pmb\bot}}
\newcommand\teq{{\pmb{\text{=}}}}
\newcommand\tand{{\pmb\wedge}}
\newcommand\tin{{\pmb{\in}}}
\newcommand\tst[2]{\compressthis{\{}#1\hspace{.25pt}\compressthis{\mid}\hspace{.4pt}#2\compressthis{\}}}
\newcommand\tall{{\compressthis{\forall}}}
\newcommand\texi{{\pmb\exists}}
\renewcommand\land{\wedge}
\newcommand\limp{\Rightarrow}
\newcommand\mathsc[1]{\text{\tt #1}}
\newcommand\crediff{{\mathsc{iff}}}
\newcommand\myiff[2]{\crediff(#1,#2)} 
\newcommand\credimp{{\mathsc{imp}}}
\newcommand\myimp[2]{\credimp(#1,#2)}
\newcommand\credand{{\mathsc{and}}}
\newcommand\credor{{\mathsc{or}}}
\newcommand\credelt{{\mathsc{elt}}}
\newcommand\credneg{{\mathsc{neg}}}
\newcommand\credempset{{\mathsc{empt}}}
\newcommand\credtrue{{\mathsc{T}}}
\newcommand\credfalse{{\mathsc{F}}}
\newcommand\credfullset{{\mathsc{set}}}
\newcommand\credall{{\mathsc{all}}} 
\newcommand\credst{{\mathsc{st}}} 
\newcommand\atm{\mathsc{atm}}
\newcommand\credatm{\mathsc{atm}}
\newcommand\myneg[1]{\credneg(#1)}
\newcommand\myand[1]{{\credand}(#1)}
\newcommand\mybinaryand[2]{\myand{\{#1,#2\}}}
\newcommand\myor[1]{{\credor}(#1)}
\newcommand\myelt[2]{\credelt(#1,#2)}
\newcommand\myall[2]{\credall([#1]#2)}
\newcommand\myatm[1]{\credatm(#1)}
\newcommand\myst[2]{\mathsc{st}([#1]#2)}
\newcommand\Func{{\Rightarrow}}
 \renewenvironment{thebibliography}[1]{%
   \begin{odlthebibliography}{#1}%
     \setlength{\parskip}{0ex}%
     \setlength{\itemsep}{3pt}%
     \fontsize{10}{10} 
     \selectfont
}%
 {%
   \end{odlthebibliography}%
 }
\newlength{\mylength}
{\setlength{\fboxsep}{5pt}
\setlength{\mylength}{\linewidth}%
\addtolength{\mylength}{-2\fboxsep}%
\addtolength{\mylength}{-2\fboxrule}%
\Sbox
\minipage{\mylength}%
\setlength{\abovedisplayskip}{0pt}%
\setlength{\belowdisplayskip}{0pt}%
$$}%
{$$\endminipage\endSbox
{\setlength{\abovedisplayskip}{1pt}%
\setlength{\belowdisplayskip}{0pt}%
\[\fbox{\TheSbox}\]}}
\newenvironment{frametxt}%
{\setlength{\fboxsep}{5pt}
\setlength{\mylength}{\linewidth}%
\addtolength{\mylength}{-2\fboxsep}%
\addtolength{\mylength}{-2\fboxrule}%
\Sbox
\minipage{\mylength}%
\setlength{\abovedisplayskip}{5pt}%
\setlength{\belowdisplayskip}{5pt}%
}%
{\endminipage\endSbox
{\setlength{\abovedisplayskip}{1pt}%
\setlength{\belowdisplayskip}{0pt}%
\[\fbox{\TheSbox}\]}}
\newdimen\proofrulebreadth \proofrulebreadth=.05em
\newdimen\proofdotseparation \proofdotseparation=1.25ex
\newdimen\proofrulebaseline \proofrulebaseline=2ex
\let\then\relax
\def\hfi{\hskip0pt plus.0001fil}
\mathchardef\squigto="3A3B
\newif\ifinsideprooftree\insideprooftreefalse
\newif\ifonleftofproofrule\onleftofproofrulefalse
\newif\ifproofdots\proofdotsfalse
\newif\ifdoubleproof\doubleprooffalse
\let\wereinproofbit\relax
\newdimen\shortenproofleft
\newdimen\shortenproofright
\newdimen\proofbelowshift
\newbox\proofabove
\newbox\proofbelow
\newbox\proofrulename
\def\shiftproofbelow{\let\next\relax\afterassignment\setshiftproofbelow\dimen0 }
\def\shiftproofbelowneg{\def\next{\multiply\dimen0 by-1 }%
\afterassignment\setshiftproofbelow\dimen0 }
\def\setshiftproofbelow{\next\proofbelowshift=\dimen0 }
\def\setproofrulebreadth{\proofrulebreadth}
\def\prooftree{
%
\ifnum  \lastpenalty=1
\then   \unpenalty
\else   \onleftofproofrulefalse
\fi
%
\ifonleftofproofrule
\else   \ifinsideprooftree
        \then   \hskip.5em plus1fil
        \fi
\fi
%
\bgroup
\setbox\proofbelow=\hbox{}\setbox\proofrulename=\hbox{}%
\let\justifies\proofover\let\leadsto\proofoverdots\let\Justifies\proofoverdbl
\let\using\proofusing\let\[\prooftree
\ifinsideprooftree\let\]\endprooftree\fi
\proofdotsfalse\doubleprooffalse
\let\thickness\setproofrulebreadth
\let\shiftright\shiftproofbelow \let\shift\shiftproofbelow
\let\shiftleft\shiftproofbelowneg
\let\ifwasinsideprooftree\ifinsideprooftree
\insideprooftreetrue
%
\setbox\proofabove=\hbox\bgroup$\displaystyle 
\let\wereinproofbit\prooftree
%
\shortenproofleft=0pt \shortenproofright=0pt \proofbelowshift=0pt
%
\onleftofproofruletrue\penalty1
}
\def\eproofbit{
%
\ifx    \wereinproofbit\prooftree
\then   \ifcase \lastpenalty
        \then   \shortenproofright=0pt  
        \or     \unpenalty\hfil         
        \or     \unpenalty\unskip       
        \else   \shortenproofright=0pt  
        \fi
\fi
%
\global\dimen0=\shortenproofleft
\global\dimen1=\shortenproofright
\global\dimen2=\proofrulebreadth
\global\dimen3=\proofbelowshift
\global\dimen4=\proofdotseparation
\global\count255=\proofdotnumber
%
$\egroup  
%
\shortenproofleft=\dimen0
\shortenproofright=\dimen1
\proofrulebreadth=\dimen2
\proofbelowshift=\dimen3
\proofdotseparation=\dimen4
\proofdotnumber=\count255
}
\def\proofover{
\eproofbit 
\setbox\proofbelow=\hbox\bgroup 
\let\wereinproofbit\proofover
$\displaystyle
}%
\def\proofoverdbl{
\eproofbit 
\doubleprooftrue
\setbox\proofbelow=\hbox\bgroup 
\let\wereinproofbit\proofoverdbl
$\displaystyle
}%
\def\proofoverdots{
\eproofbit 
\proofdotstrue
\setbox\proofbelow=\hbox\bgroup 
\let\wereinproofbit\proofoverdots
$\displaystyle
}%
\def\proofusing{
\eproofbit 
\setbox\proofrulename=\hbox\bgroup 
\let\wereinproofbit\proofusing
\kern0.3em$
}
\def\endprooftree{
\eproofbit 
  \dimen5 =0pt
%
\dimen0=\wd\proofabove \advance\dimen0-\shortenproofleft
\advance\dimen0-\shortenproofright
%
\dimen1=.5\dimen0 \advance\dimen1-.5\wd\proofbelow
\dimen4=\dimen1
\advance\dimen1\proofbelowshift \advance\dimen4-\proofbelowshift
%
\ifdim  \dimen1<0pt
\then   \advance\shortenproofleft\dimen1
        \advance\dimen0-\dimen1
        \dimen1=0pt
        \ifdim  \shortenproofleft<0pt
        \then   \setbox\proofabove=\hbox{%
                        \kern-\shortenproofleft\unhbox\proofabove}%
                \shortenproofleft=0pt
        \fi
\fi
%
\ifdim  \dimen4<0pt
\then   \advance\shortenproofright\dimen4
        \advance\dimen0-\dimen4
        \dimen4=0pt
\fi
%
\ifdim  \shortenproofright<\wd\proofrulename
\then   \shortenproofright=\wd\proofrulename
\fi
%
\dimen2=\shortenproofleft \advance\dimen2 by\dimen1
\dimen3=\shortenproofright\advance\dimen3 by\dimen4
%
\ifproofdots
\then
        \dimen6=\shortenproofleft \advance\dimen6 .5\dimen0
        \setbox1=\vbox to\proofdotseparation{\vss\hbox{$\cdot$}\vss}%
        \setbox0=\hbox{%
                \advance\dimen6-.5\wd1
                \kern\dimen6
                $\vcenter to\proofdotnumber\proofdotseparation
                        {\leaders\box1\vfill}$%
                \unhbox\proofrulename}%
\else   \dimen6=\fontdimen22\the\textfont2 
        \dimen7=\dimen6
        \advance\dimen6by.5\proofrulebreadth
        \advance\dimen7by-.5\proofrulebreadth
        \setbox0=\hbox{%
                \kern\shortenproofleft
                \ifdoubleproof
                \then   \hbox to\dimen0{%
                        $\mathsurround0pt\mathord=\mkern-6mu%
                        \cleaders\hbox{$\mkern-2mu=\mkern-2mu$}\hfill
                        \mkern-6mu\mathord=$}%
                \else   \vrule height\dimen6 depth-\dimen7 width\dimen0
                \fi
                \unhbox\proofrulename}%
        \ht0=\dimen6 \dp0=-\dimen7
\fi
%
\let\doll\relax
\ifwasinsideprooftree
\then   \let\VBOX\vbox
\else   \ifmmode\else$\let\doll=$\fi
        \let\VBOX\vcenter
\fi
\VBOX   {\baselineskip\proofrulebaseline \lineskip.2ex
        \expandafter\lineskiplimit\ifproofdots0ex\else-0.6ex\fi
        \hbox   spread\dimen5   {\hfi\unhbox\proofabove\hfi}%
        \hbox{\box0}%
        \hbox   {\kern\dimen2 \box\proofbelow}}\doll%
%
\global\dimen2=\dimen2
\global\dimen3=\dimen3
\egroup 
\ifonleftofproofrule
\then   \shortenproofleft=\dimen2
\fi
\shortenproofright=\dimen3
%
\onleftofproofrulefalse
\ifinsideprooftree
\then   \hskip.5em plus 1fil \penalty2
\fi
}
\newcommand\ns{\mathsf}
\newcommand\theory[1]{\ensuremath{\mathsf{#1}}\xspace}
\def\:{{\hspace{-1pt}{:}\hspace{-1.25pt}{:}\hspace{-.5pt}}}
\def\id{\mathtxt{id}}
\newcommand\ssm{{{:}\text{=}}}
\newcommand\deffont[1]{{\bf #1}}
\newcommand\mone{{{\text{-}1}}}
\newcommand\liff{\Leftrightarrow}
\newcommand\size{\f{size}}
\newcommand\complexity{\f{cplx}}
\newcommand\supp{\f{supp}}
\newcommand\f[1]{\mathit{#1}}
\newcommand\atoms{\ensuremath{\mathbb{A}}\xspace}
\newcommand\dact[1]{}
\newcommand\atomsi[1]{\atoms^{\hspace{-1.7pt}{#1}}}
\newcommand\atomsj[1]{\atoms^{\hspace{-1pt}{#1}}}
\newcommand\powerset{\f{pset}}
\newcommand\finte[1]{{\langle #1 \rangle}}
\newcommand\vect[1]{\overline{#1}}
\newcommand\act[0]{{\cdot}}
\newcommand{\Defiff}
 {\mathrel{\ \ \stackrel{\scriptstyle \mathrm{def}}{\Leftrightarrow}\ \ }}
\newcommand{\defeq}
  {\stackrel{\mathrm{def}}{\,=\,}}
\newcommand\fix{\f{fix}}
\newcounter{jamieitemcounter}
\newenvironment{thrm}{\begin{thm}}{\end{thm}}
\newenvironment{lemm}{\begin{lem}}{\end{lem}}
\newenvironment{corr}{\begin{cor}}{\end{cor}}
\newenvironment{defn}{\begin{defi}}{\end{defi}}
\newenvironment{nttn}{\begin{nota}}{\end{nota}}
\newenvironment{xmpl}{\begin{exa}}{\end{exa}}
\newenvironment{rmrk}{\begin{rem}}{\end{rem}}
\newcommand\sm{{\mapsto}}
\newcommand\mathtxt[1]{ \ensuremath{\mathrm{#1}} }
\newcommand\rulefont[1]{\ensuremath{{\mathrm{\bf (#1)}}}}
\newcommand\Forall[1]{\forall #1.}
\newcommand\Exists[1]{\exists #1.}
\def\at{\text{@}}
\newcommand\cred{\theory{Pred}}
\newcommand\lset[1]{\theory{Set}^{#1}}
  \newcommand{\fdsy@scale}{1.0}
  \newcommand\fdsy@mweight@normal{Book}
  \newcommand\fdsy@mweight@small{Book}
  \newcommand\fdsy@bweight@normal{Medium}
  \newcommand\fdsy@bweight@small{Medium}
  \DeclareFontFamily{U}{FdSymbolA}{}
  \DeclareSymbolFont{fdsymbols}{U}{FdSymbolA}{m}{n}%
  \DeclareFontShape{U}{FdSymbolA}{m}{n}{
      <-7.1> s * [\fdsy@scale] FdSymbolA-\fdsy@mweight@small
      <7.1-> s * [\fdsy@scale] FdSymbolA-\fdsy@mweight@normal
  }{}
  \DeclareFontShape{U}{FdSymbolA}{b}{n}{
      <-7.1> s * [\fdsy@scale] FdSymbolA-\fdsy@bweight@small
      <7.1-> s * [\fdsy@scale] FdSymbolA-\fdsy@bweight@normal
  }{}
  \DeclareMathSymbol{\aleph}{\mathord}{fdsymbols}{"C7}
  \DeclareMathSymbol{\beth}{\mathord}{fdsymbols}{"C8}
  \DeclareMathSymbol{\gimel}{\mathord}{fdsymbols}{"C9}
  \DeclareMathSymbol{\daleth}{\mathord}{fdsymbols}{"CA}
\begin{document}

\title[Stratified Sets is confluent and normalising]{The language of Stratified Sets is confluent and strongly normalising} 

\author[M.~Gabbay]{Murdoch J. Gabbay}
\address{Heriot-Watt University, Scotland, UK}
\urladdr{www.gabbay.org.uk}
\thanks{Thanks to the editor and to the anonymous referees}

\begin{abstract}
We study the properties of the language of Stratified Sets (first-order logic with $\in$ and a stratification condition) as used in TST, TZT, and (with stratifiability instead of stratification) in Quine's NF.
We find that the syntax forms a nominal algebra for substitution and that stratification and stratifiability imply confluence and strong normalisation under rewrites corresponding naturally to $\beta$-conversion.
\end{abstract}
\keywords{Stratified syntax, typed set theory, Quine's New Foundations, nominal rewriting, nominal algebra}
\subjclass{D.3.1: Syntax; 
           F.3.2: Operational Semantics;
           F.4.1: Set theory}

\maketitle

\section{Introduction}

\subsection{About Stratified Sets}

Consider Russell's paradox, that if $s=\{a\mid a\not\in a\}$ then $s\in s$ if and only if $s\not\in s$.
One way to avoid the term $s$ is to restrict to the language of \emph{Stratified Sets}.
This is first-order logic with a binary relation $t\tin s$ whose intuition is `$t$ is an element of $s$' and:
\begin{itemize}
\item
Variable symbols $a$ (called \emph{atoms} in this paper) are assigned levels, which are typically integers or natural numbers.
\item
We impose a \emph{stratification} typing condition that we may only form $t\tin s$ if the level of $s$ is one plus the level of $t$.
$\level(s)=\level(t)\plus 1$.
\end{itemize}
See Definition~\ref{defn.levels} for full details.
Then $s=\{a\mid a\not\in a\}$ 
cannot be stratified, since whatever level we assign to $a$ in $a\in a$ we cannot make $\level(a)=\level(a)\plus 1$.

Stratified Sets are one of a family of syntaxes designed to exclude Russell's paradox:
\begin{itemize*}
\item
The language of ZF set theory restricts sets comprehension to bounded comprehension $\{a\in X\mid \phi\}$.
\item
Type Theories (such as Higher-Order Logic) impose more or less elaborate type systems.
The canonical example of this is \emph{simple types} $\tau ::= \iota \mid \tau\to\tau$.
\item
Stratified Sets stratifies terms as described. 
\end{itemize*} 
One feature of Stratified Sets is that we can write a term representing the universal set:
$$
\mathbf{univ}=\{a\mid \top\}
$$ 
is easily stratified by giving $a$ any level we like.
Likewise we can write definitions such as `the number 2' to be `the set of all two-element sets':
$$
\mathbf 2=\{a\mid \Exists{b,c}(a=\{b,c\}\land b\neq c)\} 
$$
(Here we freely use syntactic sugar for readability; this can all be made fully formal.)

This feels liberating: 
we have the pleasure of full unbounded sets comprehension\footnote{It still needs to be stratified, of course, but by Russell's paradox we must expect our party to be spoiled. Our choices are only: how, and where?} 
and we have the pleasure of more sweeping types than are possible in the usual type theories such as Higher-Order Logic and its elaborations.\footnote{Two attitudes are possible with types: embrace and enrich them, which leads us in the direction of (for instance) dependent types, or minimise type structure.  Stratification minimises types all the way down to just being `$i\in\mathbb Z$'.}

\subsection{What this paper does}

The published literature using Stratified Sets does not view the basic syntax from the point of view of rewriting.
On this topic, this paper makes three observations:
\begin{enumerate}
\item
The stratification condition implies that the syntax is confluent and strongly normalising under the natural rewrite
$$
t\in \{a\mid \phi\}\quad \to\quad \phi[a\ssm t] .
$$
We can write: 
\begin{quote}
\emph{Stratification}\ \ $\limp$\ \ \emph{confluence and strong normalisation}.
\end{quote}
Similarly for stratifiability.
See Theorems~\ref{thrm.tst.strong} and~\ref{thrm.nf}.
\item
The syntax of normal forms 
becomes an algebra for substitution in a sense that will be made formal using nominal algebra.

See Theorem~\ref{thrm.capasn}; in fact the proof of Theorem~\ref{thrm.tst.strong} uses this.
\item
Our proof is constructed using nominal techniques.
The proofs in this paper should be fairly directly implementable in a nominal theorem-prover, such as Nominal Isabelle \cite{urban:nomrti}.
\end{enumerate}
In some senses, this paper is deliberately conventional, even simple: we write down a syntax and a rewrite relation and prove some nice properties.
But the simplicity is deceptive:
\begin{itemize*}
\item
TST, TZT, and NF as usually presented do not include sets comprehension in their syntax, if that syntax is even made fully formal.
So just noting that there are rewrite relations here that might be useful to look at, seems to be a new observation.
\item
The proofs are not trivial.
It is easy to give a handwaving argument (such as that given in the first half of Remark~\ref{rmrk.lambda} below) but it is surprisingly difficult to give a rigorous proof with all details.  
\end{itemize*}
More on this in Section~\ref{sect.conclusions}.

We use nominal techniques (see the material in Section~\ref{sect.basic.defs}) to manage the $\alpha$-binding in the syntax for universal quantification and sets comprehension. 
If the reader is unfamiliar with nominal techniques then they can just ignore this aspect: wherever we see reference to a nominal theorem, we can replace it with `by $\alpha$-conversion' or with `it is a fact of syntax that'.
The result should then be close to the kind of argument that might normally pass 
without comment or challenge.

\subsection{Some remarks}

\begin{rmrk}
\label{rmrk.lanss}
`The language of Stratified Sets' is a description specific to this paper.
In the literature, this syntax is unnamed and presented along with the theory we express using it: 
\begin{enumerate}
\item
TST (which stands for Typed Set Theory) is typically taken to be first-order logic with $\in$ and variables stratified as $\mathbb N=\{0,1,2,\dots\}$, along with reasonable axioms for first-order logic and extensional sets equality. 
\item
TZT is typically taken to be as the syntax and axioms of TST but with variables stratified as $\mathbb Z=\{0,1,\minus 1,2,\minus 2,\dots\}$.
\item
Quine's New Foundations (NF) uses the language of first-order logic with $\in$, and reasonable axioms, and a \deffont{stratifiability condition} that variables \emph{could} be stratified.

So in TST we would have to write (say) $a^1\in b^2$ (choosing a level 1 variable symbol $a$ and a level 2 variable symbol $b$), whereas in NF we could write just $a\in b$ and say ``we \emph{could} assign $a$ level 1 and $b$ level 2''.
\end{enumerate}
\end{rmrk}

\begin{rmrk}
\label{rmrk.ambiguity}
There is a slight ambiguity when we talk about stratification whether we insist that the syntax come delivered with an assignment of levels to all terms, or whether we insist on the weaker condition that an assignment \emph{could} be made, but this assignment need not be a structural part of the formula or term.
This distinguishes the languages of TST and TZT from that of NF: TST and TZT insist on stratification, and NF insists on stratifiability.

Our results will be agnostic in this choice (see for instance Theorems~\ref{thrm.tst.strong} and~\ref{thrm.nf}).
So when we write \emph{Stratified Sets} we could just as well write \emph{Stratifiable Sets} and everything would still work with only minor bookkeeping changes.
\end{rmrk}

\begin{rmrk}
\label{rmrk.protagonist}
The reader with a background in TST, TZT, and NF should note that this is not a paper about logical theories: it is a paper about their \emph{syntax}.
This is why we talk about `Stratified Sets' in this paper, and not e.g. `Typed Set Theories'.
Our protagonist is a language, not a logic. 
\end{rmrk}

\begin{rmrk}[Some words on terminology]
Some authors expand TST as `Theory of Simple Types'.
I think this terminology invites confusion with Simple Type Theory, so I prefer the alternative `Typed Set Theory'. 
I would also like to write that TZT stands for `Typed Zet Theory', but really TZT just stands for itself.
\end{rmrk}

\begin{rmrk}[References]
For the reader interested in the logical motivations for these syntaxes we provide references: 
\begin{itemize*}
\item
A historical account of Russell's paradox is in \cite{griffin:prehrp}. 
\item
For ZF set theory, see e.g. \cite{jech:sett}.
\item
Excellent discussions of TST, TZT, and NF are in \cite{forster:settus} and \cite{holmes:elestu}, and a clear summary with a brief but well-chosen bibliography is in \cite{forster:quinfs}.
\end{itemize*}
\end{rmrk}

\begin{rmrk}[Connection to the $\lambda$-calculus]
\label{rmrk.lambda}
One way to see that something \emph{like} this paper \emph{should} work, fingers crossed, is by an analogy:
\begin{itemize*}
\item
The rewrite $t\in \{a\mid \phi\} \to \phi[a\ssm t]$ can be rewritten as $(\lambda a.\phi)t\to\phi[a\ssm t]$.
\item
Extensionality is $s=\{b\mid b\in s\}$, and we can rewrite this as $s=\lambda b.(sb)$.
\end{itemize*}
These are of course familiar as $\beta$-reduction and $\eta$-expansion.
The proofs need to be checked 
but the analogy above invites an analysis of the kind that we will now carry out.

And indeed this has been done, though not for stratified sets.
In \cite{keller:hersst} (many thanks to an anonymous referee for bringing this to my attention) a development analogous to what is done in this paper for stratified syntax using nominal techniques, is carried out for the simply-typed lambda-calculus using de Bruijn indexes.
Definitions and results bear a very nice comparison: for instance, Lemma~6 of \cite{keller:hersst} corresponds to Proposition~\ref{prop.interp.sub}.\footnote{There are also differences:  
In \cite{keller:hersst} the authors implement their proof in Agda, whereas implementation of the proofs in this paper for future work.  
On the other hand, proofs in this paper are given more-or-less in full, whereas in \cite{keller:hersst} the authors elide technical details. 
This paper proves confluence and strong normalisation, which is strictly more than \cite{keller:hersst} which considers only the existence of a reduction path to normal forms.
}

A technical device in \cite{keller:hersst}, which goes back to a quite technical development in \cite{watkins:conlfp}, is to work directly with a datatype of normal forms and substitution on them; this is just like the \emph{internal syntax} and its \emph{sigma-action} that we will see in this paper.

For me the motivation for setting things up in this way is partly practical and partly abstract: internal syntax with its sigma-action turns out to be a nominal sigma-algebra (Theorem~\ref{thrm.capasn}) and this ties in with a literature on advanced nominal models of logic and computation \cite{gabbay:semooc,gabbay:repdul,gabbay:capasn-jv}.  
This paper was originally conceived as a prelude to building advanced nominal models of stratified and stratifiable syntaxes and type theories, though it has acquired independent interest.

It is therefore interesting to see analogous design choices appearing independently, motivated by apparently purely implementational concerns: what is good for the abstract mathematics also seems to be good for the proof-engineering.
\end{rmrk}

\section{Background on nominal techniques}
\label{sect.basic.defs}

Intuitively, a nominal set is ``a set $\ns X$ whose elements $x\in\ns X$ may `contain' finitely many names $a,b,c\in\atoms$''.
We may call names \emph{atoms}.
The notion of `contain' used here is not the obvious notion of `is a set element of': formally, we say that $x$ has \emph{finite support} (Definition~\ref{defn.supp}).

Nominal sets are formally defined in Subsection~\ref{subsect.basic.definitions}.
Examples are in Subsection~\ref{subsect.pow}.
The reader might prefer to read this section only briefly at first, and then use it as a reference for the later sections where these underlying ideas get applied. 
More detailed expositions are also in \cite{gabbay:newaas-jv,gabbay:fountl,gabbay:pernl-jv,pitts:nomsns}.

In the context of the broader literature, the message of this section is as follows:
\begin{itemize*}
\item
The reader with a category-theory background can read this section as exploring the category of nominal sets, or equivalently the Schanuel topos (more on this in \cite[Section III.9]{MLM:sgl},\ \cite[A.21, page 79]{johnstone:skeett},\ or \cite[Theorem~9.14]{gabbay:fountl}).
\item
The reader with a sets background can read this section as stating that we use Fraenkel-Mostowski set theory (\deffont{FM sets}).

A discussion of this sets foundation, tailored to nominal techniques, can be found in \cite[Section~10]{gabbay:fountl}.
FM sets add \emph{urelemente} or \emph{atoms} to the sets universe.
\item
The reader uninterested in foundations can note that previous work \cite{gabbay:newaas-jv,gabbay:fountl,gabbay:pernl-jv} has shown that just assuming names as primitive entities in Definition~\ref{defn.atoms} 
yields a remarkable clutch of definitions and results, including Theorem~\ref{thrm.supp}, Corollary~\ref{corr.stuff}, and Theorem~\ref{thrm.equivar}.
\end{itemize*}

\subsection{Basic definitions}
\label{subsect.basic.definitions}

\subsubsection{Atoms and permutations}
\label{subsect.cardinalities.and.atoms}

\begin{defn}
\label{defn.atoms}
\begin{itemize*}
\item
For each $i{\in}\integer$ fix a disjoint countably infinite set $\atomsi{i}$ of \deffont{atoms}.\footnote{These will serve as variable symbols in Definition~\ref{defn.cred}.} 
\item
Write $\atoms=\bigcup_{i{\in}\integer}\atomsi{i}$.
\item
If $a{\in}\atoms$ (so $a$ is an atom) write $\level(a)$ for the unique number such that $a{\in}\atomsi{\level(a)}$.
\item
We use a \deffont{permutative convention} that $a,b,c,\ldots$ range over \emph{distinct} atoms.

If we do not wish to use the permutative convention then we will refer to the atom using $n$ (see for instance \rulefont{\sigma\credelt\atm} of Figure~\ref{fig.sub}).
\end{itemize*}
\end{defn}

\subsubsection{Permutation actions on sets}

\begin{defn}
\label{defn.pi}
\label{defn.pol}
Suppose $\pi:\atoms\cong\atoms$ is a bijection on atoms.
\begin{enumerate}
\item
If $\f{nontriv}(\pi)=\{a\mid \pi(a)\neq a\}$ is finite then we call $\pi$ \deffont{finite}.
\item
If $\pi(a)\in\atomsi{i}\liff a\in\atomsi{i}$ then call $\pi$ \deffont{sort-respecting}.
\item
A \deffont{permutation} $\pi$ is a finite sort-respecting bijection on atoms.

Henceforth $\pi$ will range over permutations.
\end{enumerate}
\end{defn}

We will use the following notations in the rest of this paper:
\begin{nttn}
\begin{enumerate*}
\item
Write $\id$ for the \deffont{identity} permutation such that $\id(a)=a$ for all $a$.
\item
Write $\pi'\circ\pi$ for composition, so that $(\pi'\circ\pi)(a)=\pi'(\pi(a))$.
\item
If $i{\in}\integer$ and $a,b{\in}\atomsi{i}$ then write $(a\;b)$ for the \deffont{swapping} (terminology from \cite{gabbay:newaas-jv}) mapping $a$ to $b$,\ $b$ to $a$,\ and all other $c$ to themselves, and take $(a\;a)=\id$.
\item
Write $\pi^\mone$ for the inverse of $\pi$, so that $\pi^\mone\circ\pi=\id=\pi\circ\pi^\mone$.
\end{enumerate*}
\end{nttn}

\subsubsection{Sets with a permutation action}

\begin{nttn}
\label{nttn.fix}
If $A\subseteq\atoms$ write 
$$
\fix(A)=\{\pi\mid \Forall{a{\in} A}\pi(a)=a\}.
$$
\end{nttn}

\begin{defn}
\label{defn.set.with.perm}
A \deffont{set with a permutation action} $\ns X$ is a pair $(|\ns X|,\act)$ of an \deffont{underlying set} $|\ns X|$ and a \deffont{permutation action} written $\pi\act x$ 
 which is a group action on $|\ns X|$, so that $\id\act x=x$ and $\pi\act(\pi'\act x)=(\pi\circ\pi')\act x$ for all $x\in\ns X$ and permutations $\pi$ and $\pi'$.
\end{defn}

\begin{defn}
\begin{enumerate*}
\item
Say that $A\subseteq\atoms$ \deffont{supports} $x\in\ns X$ when $\Forall{\pi}\pi\in\fix(A)\limp \pi\act x=x$.
\item
If a finite $A\subseteq\atoms$ supporting $x$ exists, call $x$ \deffont{finitely supported} (by $A$) and say that $x$ has \deffont{finite support}.
\end{enumerate*}
\end{defn}

\begin{nttn}
If $\ns X$ is a set with a permutation action then we may write 
\begin{itemize*}
\item
$x\in\ns X$ as shorthand for $x\in|\ns X|$, and 
\item
$X\subseteq\ns X$ as shorthand for $X\subseteq|\ns X|$.
\end{itemize*}
\end{nttn}

\subsubsection{Nominal sets}

\begin{frametxt}
\begin{defn}
\label{defn.nominal.set}
Call a set with a permutation action $\ns X$ a \deffont{nominal set} when every $x\in\ns X$ has finite support.
$\ns X$, $\ns Y$, $\ns Z$ will range over nominal sets.
\end{defn}
\end{frametxt}

\begin{defn}
\label{defn.equivariant}
Call a function $f\in\ns X\Func\ns Y$ \deffont{equivariant} when $\pi\act (f(x))=f(\pi\act x)$ for all permutations $\pi$ and $x\in\ns X$.
In this case write $f:\ns X\Func\ns Y$.

The category of nominal sets and equivariant functions between them is usually called the category of \emph{nominal sets}.
\end{defn}

\begin{defn}
\label{defn.supp}
Suppose $\ns X$ is a nominal set and $x\in\ns X$.
Define the \deffont{support} of $x$ by
$$
\supp(x)=\bigcap\{A{\subseteq}\atoms \mid A\text{ is finite and supports }x\} .
$$
\end{defn}

\begin{nttn}
\label{nttn.fresh}
\begin{itemize*}
\item
Write $a\#x$ as shorthand for $a\not\in\supp(x)$ and read this as $a$ is \deffont{fresh for} $x$.
\item
If $T{\subseteq}\atoms$ write $T\#x$ as shorthand for $\Forall{a{\in}T}a\#x$.
\item
Given atoms $a_1,\dots,a_n$ and elements $x_1,\dots,x_m$ write $a_1,\dots,a_n\#x_1,\dots,x_m$ as shorthand for $\Forall{1{\leq}j{\leq}m}\{a_1,\dots,a_n\}\#x_j$.
That is: $a_i\#x_j$ for every $i$ and $j$.
\end{itemize*}
\end{nttn}

\begin{thrm}
\label{thrm.supp}
Suppose $\ns X$ is a nominal set and $x\in\ns X$.
Then $\supp(x)$ is the unique least finite set of atoms that supports $x$.
\end{thrm}
\begin{proof}
See \cite[Theorem~2.21(1)]{gabbay:fountl}.
\end{proof}

\begin{corr}
\label{corr.stuff}
\begin{enumerate*}
\item\label{stuff.fixsupp.fixelt}
If $\pi(a)=a$ for all $a\in\supp(x)$ then $\pi\act x=x$.
Equivalently:
\begin{enumerate*}
\item
If $\pi{\in}\fix(\supp(x))$ then $\pi\act x=x$.
\item 
If $\Forall{a{\in}\atoms}(\pi(a){\neq}a\limp a\#x)$ then $\pi\act x=x$ (see Notation~\ref{nttn.fresh}).
\end{enumerate*}
\item
If $\pi(a)=\pi'(a)$ for every $a{\in}\supp(x)$ then $\pi\act x=\pi'\act x$.
\item
$a\#x$ if and only if $\Exists{b}(b\#x\land (b\;a)\act x=x)$.
\end{enumerate*}
\end{corr}
\begin{proof}
By routine calculations from the definitions and from Theorem~\ref{thrm.supp} (see also \cite[Theorem~2.21(2)]{gabbay:fountl}).
\end{proof}

\subsection{Examples}
\label{subsect.pow}

Suppose $\ns X$ and $\ns Y$ are nominal sets.
We consider some examples, some of which will be useful later.

\subsubsection{Atoms}
\label{subsect.xmpl.atoms}

$\atoms$ is a nominal set with the \emph{natural permutation action} $\pi\act a=\pi(a)$.

\subsubsection{Cartesian product}
\label{subsect.cartesian.product}

$\ns X\times\ns Y$ is a nominal set with underlying set $\{(x,y)\mid x\in\ns X, y\in\ns Y\}$ and the \emph{pointwise} action $\pi\act(x,y)=(\pi\act x,\pi\act y)$.

It is routine to check that $\supp((x,y))=\supp(x){\cup}\supp(y)$.

\subsubsection{Full function space}
\label{subsect.full.function.space}

$\ns X{\to}\ns Y$ is a set with a permutation action with underlying set all functions from $|\ns X|$ to $|\ns Y|$, and the \deffont{conjugation} permutation action 
$$
(\pi\act f)(x)=\pi\act(f(\pi^\mone\act x)) .
$$

\subsubsection{Finite-supported function space}

$\ns X\Func\ns Y$ is a nominal set with underlying set the functions from $|\ns X|$ to $|\ns Y|$ with finite support under the conjugation action, and the conjugation permutation action.

\subsubsection{Full powerset}
\label{subsect.full.powerset}

\begin{defn}
\label{defn.pointwise.action}
Suppose $\ns Z$ is a set with a permutation action.
Give subsets $Z\subseteq\ns Z$ the \deffont{pointwise} permutation action
$$
\pi\act Z=\{\pi\act z\mid z\in Z\} .
$$
\end{defn}

Then $\powerset(\ns Z)$ (the full powerset of $\ns Z$) is a set with a permutation action with 
\begin{itemize*}
\item
underlying set $\{Z\mid Z\subseteq\ns Z\}$ (the set of all subsets of $|\ns Z|$), and 
\item
the pointwise action $\pi\act Z=\{\pi\act z\mid z\in Z\}$.
\end{itemize*}

A particularly useful instance of the pointwise action is for sets of atoms.
As discussed in Subsection~\ref{subsect.xmpl.atoms} above, if $a\in\atoms$ then $\pi\act a=\pi(a)$.
Thus if $A\subseteq\atoms$ then 
$$
\pi\act A\quad\text{means}\quad \{\pi(a)\mid a\in A\} .
$$

\begin{lemm}
Even if $\ns Z$ is a nominal set, $\powerset(\ns Z)$ need not be a nominal set.
\end{lemm}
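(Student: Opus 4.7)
The plan is to exhibit a specific counterexample. Take $\ns Z = \atoms$, which is itself a nominal set (with the natural action from Subsection~\ref{subsect.xmpl.atoms}), and find a subset $A \subseteq \atoms$ that fails to have finite support under the pointwise action of Definition~\ref{defn.pointwise.action}.

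Concretely, fix some $i \in \integer$ and enumerate $\atomsi{i} = \{a_0, a_1, a_2, \dots\}$ (possible because $\atomsi{i}$ is countably infinite). Let $A = \{a_0, a_2, a_4, \dots\}$, so $A$ is an infinite subset of $\atomsi{i}$ whose complement $\atomsi{i} \setminus A = \{a_1, a_3, a_5, \dots\}$ inside $\atomsi{i}$ is also infinite. Clearly $A \in \powerset(\atoms)$.

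I claim $A$ has no finite support. Suppose for contradiction that some finite $B \subseteq \atoms$ supports $A$. Since $A$ is infinite and $B$ is finite, we can pick $a \in A$ with $a \notin B$; since $\atomsi{i} \setminus A$ is also infinite, we can pick $b \in (\atomsi{i} \setminus A) \setminus B$. Then $(a\;b)$ is a well-defined permutation in the sense of Definition~\ref{defn.pol} (it is finite, and sort-respecting because $a, b \in \atomsi{i}$), and $(a\;b) \in \fix(B)$. By the supporting property we would have $(a\;b) \act A = A$; but $a \in A$ forces $b = (a\;b)(a) \in (a\;b) \act A = A$, contradicting $b \notin A$.

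Thus $A \in \powerset(\atoms)$ but $A$ has no finite support, so $\powerset(\atoms)$ is a set with a permutation action that is not a nominal set. This gives the desired counterexample. The argument is short and I do not expect any real obstacle; the only subtlety is remembering that permutations are required to be sort-respecting, which is why $a$ and $b$ must be chosen in the same $\atomsi{i}$.
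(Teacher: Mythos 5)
Your proposal is correct and follows essentially the same route as the paper, which also takes $\ns Z=\atoms$ and the ``comb'' set $\{a_0,a_2,a_4,\dots\}$, merely citing a reference for the fact that it has no finite support. Your version is in fact slightly more careful: by enumerating a single sort $\atomsi{i}$ (so that the swapping $(a\;b)$ is sort-respecting and the comb genuinely splits a sort into two infinite pieces) and spelling out the swapping argument, you make the counterexample self-contained.
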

\begin{proof}
Take $\ns Z=\atoms$ which we enumerate as $\{a_0,a_1,a_2,\dots\}$ and we take $Z\in\powerset(\ns Z)$ to be equal to $\f{comb}$ defined by 
$$
\f{comb}=\{a_0,a_2,a_4,\dots\} .
$$
This does not have finite support (see also \cite[Remark~2.18]{gabbay:fountl}). 
\end{proof}

\subsubsection{Finite powerset}
\label{subsect.finite.pow}

For this subsection, fix a nominal set $\ns X$.
\begin{defn}
\label{defn.finite.powerset}
Write $\f{FinPow}(\ns X)$ for the nominal set with
\begin{itemize*}
\item
underlying set the set of all finite subsets of $\ns X$,
\item
with the pointwise action from Definition~\ref{defn.pointwise.action}. 
\end{itemize*}
\end{defn}

\begin{nttn}
We might write $X\finsubseteq\ns X$ for $X{\in}\f{FinPow}(\ns X)$. 
\end{nttn}

\begin{lemm}
\label{lemm.finpow.support}
If $X\finsubseteq\ns X$ then:
\begin{enumerate*}
\item
$\bigcup\{\supp(x)\mid x{\in}X\}$ is finite. 
\item
\label{item.finpow.support.elts}
$\bigcup\{\supp(x)\mid x{\in}X\}=\supp(X)$. 
\item
\label{item.finpow.support.subset}
$x\in X$ implies $\supp(x)\subseteq\supp(X)$. 

Rewriting this using Notation~\ref{nttn.fresh}: if $X$ is finite and $x{\in}X$ then $a\#X$ implies $a\#x$.\footnote{This is not necessarily true if $X$ is infinite.  For instance if we take $\ns X=\atoms=X$ then the reader can verify that $a\#X$ for every $a$, but $a\#a$ does not hold for any $a\in \atoms$.  This is a feature of nominal techniques, not a bug; but for the case of finite sets, things are simpler.} 
\end{enumerate*}
\end{lemm}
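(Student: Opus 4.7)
The plan is to prove the three clauses in order, noting that clause~(3) is precisely the nontrivial inclusion needed for clause~(2). Clause~(1) is immediate: $X$ is finite by hypothesis and each $\supp(x)$ is finite because $\ns X$ is a nominal set, so a finite union of finite sets remains finite.

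For clause~(2), write $A=\bigcup\{\supp(x)\mid x\in X\}$. The easy direction $\supp(X)\subseteq A$ comes by verifying that $A$ supports $X$: take any $\pi\in\fix(A)$; then for each $x\in X$ we have $\pi\in\fix(\supp(x))$, so $\pi\act x=x$ by Corollary~\ref{corr.stuff}(1), whence $\pi\act X=X$ under the pointwise action from Definition~\ref{defn.pointwise.action}. Since $A$ is finite by clause~(1), Theorem~\ref{thrm.supp} yields $\supp(X)\subseteq A$.

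The reverse inclusion $A\subseteq\supp(X)$ is exactly clause~(3), which I would prove by contradiction. Suppose $x\in X$ and $a\in\supp(x)$ but $a\#X$. Since $\atomsi{\level(a)}$ is countably infinite while $\{a\}\cup\supp(X)\cup A$ is finite, pick $b\in\atomsi{\level(a)}$ with $b\neq a$, $b\notin\supp(X)$, and $b\notin A$. Then $(b\;a)$ is a sort-respecting permutation fixing $\supp(X)$ pointwise, so by Corollary~\ref{corr.stuff}(1) we have $(b\;a)\act X=X$, and hence $(b\;a)\act x\in X$. By the standard equivariance of support $\supp(\pi\act x)=\pi\act\supp(x)$ (a direct consequence of Theorem~\ref{thrm.supp}, since $\pi\act A$ supports $\pi\act x$ whenever $A$ supports $x$), we get $b=(b\;a)(a)\in(b\;a)\act\supp(x)=\supp((b\;a)\act x)\subseteq A$, contradicting the choice of $b$.

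The main obstacle is this final step: the contradiction requires simultaneously picking $b$ fresh for $X$, for the union $A$, and for $\supp(X)$, while staying inside the single sort $\atomsi{\level(a)}$ so that $(b\;a)$ is a valid sort-respecting permutation; invoking equivariance of support then converts the assumed freshness of $a$ for $X$ into a freshness failure for $b$, closing the contradiction.
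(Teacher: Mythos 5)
Your proposal is correct. It follows the same overall skeleton as the paper --- finiteness of the union, then the two inclusions of part~(2) --- but the decomposition and the key tool differ. The paper proves part~(2) directly, citing part~3 of Corollary~\ref{corr.stuff} (the characterisation of freshness via a fresh swapping) and deferring the calculation to the literature, and then obtains part~(3) as a consequence of parts~(1) and~(2); you invert this, making part~(3) the engine and deriving part~(2) from it, and you prove the hard inclusion by a self-contained contradiction argument: pick $b$ in the same sort as $a$, fresh for $\{a\}\cup\supp(X)\cup\bigcup_{x\in X}\supp(x)$, use Corollary~\ref{corr.stuff}(1) to get $(b\ a)\act X=X$, and then equivariance of support ($\supp(\pi\act x)=\pi\act\supp(x)$) to force $b$ into the union, a contradiction. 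Note that equivariance of support is only stated later in the paper (Proposition~\ref{prop.pi.supp}), but your parenthetical justification from Theorem~\ref{thrm.supp} is sound and avoids any circularity, so your argument is legitimately self-contained where the paper leans on a citation; the paper's route via Corollary~\ref{corr.stuff}(3) is marginally shorter but the mathematical content (swapping a suitably fresh atom of the right sort) is the same.
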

\begin{proof}
The first part is immediate since by assumption there is some finite $A{\subseteq}\atoms$ that bounds $\supp(x)$ for all $x\in X$.
The second part follows by an easy calculation using part~3 of Corollary~\ref{corr.stuff}; full details are in \cite[Theorem~2.29]{gabbay:fountl}, of which Lemma~\ref{lemm.finpow.support} is a special case. 
Part~3 follows from the first and second parts.
\end{proof}

\subsubsection{Atoms-abstraction}
\label{subsect.atomsabs}

Atoms-abstraction was the first real application of nominal techniques; it was used to build inductive datatypes of syntax-with-binding.
Nominal atoms-abstraction captures the essence of $\alpha$-binding.
In this paper we use it to model the binding in universal quantification and sets comprehension (see Definition~\ref{defn.lsets}).
The maths here goes back to \cite{gabbay:thesis,gabbay:newaas-jv}; we give references to proofs in a more recent presentation \cite{gabbay:fountl}.

Assume a nominal set $\ns X$ and an $i{\in}\integer$.

\begin{defn}
\label{defn.abs}
Let the \deffont{atoms-abstraction} set $[\atomsi{i}]\ns X$ have
\begin{itemize*}
\item
Underlying set $\{[a]x\mid a{\in}\atomsi{i},\ x\in\ns X\}$ where $[a]x=\{(\pi(a),\pi\act x)\mid \pi\in\fix(\supp(x){\setminus}\{a\})\}$.
\item
Permutation action $\pi\act[a]x=[\pi\act a]\pi\act x$.
\end{itemize*}
\end{defn}

\begin{lemm}
\label{lemm.supp.abs}
If $x{\in}\ns X$ and $a{\in}\atomsi{i}$ then $\supp([a]x)=\supp(x){\setminus}\{a\}$.
In particular $a\#[a]x$ (Notation~\ref{nttn.fresh}).
\end{lemm}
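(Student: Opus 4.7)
The plan is to prove the two inclusions $\supp([a]x) \subseteq \supp(x){\setminus}\{a\}$ and $\supp(x){\setminus}\{a\} \subseteq \supp([a]x)$ separately, and then deduce $a\#[a]x$ as an immediate corollary.

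For the forward inclusion, I would show that $\supp(x){\setminus}\{a\}$ is a finite set of atoms supporting $[a]x$; the inclusion then follows from Theorem~\ref{thrm.supp}, since $\supp([a]x)$ is the \emph{least} such set. So suppose $\pi \in \fix(\supp(x){\setminus}\{a\})$; I must show $\pi\act [a]x = [\pi(a)](\pi\act x) = [a]x$. I split on whether $\pi(a)=a$. If $\pi(a)=a$ then $\pi\in\fix(\supp(x))$, so by Corollary~\ref{corr.stuff}(\ref{stuff.fixsupp.fixelt}) we have $\pi\act x=x$ and we are done. Otherwise $\pi(a)\neq a$. Here I first observe that $\pi(a)\#x$: if $\pi(a)$ were in $\supp(x)$, it would lie in $\supp(x){\setminus}\{a\}$ (as $\pi(a)\neq a$), and then $\pi^\mone$ would fix it, giving $a = \pi^\mone(\pi(a)) = \pi(a)$, a contradiction. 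Next, I verify that $(\pi(a)\;a)\circ\pi$ lies in $\fix(\supp(x))$: it fixes $a$ (sending $a\mapsto\pi(a)\mapsto a$) and it fixes every $b \in \supp(x){\setminus}\{a\}$ (since $\pi(b)=b$ and $b \notin \{a,\pi(a)\}$). Therefore $(\pi(a)\;a)\act\pi\act x = x$, i.e. $\pi\act x = (\pi(a)\;a)\act x$. Together with $\pi(a)\#x$, this is exactly the standard criterion for $[\pi(a)](\pi\act x) = [a]x$, which one reads off directly from the set-of-pairs definition of $[a]x$ in Definition~\ref{defn.abs}.

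For the reverse inclusion, I suppose $b \in \supp(x){\setminus}\{a\}$ and show $b \in \supp([a]x)$ by contradiction. If $b\#[a]x$, then by Corollary~\ref{corr.stuff}(3) there is some $c\#[a]x$ with $(b\;c)\act[a]x = [a]x$. I strengthen the choice of $c$ using equivariance and finiteness of support: pick $c$ in the same $\atomsj{j}$ as $b$ and fresh for $a$, $x$, and $[a]x$. Then $c\neq a$ and $b\neq a$, so $(b\;c)(a)=a$, and hence $(b\;c)\act[a]x = [a]((b\;c)\act x)$. From the definition of $[a](-)$, equality $[a]x = [a]((b\;c)\act x)$ of abstractions with the same binder forces $(b\;c)\act x = x$. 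Combined with $c\#x$, Corollary~\ref{corr.stuff}(3) then yields $b\#x$, contradicting $b \in \supp(x)$.

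The final clause $a\#[a]x$ is just the observation that $a \notin \supp(x){\setminus}\{a\} = \supp([a]x)$.

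The main obstacle is the $\pi(a)\neq a$ case of the first inclusion: it requires the slightly delicate swap-composition argument to reduce $\pi\act x = (\pi(a)\;a)\act x$ to an instance of $\fix(\supp(x))$, and this step is where sort-respecting of permutations matters (one must check that $(\pi(a)\;a)$ is a valid permutation, which holds because $\pi$ is sort-respecting so $\pi(a)\in\atomsi{i}$). The second inclusion is conceptually easier but requires a careful choice of the auxiliary atom $c$, particularly to ensure $(b\;c)$ leaves $a$ untouched.
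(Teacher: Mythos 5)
Your proof is correct. Note that the paper does not prove this lemma internally at all --- it simply cites \cite[Theorem~3.11]{gabbay:fountl} --- so what you have written is a correct, self-contained reconstruction of the standard argument: the forward inclusion by exhibiting $\supp(x){\setminus}\{a\}$ as a finite supporting set for $[a]x$ (with the swap-composition trick and sort-respecting check handled properly), and the reverse inclusion by choosing a sufficiently fresh $c$, using injectivity of $[a](\text{-})$ with a fixed binder, and Corollary~\ref{corr.stuff}(3); your appeal to Lemma~\ref{lemm.abs.alpha} (or the set-of-pairs definition) in the $\pi(a)\neq a$ case is exactly the right tool.
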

\begin{proof}
See \cite[Theorem~3.11]{gabbay:fountl}.
\end{proof}

\begin{lemm}
\label{lemm.abs.alpha}
Suppose $x{\in}\ns X$ and $a,b{\in}\atomsi{i}$.
Then if $b\#x$ then $[a]x=[b](b\ a)\act x$.
\end{lemm}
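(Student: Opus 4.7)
The plan is to unfold the definition of atoms-abstraction on both sides and exhibit an explicit bijection between the two sets of pairs. Recall that by Definition~\ref{defn.abs},
$$
[a]x=\{(\pi(a),\pi\act x)\mid \pi\in\fix(\supp(x)\setminus\{a\})\} .
$$

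The easy case is $a=b$: then $(b\ a)=\id$ and the result is immediate. So assume $a\neq b$ and $b\#x$. First I would establish the \emph{support bookkeeping fact}
$$
\supp((b\ a)\act x)\setminus\{b\}=\supp(x)\setminus\{a\},
$$
using equivariance of $\supp$ (so $\supp((b\ a)\act x)=(b\ a)\act\supp(x)$) together with $b\#x$, i.e.\ $b\notin\supp(x)$. One then distinguishes whether $a\in\supp(x)$ or not; in both cases the equation above falls out. Consequently the index set $\fix(\supp(x)\setminus\{a\})$ controlling both atoms-abstractions is the same.

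Next I would exhibit the bijection. Define $\Phi:\fix(\supp(x)\setminus\{a\})\to\fix(\supp(x)\setminus\{a\})$ by $\Phi(\pi)=\pi\circ(b\ a)$. This is well-defined because for any $c\in\supp(x)\setminus\{a\}$ we have $c\neq a$ and $c\neq b$ (the latter by $b\#x$), so $(b\ a)$ fixes $c$ and hence so does $\pi\circ(b\ a)$. The map is clearly an involution, hence a bijection. Moreover if $\pi'=\Phi(\pi)$ then
$$
\pi'(b)=\pi((b\ a)(b))=\pi(a)\quad\text{and}\quad \pi'\act((b\ a)\act x)=\pi\circ(b\ a)\circ(b\ a)\act x=\pi\act x,
$$
so $(\pi(a),\pi\act x)=(\pi'(b),\pi'\act((b\ a)\act x))$. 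Running this equation in both directions shows that the two generating sets of pairs coincide, which is precisely $[a]x=[b](b\ a)\act x$.

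I do not expect a serious obstacle here: the only subtlety is the support computation for $(b\ a)\act x$, which uses equivariance of $\supp$ and the hypothesis $b\#x$. Once that is in place, the bijection $\pi\mapsto\pi\circ(b\ a)$ does all of the work, and the proof is largely a one-line calculation. This matches the standard alpha-equivalence argument for nominal atoms-abstraction, and indeed this lemma is the nominal counterpart of the familiar renaming rule $[a]x=[b]((b\ a)\act x)$ whenever $b$ is fresh for $x$.
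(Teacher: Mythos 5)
Your proof is correct. Note that the paper itself does not prove this lemma at all: it simply cites \cite[Lemma~3.12]{gabbay:fountl}, so there is nothing in the paper to compare against line by line; your argument is essentially the standard one that the citation points to. Concretely, you unfold Definition~\ref{defn.abs}, use equivariance of $\supp$ (Proposition~\ref{prop.pi.supp}) plus $b\#x$ to get $\supp((b\ a)\act x)\setminus\{b\}=\supp(x)\setminus\{a\}$, and then the involution $\pi\mapsto\pi\circ(b\ a)$ on the common index set $\fix(\supp(x)\setminus\{a\})$ matches generators $(\pi(a),\pi\act x)$ with $(\pi'(b),\pi'\act((b\ a)\act x))$; the key computation $\pi'\act((b\ a)\act x)=(\pi\circ(b\ a)\circ(b\ a))\act x=\pi\act x$ is exactly the group-action law of Definition~\ref{defn.set.with.perm}. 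Two cosmetic remarks: under the paper's permutative convention (Definition~\ref{defn.atoms}) the atoms $a,b$ are already distinct, so your $a=b$ case is harmless but not needed; and you should check (as you implicitly do) that $\pi\circ(b\ a)$ is again a finite sort-respecting permutation, which holds since $a,b\in\atomsi{i}$. The payoff of your route is a self-contained proof from the paper's own Definition~\ref{defn.abs}, rather than an appeal to the external reference.
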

\begin{proof}
See \cite[Lemma~3.12]{gabbay:fountl}.
\end{proof}

\begin{defn}
Suppose $z{\in}[\atomsi{i}]\ns X$ and $b{\in}\atomsi{i}$.
Write $z\at b$ for the unique $x\in\ns X$ such that $z=[b]x$, if this exists.
\end{defn}

\begin{lemm}
\label{lemm.abs.exists}
Suppose $b{\in}\atomsi{i}$ and $z\in[\atomsi{i}]\ns X$.
Then $b\#z$ implies $z\at b\in\ns X$ is well-defined. 
\end{lemm}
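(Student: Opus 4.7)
The plan is to unpack the definition of $[\atomsi{i}]\ns X$ to get some presentation $z=[a]y$, then use the freshness hypothesis $b\#z$ together with Lemma~\ref{lemm.supp.abs} and Lemma~\ref{lemm.abs.alpha} to rename $a$ to $b$, producing the required $x$. Uniqueness is then handled separately using Corollary~\ref{corr.stuff}.

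In more detail: by Definition~\ref{defn.abs}, any $z\in[\atomsi{i}]\ns X$ can be written as $z=[a]y$ for some $a\in\atomsi{i}$ and $y\in\ns X$. If $a=b$ we are done, taking $x=y$. Otherwise, by Lemma~\ref{lemm.supp.abs} we have $\supp(z)=\supp(y)\setminus\{a\}$, so $b\#z$ together with $a\neq b$ forces $b\notin\supp(y)$, i.e.\ $b\#y$. Now Lemma~\ref{lemm.abs.alpha} gives $[a]y=[b](b\ a)\act y$, so $x=(b\ a)\act y\in\ns X$ witnesses existence.

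For uniqueness, suppose $z=[b]x=[b]x'$ for some $x,x'\in\ns X$. From the definition of $[b]x$ as a set of pairs, the pair $(b,x)$ (obtained via $\pi=\id$) lies in $[b]x'$, so there exists $\pi\in\fix(\supp(x')\setminus\{b\})$ with $\pi(b)=b$ and $\pi\act x'=x$. Combining $\pi(b)=b$ with $\pi\in\fix(\supp(x')\setminus\{b\})$ yields $\pi\in\fix(\supp(x'))$, whence Corollary~\ref{corr.stuff}(\ref{stuff.fixsupp.fixelt}) gives $\pi\act x'=x'$; so $x=x'$.

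There is no real obstacle: the only thing to be careful of is not to mistake ``$b\#z$'' for ``$b\#y$'' in the presentation $z=[a]y$, since these coincide only after checking $a\neq b$ via Lemma~\ref{lemm.supp.abs}. Everything else is a direct application of the atoms-abstraction machinery already established.
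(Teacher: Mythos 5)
Your proof is correct. The paper does not actually argue this lemma in-line: it simply cites \cite[Theorem~3.19]{gabbay:fountl}, so your self-contained argument is a genuinely different (more elementary, in-paper) route, and it holds up. For existence you correctly unpack Definition~\ref{defn.abs} to get $z=[a]y$, split on whether $a=b$, and in the case $a\neq b$ use Lemma~\ref{lemm.supp.abs} to pass from $b\#z$ to $b\#y$ (the point you rightly flag: $b\notin\supp(y)\setminus\{a\}$ only gives $b\#y$ once $a\neq b$ is known) and then Lemma~\ref{lemm.abs.alpha} to exhibit $x=(b\ a)\act y$ with $z=[b]x$. For uniqueness, extracting the pair $(b,x)=(\id(b),\id\act x)$ from the graph presentation of $[b]x$, locating it in $[b]x'$, and observing that the witnessing $\pi$ lies in $\fix(\supp(x'))$ so that Corollary~\ref{corr.stuff}(\ref{stuff.fixsupp.fixelt}) forces $\pi\act x'=x'$, is exactly the right argument; it is worth noting that this uniqueness step uses no freshness hypothesis at all, so the role of $b\#z$ is purely to secure existence. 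What the paper's citation buys is brevity and consistency with its policy of outsourcing the atoms-abstraction toolkit; what your argument buys is that Lemma~\ref{lemm.abs.exists} is seen to follow from the abstraction facts already quoted in Subsection~\ref{subsect.atomsabs} plus Corollary~\ref{corr.stuff}, which is useful if one were to mechanise the development without importing the external theorem wholesale.
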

\begin{proof}
See \cite[Theorem~3.19]{gabbay:fountl}.
\end{proof}

\begin{lemm}
\label{lemm.abs.basic}
Suppose $a{\in}\atomsi{i}$ and $x{\in}\ns X$ and $z\in[\atomsi{i}]\ns X$.
Then:
\begin{enumerate*}
\item
$([a]x)\at a=x$ and if $b\#x$ then $([a]x)\at b=(b\ a)\act x$.
\item
If $a\#z$ then $[a](z\at a)=z$.
\end{enumerate*}
\end{lemm}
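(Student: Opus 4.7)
The plan is to unpack the definition of $z\at b$ and use Lemmas~\ref{lemm.supp.abs}, \ref{lemm.abs.alpha}, and~\ref{lemm.abs.exists} as the workhorses. Recall $z\at b$ is defined as the unique $x\in\ns X$ with $z=[b]x$, whenever such $x$ exists, and by Lemma~\ref{lemm.abs.exists} this existence is guaranteed whenever $b\#z$.

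For part~1, the first equality $([a]x)\at a = x$: by Lemma~\ref{lemm.supp.abs} we have $a\#[a]x$, so $([a]x)\at a$ is well-defined by Lemma~\ref{lemm.abs.exists}. The equation $[a]x=[a]x$ then exhibits $x$ as a witness, and it is the unique such witness by the definition of $z\at a$, so $([a]x)\at a=x$. For the second equality, recall the permutative convention from Definition~\ref{defn.atoms} that $a$ and $b$ range over \emph{distinct} atoms; combined with $b\#x$ this gives $b\#\supp(x)\setminus\{a\}=\supp([a]x)$ by Lemma~\ref{lemm.supp.abs}, so $([a]x)\at b$ is well-defined. Lemma~\ref{lemm.abs.alpha} then gives $[a]x=[b](b\ a)\act x$, and unpacking the definition of $(-)\at b$ yields $([a]x)\at b=(b\ a)\act x$.

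For part~2, assume $a\#z$. Then Lemma~\ref{lemm.abs.exists} guarantees $z\at a$ exists as an element of $\ns X$, and by the very definition of $z\at a$ we have $z=[a](z\at a)$, as required.

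The main obstacle, such as it is, lies only in being scrupulous about well-definedness: each use of the notation $w\at c$ implicitly demands $c\#w$, so one must invoke Lemma~\ref{lemm.supp.abs} (and the permutative convention to ensure $a\neq b$) before appealing to Lemma~\ref{lemm.abs.exists}. Once these side-conditions are discharged, both parts reduce to direct application of the definition of atoms-abstraction together with $\alpha$-equivalence (Lemma~\ref{lemm.abs.alpha}).
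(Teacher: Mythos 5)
Your proof is correct. It differs from the paper only in that the paper does not argue this lemma at all: its proof is a citation to \cite[Theorem~3.19]{gabbay:fountl}, the same external result it cites for Lemma~\ref{lemm.abs.exists}. What you do instead is reconstruct the statement from ingredients already in the paper: you reduce everything to the well-definedness of concretion (existence \emph{and} uniqueness of the $x$ with $z=[b]x$, which is exactly what ``well-defined'' packages in Lemma~\ref{lemm.abs.exists} given the phrase ``the unique $x$'' in the definition of $z\at b$), discharge the freshness side-conditions via Lemma~\ref{lemm.supp.abs} together with the permutative convention to get $b\#[a]x$ from $b\#x$ and $b\neq a$, and then exhibit witnesses: $x$ itself for $([a]x)\at a=x$, the $\alpha$-renamed $(b\ a)\act x$ supplied by Lemma~\ref{lemm.abs.alpha} for $([a]x)\at b$, and $z\at a$ itself for part~2. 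This buys self-containedness and makes visible that the only genuinely external input is the well-definedness lemma, whereas the paper's choice buys brevity and consistency with its policy of citing the background nominal toolkit; both are legitimate, and your version would be the one to follow in a mechanisation.
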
 
\begin{proof}
See \cite[Theorem~3.19]{gabbay:fountl}.
\end{proof}

\subsection{The principle of equivariance}
\label{subsect.pre-equivar}

\begin{rmrk}
\label{rmrk.equivar.1}
We now come to the \emph{principle of equivariance} (Theorem~\ref{thrm.equivar}; see also \cite[Subsection~4.2]{gabbay:fountl} and \cite[Lemma~4.7]{gabbay:newaas-jv}).
It enables a particularly efficient management of renaming and $\alpha$-conversion in syntax and semantics and captures why it is so useful to use \emph{names} to model them instead of, for instance, numbers.

In a nutshell we can say
\begin{quote}
\emph{Atoms are distinguishable, but interchangable.}
\end{quote}
and we make this formal as follows: 
\end{rmrk}

\begin{thrm}
\label{thrm.equivar}
\label{thrm.no.increase.of.supp}
Suppose $\vect x$ is a list $x_1,\ldots,x_n$.
Suppose $\pi$ is a (not necessarily finite) permutation and write $\pi\act \vect x$ for $\pi\act x_1,\ldots,\pi\act x_n$.
Suppose $\Phi(\vect x)$ is a first-order logic predicate in the language of ZFA\footnote{First-order logic with equality $=$, sets membership $\in$, and a constant or collection of constants for sets of atoms.} with free variables $\vect x$.
Suppose $\Upsilon(\vect x)$ is a function specified using a first-order predicate in the language of ZFA with free variables $\vect x$.

Then we have the following principles:
\begin{enumerate*}
\item\label{equivar.pred}
\deffont{Equivariance of predicates.} \ $\Phi(\vect x) \liff \Phi(\pi\act \vect x)$.\footnote{It is important to realise here that $\vect x$ must contain \emph{all} the variables mentioned in the predicate.
It is not the case that $a=a$ if and only if $a=b$ --- but it is the case that $a=b$ if and only if $b=a$ (both are false).}
\item\label{equivar.fun}
\deffont{Equivariance of functions.}\quad\,$\pi\act \Upsilon(\vect x) = \Upsilon(\pi\act \vect x)$.
\item
\deffont{Conservation of support.}\quad\ \
If $\vect x$ denotes elements with finite support \\ then
$\supp(\Upsilon(\vect x)) \subseteq \supp(x_1){\cup}\cdots{\cup}\supp(x_n)$.
\end{enumerate*}
\end{thrm}
\begin{proof}
See Theorem~4.4, Corollary~4.6, and Theorem~4.7 from \cite{gabbay:fountl}.
\end{proof}

\begin{rmrk}
Theorem~\ref{thrm.equivar} states that atoms can be permuted in our theorems and lemmas provided we do so consistently in all parameters.
So for instance if we have proved $\phi(a,b,c)$, then 
\begin{itemize*}
\item
taking $\pi=(a\,c)$ we also know $\phi(c,b,a)$ and 
\item
taking $\pi=(a\,a')(b\,b')(c\,c')$ we also know $\phi(a',b',c')$, but 
\item
we do not necessarily know that we can deduce $\phi(a,b,a)$ (depending on $\phi$ this may still hold, of course, but not by equivariance since no permutation takes $(a,b,c)$ to $(a,b,a)$).
\end{itemize*}
Equivariance makes explicit a sense in which atoms have a dual nature: individually, atoms behave like pointers to themselves,\footnote{In the implementation of FM set theory in my PhD thesis \cite{gabbay:newaas-jv} this was literally true: I found it convenient to use \emph{Quine atoms}, meaning that $a=\{a\}$.} but collectively they have the flavour of variables ranging over the set of all atoms via the action of permutations.\footnote{This too can be made precise. See Subsection~2.6 and Lemma~4.17 of \cite{gabbay:pnlthf}.} 
See also the permutative convention from Definition~\ref{defn.atoms}.

We will use Theorem~\ref{thrm.equivar} frequently in this paper, either to move permutations around (parts~1 and~2) or to get `free' bounds on the support of elements (part~3).
`Free' here means `from the form of the definition, without having to verify it by calculations'.
Theorem~\ref{thrm.equivar} is `free' in the spirit of Wadler's marvellously titled \emph{Theorems for free!} \cite{wadler:theff}.\footnote{Finally, we can be somewhat more precise about the effort these free equivariance deductions can save:  With equivariance, the cost of deducing $\phi(\pi\act x,\pi\act y,\pi\act z)$, given a deduction of $\phi(x,y,z)$, is~1.  Without equivariance, the cost of deducing $\phi(\pi\act x,\pi\act y,\pi\act z)$, given a deduction of $\phi(x,y,z)$, is roughly $n$ where $n$ is the cost of deducing $\phi(x,y,z)$.  
This is convenient in a rigorous but unmechanised proof such as the one in this paper; in an implementation it can quadratically reduce effort by saving roughly effort $n$ for each $\phi$.  This is the difference between $\alpha$-equivalence and renaming lemmas being a minor consideration, and them inflating to dominate the development. My feeling is that once renaming lemmas consume more than 80\% of the developmental effort, development stalls.}

Discussions expanding on this remark are in~\cite{gabbay:equzfn} (full paper) and~\cite{gabbay:equzwc} (abstract).
\end{rmrk} 

\begin{prop}
\label{prop.pi.supp}
\begin{enumerate*}
\item
$\supp(\pi\act x)=\pi\act\supp(x)$ (which means $\{\pi(a)\mid a\in\supp(x)\}$).
\item
$a\#\pi\act x$ (Notation~\ref{nttn.fresh}) if and only if $\pi^\mone(a)\#x$, and $a\#x$ if and only if $\pi(a)\#\pi\act x$.
\end{enumerate*}
\end{prop}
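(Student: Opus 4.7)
The plan is to derive part~2 as a direct corollary of part~1, so the real content is in part~1. For part~1, there are two natural approaches: invoke equivariance from Theorem~\ref{thrm.equivar}, or verify the statement by hand from the definitions in Subsection~\ref{subsect.basic.definitions}. I would present both for clarity, but the hand proof is the one to write out in detail, since it explains what is going on.

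For the direct argument, I would first show that $\pi\act\supp(x)$ supports $\pi\act x$. Unfolding Definition~\ref{defn.set.with.perm}, suppose $\pi'\in\fix(\pi\act\supp(x))$, i.e.\ $\pi'(\pi(a))=\pi(a)$ for every $a\in\supp(x)$. Then $(\pi^\mone\circ\pi'\circ\pi)(a)=a$ for every $a\in\supp(x)$, so $\pi^\mone\circ\pi'\circ\pi\in\fix(\supp(x))$, and hence $(\pi^\mone\circ\pi'\circ\pi)\act x = x$ by Corollary~\ref{corr.stuff}(\ref{stuff.fixsupp.fixelt}). Applying $\pi$ on both sides and using that $\act$ is a group action gives $\pi'\act(\pi\act x)=\pi\act x$. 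Since $\pi\act\supp(x)$ is finite (being the image under $\pi$ of a finite set), this shows it is a finite supporting set for $\pi\act x$, so by Theorem~\ref{thrm.supp} we obtain $\supp(\pi\act x)\subseteq\pi\act\supp(x)$. The reverse inclusion is obtained by applying the same argument with $\pi^\mone$ and $\pi\act x$ in place of $\pi$ and $x$: we get $\supp(x)=\supp(\pi^\mone\act(\pi\act x))\subseteq \pi^\mone\act\supp(\pi\act x)$, and applying $\pi$ yields $\pi\act\supp(x)\subseteq \supp(\pi\act x)$.

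For part~2, unfolding Notation~\ref{nttn.fresh} I would argue: $a\#\pi\act x$ means $a\notin\supp(\pi\act x)$, which by part~1 means $a\notin\{\pi(b)\mid b\in\supp(x)\}$, which is equivalent to $\pi^\mone(a)\notin\supp(x)$, i.e.\ $\pi^\mone(a)\#x$. The second equivalence is obtained from the first by instantiating $a$ to $\pi(a)$.

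I expect no real obstacle: the main subtlety is just keeping track of $\pi$ versus $\pi^\mone$ in the calculation $(\pi^\mone\circ\pi'\circ\pi)\act x=x \ \Rightarrow\ \pi'\act(\pi\act x)=\pi\act x$, and the symmetric argument for minimality. Alternatively, one could give a one-line proof by appealing to Theorem~\ref{thrm.equivar}(\ref{equivar.fun}), since $\supp$ is a function specified by a first-order predicate in ZFA; this gives $\pi\act\supp(x)=\supp(\pi\act x)$ immediately.
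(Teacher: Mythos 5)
Your proposal is correct, but your primary argument takes a different route from the paper. The paper proves Proposition~\ref{prop.pi.supp} in one line, as an immediate consequence of equivariance of functions (Theorem~\ref{thrm.equivar}(\ref{equivar.fun})), explicitly declining to give the concrete calculation and instead citing it in the literature (\cite[Theorem~2.19]{gabbay:fountl}). What you write out in detail is precisely that `not-free' concrete proof: you show $\pi\act\supp(x)$ is a finite supporting set for $\pi\act x$ via the conjugation argument $\pi^\mone\circ\pi'\circ\pi\in\fix(\supp(x))$ together with Corollary~\ref{corr.stuff}(\ref{stuff.fixsupp.fixelt}) and the group-action laws, conclude $\supp(\pi\act x)\subseteq\pi\act\supp(x)$ by minimality (Theorem~\ref{thrm.supp}), and get the reverse inclusion by running the same argument with $\pi^\mone$ on $\pi\act x$; part~2 then follows by unfolding Notation~\ref{nttn.fresh} against part~1. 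This is sound, self-contained, and uses only the definitions and Theorem~\ref{thrm.supp}, whereas the paper's approach buys brevity by leaning on the ZFA equivariance meta-theorem (which you also note as an alternative, and which is in fact the paper's proof). The only point worth flagging is that the equivariance shortcut treats $\supp$ as a ZFA-definable function of its listed parameters, which is exactly the sort of `free' step the paper intends Theorem~\ref{thrm.equivar} to license; your hand proof avoids any such appeal, at the cost of length.
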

\begin{proof}
Immediate consequence of part~2 of Theorem~\ref{thrm.equivar} (for the `not-free' proof by concrete calculations see \cite[Theorem~2.19]{gabbay:fountl}).
\end{proof}

\section{Internal syntax}

\subsection{Basic definition}
\label{subsect.internal.syntax}

\begin{rmrk}
We are now ready to to define our syntax (Figure~\ref{fig.int.syntax}) and study its basic properties (with more advanced properties considered in Section~\ref{sect.sigma-action}). 

Figure~\ref{fig.int.syntax} defines a \emph{nominal} datatype, in which atoms-abstraction is used to manage binding, as introduced in \cite{gabbay:newaas-jv}.
This gives us Lemma~\ref{lemm.tall.fresh}. 
\begin{enumerate*}
\item
Parts~\ref{tall.fresh.alpha.1} and~\ref{tall.fresh.alpha.2} of Lemma~\ref{lemm.tall.fresh} say ``We can alpha-convert''.
\item
In part~\ref{tall.fresh.supp} of Lemma~\ref{lemm.tall.fresh}, $\supp$ corresponds exactly to the notion that would normally be written ``Free variables of'', and $a\#X$ corresponds to ``$a$ is not free in $X$''.
\end{enumerate*}
So why not just write that?
Nominal techniques are a general basket of ideas with implications that go well beyond modelling syntax, but the specific benefit of using nominal techniques to model syntax is that we get alpha-conversion for free from the ambient nominal theory (see \cite{gabbay:newaas-jv} and Section~\ref{sect.basic.defs}).
We do not have to define $\alpha$-conversion and free variables of by induction, and then prove their properties (which is actually a more subtle undertaking than is often realised; cf. Remark~\ref{rmrk.substitution.lemma}).

The reader does not expect to see notions of ordered pairs, trees, numbers, functions, and function application developed from first principles every time we want to write abstract syntax and write a function on a syntax tree.
It is assumed that these things have been worked out.
Nominal techniques do that for binding (and more).
\end{rmrk}

\begin{nttn}
Write $\mathbb Z$ for the \deffont{integers}, so $\mathbb Z=\{0,1,\minus 1,2,\minus 2,\dots\}$ and $\mathbb N$ for the \deffont{natural numbers}, which we start at $0$, so $\mathbb N=\{0,1,2,\dots\}$.
\end{nttn}

\begin{defn}
\label{defn.cred}
\label{defn.lsets}
\label{defn.age}
\begin{enumerate*}
\item
Define datatypes 
\begin{itemize*}
\item
$\cred$ of \deffont{internal predicates} and 
\item
$\lset{i}$ for $i{\in}\integer$ of \deffont{internal (level $i$) sets} 
\end{itemize*}
inductively by the rules in Figure~\ref{fig.int.syntax}, where $\kappa$ ranges over finite ordinals.
\item
Define
$$
\begin{array}{r@{\ }l@{\quad\text{and}\quad}r@{\ }l}
\cred
=&
\bigcup_\kappa\cred(\kappa)
&
\lset{i}
=&
\bigcup_\kappa\lset{i}(\kappa) .
\end{array}
$$
\item
Write $\age(X)$ for the least $\kappa$ such that $X{\in}\cred(\kappa)$.
\item
Write $\age(x)$ for the least $\kappa$ such that $x{\in}\lset{i}(\kappa)$.
\end{enumerate*}
\end{defn}

\begin{figure}[t]
$$
\begin{array}{c@{\qquad}c@{\qquad}c}
\begin{prooftree}
a\in\atomsi{i}
\justifies
\myatm{a}\in\lset{i}(\kappa)
\end{prooftree}
&
\begin{prooftree}
\mathcal X\finsubseteq\cred(\kappa)
\justifies
\myand{\mathcal X}\in\cred(\kappa\plus 1)
\end{prooftree}
&
\begin{prooftree}
X{\in}\cred(\kappa)
\justifies
\myneg{X}\in\cred(\kappa\plus 1)
\end{prooftree}
\\[4ex]
\begin{prooftree}
X{\in}\cred(\kappa)\ \ a{\in}\atomsi{i}
\justifies
\myall{a}{X}{\in}\cred(\kappa{+}1)
\end{prooftree}
&
\begin{prooftree}
a{\in}\atomsi{i{+}1}\ \ x{\in}\lset{i}(\kappa)
\justifies
\myelt{x}{a}\in\cred(\kappa{+}1)
\end{prooftree}
&
\begin{prooftree}
X{\in}\cred(\kappa)\ \ a{\in}\atomsi{i\minus 1}
\justifies
\myst{a}{X}{\in}\lset{i}(\kappa\plus 1)
\end{prooftree}
\end{array}
$$
\caption{Syntax of internal predicates and terms}
\label{fig.int.syntax}
\end{figure}

\begin{nttn}
\label{nttn.internal.comprehension}
\begin{itemize}
\item
If $a{\in}\atoms$ we may call $\myatm{a}$ an \deffont{internal atom}. 
\item
If $X{\in}\cred$ we may call $\myst{a}{X}$ an \deffont{internal comprehension}.
\item
We may call $\myatm{a}$ or $\myst{a}{X}$ an \deffont{internal set}.\footnote{So every internal comprehension or internal atom is an internal set.  Another choice of terminology would be to call $\myatm{a}$ an internal atom, $\myst{a}{X}$ an internal set, and $\myatm{a}$ or $\myst{a}{X}$ \emph{internal elements}.

However, note that $\myst{a}{X}$ is not a set and neither is $\myatm{a}$; they are both syntax and we can call them what we like.
}
\end{itemize}
\end{nttn}

\begin{rmrk}
We read through and comment on Definition~\ref{defn.cred}:  
\begin{enumerate*}
\item
$\kappa$ measures the \emph{age} or \emph{stage} of an element; at what point in the induction it is introduced into the datatype.
This is an inductive measure.
\item
If we elide $\kappa$ and levels and simplify, we can rewrite Definition~\ref{defn.cred} semi-formally as follows:
$$
\begin{array}{l@{\ }l}
x\in\lset{} &::=\myatm{a} \mid \myst{a}{X}
\\
X\in\cred &::=\myand{\mathcal X} \mid \myneg{X} \mid \myall{a}{X} \mid \myelt{x}{a}
\end{array}
$$
\item
$\credneg$ represents negation.
$\credand$ represents logical conjunction. 
\item
$\credand$ takes a finite set rather than a pair of terms.
This is a nonessential eccentricity that cuts down on cases later on. 
Truth is represented as $\myand{\varnothing}$.  See Example~\ref{xmpl.credemp}.
\item
$\credall$ represents universal quantification; read $\myall{a}{X}$ as `for all $a$, $X$' or in symbols: `$\forall a.\phi$'.
In $\myall{a}{X}$, $[a]X$ is the nominal atoms-abstraction from Definition~\ref{defn.abs}. 
It implements the binding of the universal quantifier by the standard nominal method.
\item
$\tin$ represents a sets membership; read $\myelt{x}{a}$ as `$x$ is an element of $a$'. 
Note here that $a$ is an atom; it does not literally have any elements. 
$\myelt{x}{a}$ represents the predicate `we believe that $x$ is an element of the variable $a$', or in symbols: `$x\in a$'.
\item
$\myst{a}{X}$ represents sets comprehension; read $\myst{a}{X}$ as `the set of $a$ such that $X$' or in symbols: `$\{a\mid X\}$'.
Again, as standard in nominal techniques, nominal atoms-abstraction is used to represent the binding.

If $a{\in}\atomsi{i}$ then $\myst{a}{X}{\in}\lset{i{+}1}$.
\item
$\myatm{a}$ is a copy of $a{\in}\atoms$ wrapped in some formal syntax $\atm$. 
\end{enumerate*}
\end{rmrk}

\begin{lemm}
\label{lemm.tall.fresh}
Suppose $X{\in}\cred$ and $i{\in}\integer$ and $a,a'{\in}\atomsi{i}$ and $a'\#X$.
Then:
\begin{enumerate*}
\item\label{tall.fresh.alpha.1}
$\myst{a}{X}=\myst{a'}{(a'\ a)\act X}$ and $\myall{a}{X}=\myall{a'}{(a'\ a)\act X}$.
\item\label{tall.fresh.supp}
$a\#\myst{a}{X}$ and $a\#\myall{a}{X}$, and $\supp(\myst{a}{X}),\supp(\myall{a}{X})\subseteq\supp(X){\setminus}\{a\}$.
\item\label{tall.fresh.alpha.2}
For every finite $S\finsubseteq\atomsi{i}$ there exist $b\in\atomsi{i}{\setminus}S$ and $Y{\in}\cred$ such that $\myst{a}{X}=\myst{b}{Y}$ and $\myall{a}{X}=\myall{b}{Y}$.
\end{enumerate*}
\end{lemm}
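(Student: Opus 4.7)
The strategy is to reduce each part to a basic property of nominal atoms-abstraction already established in Section~\ref{sect.basic.defs}. By Definition~\ref{defn.cred}, $\myall{a}{X}$ is $\credall$ applied to $[a]X\in[\atomsi{i}]\cred$, and $\myst{a}{X}$ is $\credst$ applied to $[a]X\in[\atomsi{i\minus 1}]\cred$ (when $a\in\atomsi{i\minus 1}$, but the calculation is symmetric in the level). Before doing anything else, we need to know that $\cred$ and each $\lset{i}$ are nominal sets, so that the atoms-abstraction machinery applies. This is a routine induction on the age $\kappa$: the $\credatm$ clause produces an element supported by $\{a\}$; the $\credand$ clause uses Lemma~\ref{lemm.finpow.support} to bound the support of $\myand{\mathcal X}$ by the union of the supports of its (finitely many) arguments; the $\credneg$ and $\credelt$ clauses propagate finite support trivially; and the $\credall$ and $\credst$ clauses reduce to Lemma~\ref{lemm.supp.abs}.

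For part~\ref{tall.fresh.alpha.1}, the hypothesis $a'\#X$ combined with Lemma~\ref{lemm.abs.alpha} gives $[a]X=[a']((a'\ a)\act X)$. Wrapping both sides in $\credall$ (respectively $\credst$) yields the two claimed equalities, using conservation of equality under the equivariant function $\credall$ (respectively $\credst$) from Theorem~\ref{thrm.equivar}. For part~\ref{tall.fresh.supp}, Lemma~\ref{lemm.supp.abs} gives $\supp([a]X)=\supp(X)\setminus\{a\}$, and then conservation of support (Theorem~\ref{thrm.equivar}, part~3) applied to the constructor $\credall$ (respectively $\credst$) yields $\supp(\myall{a}{X})\subseteq\supp([a]X)=\supp(X)\setminus\{a\}$; the freshness assertions $a\#\myall{a}{X}$ and $a\#\myst{a}{X}$ are then immediate from this inclusion.

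For part~\ref{tall.fresh.alpha.2}, since $X$ has finite support (by the first paragraph) and $S$ is finite, the set $S\cup\supp(X)\cup\{a\}$ is finite; since $\atomsi{i}$ is countably infinite, we may pick $b\in\atomsi{i}$ outside it. In particular $b\notin S$ and $b\#X$, so part~\ref{tall.fresh.alpha.1} applied with $a'=b$ gives $\myst{a}{X}=\myst{b}{(b\ a)\act X}$ and $\myall{a}{X}=\myall{b}{(b\ a)\act X}$, so we may take $Y=(b\ a)\act X$. The only mild obstacle anywhere in the proof is the preliminary verification that $\cred$ is a nominal set, because it requires handling the $\credand$ clause (whose argument is a finite set and so imports the machinery of Subsection~\ref{subsect.finite.pow}); once this is in hand, the three parts of the lemma are direct translations of the corresponding atoms-abstraction facts through the equivariant constructors $\credall$ and $\credst$.
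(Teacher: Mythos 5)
Your proof is correct and follows essentially the same route as the paper, which derives the lemma directly from Lemma~\ref{lemm.abs.alpha}, Lemma~\ref{lemm.supp.abs}, and Theorem~\ref{thrm.no.increase.of.supp}; your extra preliminary check that $\cred$ and $\lset{i}$ are nominal sets only makes explicit what the paper treats as implicit in the nominal datatype construction of Figure~\ref{fig.int.syntax}.
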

\begin{proof}
Immediate from Lemma~\ref{lemm.abs.alpha}, and Lemma~\ref{lemm.supp.abs} with Theorem~\ref{thrm.no.increase.of.supp}.
\end{proof}

\begin{lemm}
\label{lemm.age.equivar}
Suppose $i{\in}\integer$ and $a,b\in\atomsi{i}$ and $X\in\cred$ and $x\in\lset{}$.
Then $\age(X)=\age((a\ b)\act X)$ and $\age(x)=\age((a\ b)\act x)$.
\end{lemm}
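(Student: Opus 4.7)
The plan is to reduce this to equivariance of the inductive definition in Figure~\ref{fig.int.syntax}. First I would observe that each clause defining $\cred(\kappa)$ and $\lset{i}(\kappa)$ is manifestly stable under the permutation action: applying any permutation $\pi$ to the conclusion of a rule yields a term derivable by the same rule at the same stage $\kappa$, provided $\pi$ is pushed through to the premises. Concretely, $\pi\act\myatm{c}=\myatm{\pi(c)}$, $\pi\act\myneg{X}=\myneg{\pi\act X}$, $\pi\act\myand{\mathcal X}=\myand{\pi\act\mathcal X}$, $\pi\act\myall{c}{X}=\myall{\pi(c)}{\pi\act X}$, $\pi\act\myelt{x}{c}=\myelt{\pi\act x}{\pi(c)}$, and $\pi\act\myst{c}{X}=\myst{\pi(c)}{\pi\act X}$.

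Using this, a routine induction on $\kappa$ (equivalently, rule induction on the derivations) establishes that $X\in\cred(\kappa)\liff \pi\act X\in\cred(\kappa)$ and $x\in\lset{i}(\kappa)\liff \pi\act x\in\lset{i}(\kappa)$ for every permutation $\pi$. Specialising to $\pi=(a\ b)$ gives $\{\kappa\mid X\in\cred(\kappa)\}=\{\kappa\mid (a\ b)\act X\in\cred(\kappa)\}$, and likewise for $\lset{i}$. Taking least elements on both sides yields the claimed equality of ages.

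Alternatively and more concisely, one can invoke Theorem~\ref{thrm.equivar} directly: the relation ``$X\in\cred(\kappa)$'' is a first-order predicate in ZFA with free variables $X$ and $\kappa$, so equivariance of predicates gives $X\in\cred(\kappa)\liff \pi\act X\in\cred(\pi\act\kappa)$; since $\kappa$ is an ordinal it has empty support and is fixed by $\pi$, so $\pi\act\kappa=\kappa$. The same argument applies to $\lset{i}(\kappa)$. There is no substantive obstacle here — the lemma is essentially a bookkeeping consequence of the fact that the stage stratification in Figure~\ref{fig.int.syntax} was itself built by first-order inductive clauses and is therefore equivariant; the only thing to be careful of is the trivial observation that ordinals are permutation-invariant.
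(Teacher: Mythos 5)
Your proposal is correct, and your ``more concise'' alternative is essentially the paper's proof: the paper disposes of this lemma in one line by citing Theorem~\ref{thrm.equivar}, the only difference being that it applies part~2 (equivariance of \emph{functions}) directly to $\age$, giving $\pi\act\age(X)=\age(\pi\act X)$, and then uses the fact that a finite ordinal has empty support and is fixed by $\pi$ --- whereas you apply equivariance of \emph{predicates} to the stage-membership relation $X\in\cred(\kappa)$ and then take least elements. These are interchangeable bookkeeping variants of the same idea. Your primary route, the explicit rule induction on $\kappa$ pushing the permutation through each clause of Figure~\ref{fig.int.syntax}, is also perfectly sound, but it re-proves by hand exactly what the equivariance principle is designed to give ``for free''; the paper deliberately avoids this kind of case analysis, and indeed uses this lemma as an illustration of that economy. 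So: no gap, same essential argument, with your first paragraph being an unnecessary (though harmless) elaboration.
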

\begin{proof}
Direct from Theorem~\ref{thrm.equivar}(\ref{equivar.fun}).
\end{proof}

\subsection{Some notation}
\label{subsect.some.useful.notation}

\begin{nttn}
\label{nttn.crediff}
Suppose $X,Y{\in}\cred$ and $\mathcal X\finsubseteq\cred$.
Define syntactic sugar $\myor{\mathcal X}$, $\myimp{X}{Y}$ and $\myiff{X}{Y}$ by 
$$
\begin{array}{r@{\ }l}
\myor{\mathcal X}=&\myneg{(\myand{\{\myneg{X}\mid X\in\mathcal X\}})}
\\
\myimp{X}{Y}=&\myor{\{\myneg{X},Y\}}
\\
\myiff{X}{Y}=&\myand{\{\myimp{X}{Y},\myimp{Y}{X}\}} .
\end{array}
$$ 
\end{nttn}

\begin{xmpl}
\label{xmpl.credemp}
Define $\credfalse\in\cred$ and $\credtrue\in\cred$ by
$$
\credfalse = \myor{\varnothing} 
\quad\text{and}\quad
\credtrue = \myand{\varnothing} .
$$
Intuitively, $\credfalse$ represents the empty disjunction, and $\credtrue$ represents the empty conjunction.
\end{xmpl}

\begin{nttn}
\label{nttn.xata}
Suppose that:
\begin{itemize*}
\item
$i{\in}\integer$ and 
\item
$x=\mathsc{st}(x'){\in}\lset{i}$ is an internal comprehension where $x'\in[\atomsi{i\minus 1}]\cred$ and $a'{\in}\atomsi{i\minus 1}$ and 
\item
$a'\#x$ (equivalently\footnote{By concrete calculations or by Theorem~\ref{thrm.no.increase.of.supp}.} $a'\#x'$).
\end{itemize*}
Then write 
$$
x\at a'
\quad\text{for}\quad x'\at a'.
$$
\end{nttn}

Lemma~\ref{lemm.conc.set} checks that Notation~\ref{nttn.xata} makes sense:
\begin{lemm}
\label{lemm.conc.set}
Suppose $i{\in}\integer$ and $a'{\in}\atomsi{i\minus 1}$.
Suppose $x{\in}\lset{i}$ is an internal comprehension and $a'\#x$ and $X'\in\cred$.
Then:
\begin{enumerate*}
\item
$x\at a'$ is well-defined and $x\at a'\in\cred$.
\item
$x = \myst{a'}{(x\at a')}$.
\item
If $\age(x)=\kappa\plus 1$ then $\age(x\at a')=\kappa$ (meaning that in an inductive argument using age, taking $x\at a'$ strictly decreases the inductive measure).
\item
$(\myst{a'}{X'})\at a' = X'$.
\end{enumerate*}
\end{lemm}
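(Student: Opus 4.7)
The proof is a series of direct applications of the atoms-abstraction machinery from Subsection~\ref{subsect.atomsabs}. Parts~1, 2, and~4 follow essentially routinely, whereas part~3 requires some case analysis and is where I expect the main obstacle.

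For part~1, I would begin by using the hypothesis that $x$ is an internal comprehension to write $x = \credst(x')$ with $x' \in [\atomsi{i\minus 1}]\cred$. The injectivity of the constructor $\credst$ combined with conservation of support (Theorem~\ref{thrm.equivar}(3)) applied in both directions yields $\supp(x) = \supp(x')$, so $a' \# x$ implies $a' \# x'$. Lemma~\ref{lemm.abs.exists} then makes $x' \at a' \in \cred$ well-defined, which is exactly $x \at a'$ by Notation~\ref{nttn.xata}.

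For part~2, I would apply Lemma~\ref{lemm.abs.basic}(2) to the $x'$ from part~1 to obtain $[a'](x' \at a') = x'$, whence $\myst{a'}{(x \at a')} = \credst([a'](x' \at a')) = \credst(x') = x$. Part~4 is analogous: Lemma~\ref{lemm.tall.fresh}(2) supplies $a' \# \myst{a'}{X'}$ so Notation~\ref{nttn.xata} applies, and Lemma~\ref{lemm.abs.basic}(1) delivers $([a']X') \at a' = X'$.

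The main obstacle is part~3, which demands a two-sided analysis of ages. The direction $\age(x) \leq \age(x \at a') + 1$ falls out immediately from the $\credst$ rule of Figure~\ref{fig.int.syntax} applied to part~2. For the reverse direction, suppose $\age(x) = \kappa + 1$ with witnessing derivation $x = \myst{a''}{X''}$ where $X'' \in \cred(\kappa)$; I would split on whether $a' = a''$. If they are equal, then $X'' = x \at a'$ by the injectivity of $\credst$ together with the first clause of Lemma~\ref{lemm.abs.basic}(1), and we are done. Otherwise, $a' \# x$ combined with $a' \neq a''$ yields $a' \# X''$ via Lemma~\ref{lemm.supp.abs}; then Lemma~\ref{lemm.abs.basic}(1) gives $x \at a' = (a'\ a'') \act X''$, and since $a', a'' \in \atomsi{i\minus 1}$ the swap is sort-respecting, so Lemma~\ref{lemm.age.equivar} furnishes $\age(x \at a') = \age(X'') \leq \kappa$ as required.
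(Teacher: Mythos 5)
Your proof is correct and follows essentially the same route as the paper, whose own proof simply cites ``by construction'' together with Lemma~\ref{lemm.abs.exists}, Lemma~\ref{lemm.abs.basic}, and Lemma~\ref{lemm.age.equivar} for the respective parts. Your two-sided age analysis in part~3, including the case split on whether the abstracted atom in the witnessing derivation equals $a'$ and the use of equivariance of $\age$ to handle the renaming, is exactly the detail that the paper's terse appeal to construction plus Lemma~\ref{lemm.age.equivar} leaves implicit.
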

\begin{proof}
\begin{enumerate}
\item
By construction and Lemma~\ref{lemm.abs.exists}.
\item
By construction and Lemma~\ref{lemm.abs.basic}(2).
\item
By construction and Lemma~\ref{lemm.age.equivar}.
\item
By construction and Lemma~\ref{lemm.abs.basic}(1).
\qedhere\end{enumerate}
\end{proof}

Recall $\credfalse{=}\myor{\varnothing}$ from Example~\ref{xmpl.credemp}.
\begin{defn}
\label{defn.credzero}
Suppose $i{\in}\integer$ and $a{\in}\atomsi{i\minus 1}$.  
Define $\credempset^i$ and $\credfullset^i$ by
$$
\begin{array}{r@{\ }l}
\credempset^i =& \myst{a}{\credfalse} = \myst{a}{\myor{\varnothing}}
\\
\credfullset^i =&\myst{a}{\credtrue} = \myst{a}{\myand{\varnothing}} .
\end{array}
$$ 
\end{defn}

We conclude with an easy lemma:
\begin{lemm}
\label{lemm.credzero.at}
Suppose $i{\in}\integer$ and $a{\in}\atomsi{i\minus 1}$.
Then:
\begin{enumerate*}
\item\label{credzero.at.a}
$\credempset^i\at a=\credfalse$ and $\myst{a}{\credfalse}=\credempset^i$, and similarly
$\credfullset^i\at a=\credtrue$ and $\myst{a}{\credtrue}=\credfullset^i$. 
\item\label{credzero.fresh}
$a\#\credfalse$ and $a\#\credtrue$.
\item
Definition~\ref{defn.credzero} does not depend on the choice of $a{\in}\atomsi{i\minus 1}$.
\end{enumerate*}
\end{lemm}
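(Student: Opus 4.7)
The plan is to dispatch the three parts by first pinning down the support of $\credfalse$ and $\credtrue$, and then letting Parts~1 and~3 follow essentially mechanically. I would actually prove Part~\ref{credzero.fresh} first, since both other parts lean on it.

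For Part~\ref{credzero.fresh}, note that $\credtrue = \myand{\varnothing}$ by Example~\ref{xmpl.credemp} and $\credfalse = \myor{\varnothing} = \myneg{\myand{\varnothing}}$ by Notation~\ref{nttn.crediff}. By conservation of support (Theorem~\ref{thrm.no.increase.of.supp}(3)), $\supp(\myand{\varnothing})\subseteq\supp(\varnothing)=\varnothing$, so $\supp(\credtrue)=\varnothing$; applying conservation of support once more to the outer $\credneg$ gives $\supp(\credfalse)=\varnothing$ as well. Hence $a\#\credfalse$ and $a\#\credtrue$ for every atom $a$, and in particular for any $a\in\atomsi{i\minus 1}$.

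For Part~\ref{credzero.at.a}, the equations $\myst{a}{\credfalse}=\credempset^i$ and $\myst{a}{\credtrue}=\credfullset^i$ are immediate by Definition~\ref{defn.credzero}. The remaining two equalities, $\credempset^i\at a=\credfalse$ and $\credfullset^i\at a=\credtrue$, follow by unfolding Definition~\ref{defn.credzero} and applying Lemma~\ref{lemm.conc.set}(4) with $X' := \credfalse$ (respectively $\credtrue$); note that Lemma~\ref{lemm.conc.set} is applicable because Part~\ref{credzero.fresh} guarantees $a\#\credfalse$ and $a\#\credtrue$.

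For Part~3, suppose $a,a'\in\atomsi{i\minus 1}$. Since $\supp(\credfalse)=\varnothing$, Part~\ref{tall.fresh.alpha.1} of Lemma~\ref{lemm.tall.fresh} (with the freshness hypothesis $a'\#\credfalse$ discharged by Part~\ref{credzero.fresh}) gives
$$\myst{a}{\credfalse}=\myst{a'}{(a'\ a)\act\credfalse}.$$
But $(a'\ a)\act\credfalse=\credfalse$ by Corollary~\ref{corr.stuff}(\ref{stuff.fixsupp.fixelt}), since the swapping $(a'\ a)$ fixes every atom in the (empty) support of $\credfalse$. Thus $\myst{a}{\credfalse}=\myst{a'}{\credfalse}$, so the definition of $\credempset^i$ is independent of the choice of $a$; the same reasoning with $\credtrue$ in place of $\credfalse$ handles $\credfullset^i$. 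There is no real obstacle here; the only point requiring care is to invoke conservation of support on the base cases $\myand{\varnothing}$ and $\myneg{\myand{\varnothing}}$ before using any of the atoms-abstraction machinery.
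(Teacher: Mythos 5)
Your proof is correct and follows essentially the same route as the paper: part~1 via Definition~\ref{defn.credzero} and concretion of an abstraction at its own atom (your Lemma~\ref{lemm.conc.set}(4) is just a packaged form of Lemma~\ref{lemm.abs.basic}(1)), part~2 via conservation of support (Theorem~\ref{thrm.no.increase.of.supp}), and part~3 via Corollary~\ref{corr.stuff} together with $\alpha$-renaming. The only difference is cosmetic: you prove part~2 first and spell out the equivariance/$\alpha$-conversion step in part~3, which the paper leaves implicit.
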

\begin{proof}
\begin{enumerate*}
\item
From Definition~\ref{defn.credzero} and Lemma~\ref{lemm.abs.basic}(1).
\item
From part~1 of this result, since $a\#\credfalse$ and $a\#\credtrue$ by Theorem~\ref{thrm.no.increase.of.supp}.
\item
From part~2 of this result, using Corollary~\ref{corr.stuff}. 
\qedhere\end{enumerate*}
\end{proof}

\section{The sigma-action}
\label{sect.sigma-action}

\subsection{Basic definition and well-definedness}

Intuitively, Definition~\ref{defn.sigma} defines a substitution action.
It is slightly elaborate, especially because of \rulefont{\sigma\credelt a} of Figure~\ref{fig.sub}, so it gets a fancy name (`$\sigma$-action') and we need to make formal and verify that it behaves as a substitution action should; see Remark~\ref{rmrk.is.sigma}.
 
\begin{defn}
\label{defn.sigma}
\label{defn.sub}
Define a \deffont{$\sigma$-action} (sigma-action) to be a family of functions 
$$
\sigma_i:\cred\times\atomsi{i}\times\lset{i}\to \cred
\quad\text{and}\quad
\sigma_{ij}:\lset{j}\times\atomsi{i}\times\lset{i}\to \lset{j}
$$
where $i,j{\in}\integer$, inductively by the rules in Figure~\ref{fig.sub}.
For readability we write
$$
\sigma_i(Z,a,x) \ \text{as}\ Z[a\sm x]
\quad\text{and}\quad
\sigma_i(z,a,x) \ \text{as}\ z[a\sm x]
.
$$
Furthermore in Figure~\ref{fig.sub}:
\begin{itemize*}
\item
In rule \rulefont{\sigma\credand},\ $\mathcal X{\finsubseteq}\cred$.
\item
In rule \rulefont{\sigma\credneg},\ $X{\in}\cred$. 
\item
In rule \rulefont{\sigma\credall},\ $X{\in}\cred$ and $b{\in}\atomsj{j}$ for some $j{\in}\integer$.
\item
In rule \rulefont{\sigma\credelt a},\ $a'{\in}\atomsi{i\minus 1}$.
\item
In rule \rulefont{\sigma\credelt\atm},\ $n$ ranges over \emph{all} atoms in $\atomsi{i}$ (not just those distinct from $a$).
\item
In rule \rulefont{\sigma\credelt b},\ $b{\in}\atomsj{j}$ for some $j{\in}\integer$.
\item
In rule \rulefont{\sigma\credst},\ $X{\in}\cred$ and $c{\in}\atomsi{k}$ for some $k{\in}\integer$.
\end{itemize*}
\end{defn}

\begin{figure}
$$
\begin{array}{l@{\qquad}l@{\ }r@{\ }l@{\quad}l}
\rulefont{\sigma\credand}&
&
(\myand{\mathcal X})[a\sm x]=&\myand{\{X[a\sm x]\mid X{\in}\mathcal X\}}
\\
\rulefont{\sigma\credneg}&
&
(\myneg{X})[a\sm x]=&\myneg{(X[a\sm x])}
\\
\rulefont{\sigma\credall}&
b\#x\limp&
(\myall{b}{X})[a\sm x]=&\myall{b}{(X[a\sm x])}
\\
\rulefont{\sigma\credelt\atm}& 
&(\myelt{y}{a})[a\sm \myatm{n}]=&\myelt{y[a\sm \myatm{n}]}{n}
\\
\rulefont{\sigma\credelt a}& 
&(\myelt{y}{a})[a\sm \myst{a'}{X'}]=&X[a'\sm y[a\sm \myst{a'}{X'}]]
\\
\rulefont{\sigma\credelt b}& 
&(\myelt{y}{b})[a\sm x]=& \myelt{y[a\sm x]}{b}
\\
\rulefont{\sigma a}&
&
\myatm{a}[a\sm x]=&x
\\
\rulefont{\sigma b}&
&
\myatm{b}[a\sm x]=&\myatm{b}
\\
\rulefont{\sigma\credst}&
c\#x\limp&
\myst{c}{X}[a\sm x]=& \myst{c}{(X[a\sm x])}
\end{array}
$$
\caption{The sigma-action (Definition~\ref{defn.sigma})}
\label{fig.sub}
\label{fig.sigma}
\end{figure}

\begin{rmrk}
\label{rmrk.slip.in}
Figure~\ref{fig.sub} slips in no fewer than three abuses of the mathematics:
\begin{enumerate}
\item
We do not know that $X\in\cred$ implies $X[a\sm x]\in\cred$, so we should not write $\myand{\{X[a\sm x]\mid \dots\}}$ on the right-hand side of \rulefont{\sigma\credand}, or indeed $X[a\sm x]$ on the right-hand side of \rulefont{\sigma\credneg}, and so on.

In fact, all right-hand sides of Figure~\ref{fig.sub} are suspect except those of \rulefont{\sigma a} and \rulefont{\sigma b}.
\item
We do not know whether the choice of fresh $a'{\in}\atomsi{i\minus 1}$ in \rulefont{\sigma\credelt a} matters, so we do not know that \rulefont{\sigma\credelt a} is well-defined.
\item
The definition looks inductive at first glance, however in the case of \rulefont{\sigma\credelt a} there is no guarantee that $X$ (on the right-hand side) is smaller than $\myelt{y}{a}$ (on the left-hand side).
The level of $a'$ is strictly lower than the level of $a$, however levels are taken from $\integer$ which is totally ordered but not well-ordered by $\leq$.
\end{enumerate}
In fact:
\begin{itemize}
\item
$X\in\cred$ does indeed imply $X[a\sm x]\in\cred$.
\item
The choice of fresh $a'$ in \rulefont{\sigma\credelt a} is immaterial.
\item
The levels of atoms involved are bounded below (see Definition~\ref{defn.minlev}) 
so we only ever work on a well-founded fragment of $\integer$.
\end{itemize} 
For proofs see Proposition~\ref{prop.X.sigma} and Lemma~\ref{lemm.sigma.alpha}.

Would it be more rigorous to interleave the proofs of these lemmas with the definition, so that at each stage we are confident that what we are writing actually makes sense?
Certainly we could; the reader inclined to worry about this need only read Definition~\ref{defn.sigma} alongside Proposition~\ref{prop.X.sigma} and Lemma~\ref{lemm.sigma.alpha} as a simultaneous inductive argument. 
\end{rmrk}

\begin{rmrk}[Why `minimum level']
Levels are in $\integer$ and are totally ordered by $\leq$ but not well-founded (since integers can `go downwards forever').

However, any (finite) internal predicate or internal set can mention only finitely many levels, so we can calculate the \emph{minimum level} of a predicate or set, which is lower bound on the levels of atoms appearing in that predicate or set.
We will use this lower bound to reason inductively on levels in Propositions~\ref{prop.X.sigma} and~\ref{prop.sigma.sigma}.
\end{rmrk}
 
\begin{defn}
\label{defn.minlev}
Define $\minlev(Z)$ and $\minlev(z)$ the \deffont{minimum level} of $Z$ or $z$, inductively on $Z{\in}\cred$ and $z{\in}\lset{i}$ for $i{\in}\integer$ as follows:
$$
\begin{array}{l@{\ =\ }l}
\minlev(\myatm{a})&\level(a)
\\
\minlev(\myand{\mathcal X})&\f{min}(\{0\}\cup\{\minlev(X)\mid X{\in}\mathcal X\})
\\
\minlev(\myneg{X})&\minlev(X)
\\
\minlev(\myall{a}{X})&\f{min}(\{\level(a),\minlev(X)\})
\\
\minlev(\myelt{x}{a})&\f{min}(\{\minlev(x),\level(a)\})
\\
\minlev(\myst{a}{X})&\f{min}(\{\level(a),\minlev(X)\})
\end{array}
$$
Above, $\f{min}(\mathcal I)$ is the least element of $\mathcal I\finsubseteq\integer$.
We add $0$ in the clause for $\credand$ as a `default value' to exclude calculating a minimum for the empty set; any other fixed integer element would do as well or, if we do not want to make this choice, we can index $\minlev$ over a fixed but arbitrary choice.
The proofs to follow will not care.
\end{defn}

It will be convenient to apply $\minlev$ to a mixed list of internal predicates, atoms, and internal sets:
\begin{nttn}
\begin{itemize}
\item
Define $\minlev(a)=\level(a)$.
\item
If $l=(l_1,l_2,\dots,l_n)$ is a list of elements from $\cred\cup\bigcup_{i{\in}\integer}\lset{i}\cup\atoms$ then we write $\minlev(l)$ for the least element of $\{\minlev(l_1),\dots,\minlev(l_n)\}$.
\end{itemize} 
\end{nttn}

\begin{prop}
\label{prop.X.sigma}
Suppose $i{\in}\integer$ and $a{\in}\atomsi{i}$ and $x{\in}\lset{i}$.
\begin{enumerate}
\item
If $Z{\in}\cred$ then 
\begin{itemize*}
\item
$Z[a\sm x]$ is well-defined, 
\item
$\minlev(Z[a\sm x]){\geq}\minlev(Z,a,x)$, and 
\item
$Z[a\sm x]{\in}\cred$.
\end{itemize*}
\item
If $k{\in}\integer$ and $z{\in}\lset{k}$ then
\begin{itemize*}
\item
$z[a\sm x]$ is well-defined, 
\item
$\minlev(z[a\sm x]){\geq}\minlev(z,a,x)$, and 
\item
$z[a\sm x]{\in}\lset{k}$.
\end{itemize*}
\end{enumerate}
\end{prop}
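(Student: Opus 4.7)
The plan is to prove parts~1 and~2 simultaneously by well-founded induction on the lexicographic pair
$$\bigl(\level(a) \minus \minlev(Z,a,x),\ \age(Z)\bigr)$$
for part~1, and the analogous pair with $z$ for part~2. The primary component lies in $\nat$ because $a$ sits in the tuple over which $\minlev$ is taken, so $\minlev(Z,a,x) \leq \level(a)$. At each recursive call I maintain all three conclusions in parallel: well-definedness, the $\minlev$ bound, and membership in $\cred$ or $\lset{k}$.

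Most cases reduce to structural induction with $a$ and $x$ unchanged. \rulefont{\sigma a} and \rulefont{\sigma b} are immediate from Figure~\ref{fig.sub}. In \rulefont{\sigma\credneg}, \rulefont{\sigma\credand}, \rulefont{\sigma\credelt\atm}, and \rulefont{\sigma\credelt b} the first argument strictly decreases in age while $a$, $x$, and hence the primary measure, are unchanged (and $\minlev$ of a subterm is at least the $\minlev$ of its enclosing term, so the primary measure does not increase). Therefore the induction hypothesis applies to each recursive call; the $\cred$ or $\lset{k}$ typing of the result follows from the constructor rules of Figure~\ref{fig.int.syntax}, and the $\minlev$ bound propagates because $\min$ distributes over the constructors.

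The cases \rulefont{\sigma\credall} and \rulefont{\sigma\credst} require the bound atom to satisfy $b\#x$ or $c\#x$; Lemma~\ref{lemm.tall.fresh}(\ref{tall.fresh.alpha.2}) lets me rename the binder to secure freshness without altering age (Lemma~\ref{lemm.age.equivar}) or $\minlev$. Well-definedness in the face of distinct $\alpha$-equivalent representatives is the subject of the companion Lemma~\ref{lemm.sigma.alpha} flagged in Remark~\ref{rmrk.slip.in}, which I would prove jointly as part of the same simultaneous induction.

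The principal obstacle is \rulefont{\sigma\credelt a}, where
$$(\myelt{y}{a})[a\sm \myst{a'}{X'}] \ =\ X'[a'\sm y[a\sm \myst{a'}{X'}]]$$
and neither first argument on the right is a subterm of the left, so structural induction alone cannot close the case. The inner call $y[a\sm \myst{a'}{X'}]$ keeps the substitution $[a\sm-]$ intact, strictly decreases age, and does not change the primary measure, so the IH yields $y[a\sm\myst{a'}{X'}] \in \lset{i\minus 1}$ together with
$\minlev(y[a\sm\myst{a'}{X'}]) \geq \minlev(\myelt{y}{a},a,\myst{a'}{X'})$
(a direct unfolding of the definition of $\minlev$ shows the triples $(y,a,\myst{a'}{X'})$ and $(\myelt{y}{a},a,\myst{a'}{X'})$ have the same $\min$). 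For the outer call the substituted atom $a'$ lives in $\atomsi{i\minus 1}$, so its level has dropped by one, whereas the new $\minlev$ is at least the old one; hence the primary measure strictly decreases by at least~$1$ and the IH delivers well-definedness, $X'[a'\sm-]\in\cred$, and the desired $\minlev$ bound for the whole expression, which by the chain above dominates $\minlev(\myelt{y}{a},a,\myst{a'}{X'})$ as required. Finally, independence of the choice of $a'$ used to destructure $\myst{a'}{X'}$ is again subsumed by Lemma~\ref{lemm.sigma.alpha}, proved in tandem.
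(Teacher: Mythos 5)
Your proof is correct and follows essentially the same route as the paper: the same case analysis over Figure~\ref{fig.sub}, with the critical case \rulefont{\sigma\credelt a} closed by observing that the substituted atom's level drops by one while the $\minlev$ bound carried in the statement keeps the relevant levels from falling below a fixed floor, and with well-definedness of the fresh-atom choices deferred to Lemma~\ref{lemm.sigma.alpha} read as a simultaneous induction, exactly as Remark~\ref{rmrk.slip.in} prescribes. The only difference is a cosmetic repackaging of the well-foundedness device: the paper fixes an arbitrary lower bound $k\leq\minlev(Z,a,x)$ and inducts lexicographically on $(\level(a),\age(Z))$ over levels ${\geq}k$, whereas you fold that floor directly into the measure as $\level(a)\minus\minlev(Z,a,x)\in\nat$, which is equivalent.
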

\begin{proof}
Fix some $k{\in}\integer$.
We prove the Proposition for all $Z,a,x$ and $z,a,x$ with $\minlev(Z,a,x){\geq}k$ and $\minlev(z,a,x){\geq}k$, by induction on 
$$
(\level(a),\age(Z))
\quad\text{and}\quad
(\level(a),\age(z))
$$ 
lexicographically ordered.
Since $k$ was arbitrary, this suffices to prove it for all $Z,a,x$ and $z,a,x$. 

We consider the possibilities for $Z{\in}\cred$:
\begin{itemize*}
\item
\emph{The case of $\myand{\mathcal X}$ for $\mathcal X\finsubseteq\cred$.}\quad

By \figref{fig.sub}{\sigma\credand}
$
Z[a\sm x]{=}\myand{\{X'[a\sm x]{\mid} X'{\in}\mathcal X\}}.
$
We use the inductive hypothesis on each $X'[a\sm x]$ and some easy arithmetic calculations.
\item
\emph{The case of $\myneg{X'}$ for $X'{\in}\cred$.}\quad

By \figref{fig.sub}{\sigma\credneg}
$
Z[a\sm x]=\myneg{(X'[a\sm x])} .
$
We use the inductive hypothesis on $X'[a\sm x]$.
\item
\emph{The case of $\myall{b}{X'}$ for $X'{\in}\cred$ and $b{\in}\atomsj{j}$ for some $j{\in}\integer$.}\quad

Using Lemma~\ref{lemm.tall.fresh}(\ref{tall.fresh.alpha.1}) we may assume without loss of generality that $b\#x$.
By \figref{fig.sub}{\sigma\credall} 
$
(\myall{b}{X'})[a\sm x]=\myall{b'}{(X[a\sm x])}.
$
We use the inductive hypothesis on $X'[a\sm x]$.
\item
\emph{The case of $\myelt{z}{a}$ for $z{\in}\lset{i\minus 1}$.}\quad
There are two sub-cases:
\begin{itemize}
\item
\emph{Suppose $x{=}\myatm{n}$ for some $n{\in}\atomsi{i}$.}\quad
\\
By \figref{fig.sub}{\sigma\credelt\atm} 
$
(\myelt{z}{a})[a\sm x]=\myelt{z[a\sm\myatm{n}]}{n} .
$ 
We use the inductive hypothesis on $z[a\sm\myatm{n}]$.
\item
\emph{Suppose $x{=}\myst{a'}{X'}$ where $X'=x\at a'$ for some fresh $a'{\in}\atomsi{i\minus 1}$ (so $a'\#x,z$).}\quad
\\
By \figref{fig.sub}{\sigma\credelt a} 
$
(\myelt{z}{a})[a\sm x]=X'[a'\sm z[a\sm x]] .
$ 
We have the inductive hypothesis on $z[a\sm x]$.
We also have the inductive hypothesis (since $k{\leq}\level(a'){=}i\minus 1\lneq i{=}\level(a)$)\footnote{$k$ was chosen no greater than the minimum level of $Z$, $a$, and $x$.  Now $x=\myst{a'}{X'}$, and it follows from Definition~\ref{defn.minlev} that $k\leq\level(a')$.}
 on 
$X'[a'\sm z[a\sm x]]$, and this suffices.
\end{itemize}
\item
\emph{The case of $\myelt{z}{c}$ where $c{\in}\atomsi{k}$ and $z{\in}\lset{k\minus 1}$ and $k{\in}\integer$.}\quad

By \figref{fig.sub}{\sigma\credelt b} and the inductive hypothesis.
\end{itemize*}
We consider the possibilities for $z{\in}\lset{k}$:
\begin{itemize}
\item
\emph{The case that $z$ is an internal atom.}\quad

We use \rulefont{\sigma a} or \rulefont{\sigma b} of Figure~\ref{fig.sub}.
\item
\emph{The case that $z$ is an internal comprehension.}\quad

Choose fresh $c{\in}\atomsi{k\minus 1}$ (so $c\#x,z$), so that by Lemma~\ref{lemm.conc.set}(2)
$z=\myst{c}{z\at c}$.
We use the first part of this result and \figref{fig.sub}{\sigma\credst}. 
\qedhere\end{itemize}
\end{proof}

\subsection{Nominal algebraic properties of the sigma-action}

\begin{rmrk}
\label{rmrk.is.sigma}
Several useful properties of the $\sigma$-action from Definition~\ref{defn.sigma}
are naturally expressed as nominal algebra judgements --- equalities subject to freshness conditions \cite{gabbay:nomuae}.
Some are listed for the reader's convenience in Figure~\ref{fig.nomalg}, which goes back to nominal axiomatic studies of substitution from \cite{gabbay:capasn,gabbay:capasn-jv}.

In this paper we are dealing with a concrete model, so the judgements in Figure~\ref{fig.nomalg} are not assumed and are not axioms.
Instead they must be proved; they are propositions and lemmas:
\begin{itemize*}
\item
\rulefont{\sigma\alpha} is Lemma~\ref{lemm.sigma.alpha}.
\item
\rulefont{\sigma\#} is Lemma~\ref{lemm.sigma.hash.always}.
\item
\rulefont{\sigma\sigma} is Proposition~\ref{prop.sigma.sigma}.
\item
\rulefont{\sigma swp} and \rulefont{\sigma asc} are Corollaries~\ref{corr.sigma.swap} and~\ref{corr.sigma.asc}.
\item
\rulefont{\sigma id} is Lemma~\ref{lemm.sigma.id}. 
\item
\rulefont{\sigma ren} is Lemma~\ref{lemm.sigma.ren}. 
\item
\rulefont{\sigma\at} is Lemma~\ref{lemm.sigma.at.b}.
\end{itemize*} 
These are familiar properties of substitution on syntax: for instance 
\begin{itemize*}
\item
\rulefont{\sigma\alpha} looks like an $\alpha$-equivalence property --- and indeed it is ---  and 
\item
\rulefont{\sigma\#} (Lemma~\ref{lemm.sigma.hash.always}) is sometimes called \emph{garbage collection} and corresponds to the property ``if $a$ is not free in $t$ then $t[a\sm s]=t$'', and
\item
\rulefont{\sigma\sigma} (Proposition~\ref{prop.sigma.sigma}) is often called the \emph{substitution lemma}.
See the discussion in Remark~\ref{rmrk.substitution.lemma}.
\end{itemize*}
But, the proofs of these properties that we see in this paper are not replays of the familiar syntactic properties.

This is because the $\sigma$-action on $\cred$ is not a simple `tree-grafting' operation ---  not even a capture-avoiding one --- because of \rulefont{\sigma\credelt a} in Figure~\ref{fig.sub}.
The proofs work, but we cannot take this for granted, and they require checking.
\end{rmrk}

\begin{figure}
$$
\begin{array}{l@{\qquad}r@{\ }r@{\ }l}
\rulefont{\sigma\alpha}&
b'\#Z\limp&Z[b\sm y]=&((b'\ b)\act Z)[b'\sm y]
\\
\rulefont{\sigma\#}& 
b\#Z\limp& Z[b\sm y]=&Z 
\\
\rulefont{\sigma\sigma}&
a\#y\limp&Z[a\sm x][b\sm y]=&Z[b\sm y][a\sm x[b\sm y]]
\\
\rulefont{\sigma swp}&
a\#y,\ b\#x\limp&Z[a\sm x][b\sm y]=&Z[b\sm y][a\sm x]
\\
\rulefont{\sigma asc}&
a\#y,\ b\#Z\limp&Z[a\sm x[b\sm y]]=&Z[a\sm x][b\sm y]
\\
\rulefont{\sigma id}&
& Z[a\sm \myatm{a}]=&Z  
\\
\rulefont{\sigma ren}&
a'\#Z\limp&Z[a\sm \myatm{a'}]=&(a'\ a)\act Z
\\
\rulefont{\sigma\at}&
c\#x\limp&(z\at c)[a\sm x]=&z[a\sm x]\at c
\end{array}
$$
\caption{Further nominal algebra properties of the $\sigma$-action}
\label{fig.nomalg}
\end{figure}

\subsubsection{Alpha-equivalence of the sigma-action}

\begin{lemm}[\rulefont{\sigma\alpha}]
\label{lemm.sigma.alpha}
Suppose $i{\in}\integer$ and $a,a'{\in}\atomsi{i}$ and $x{\in}\lset{i}$.
Suppose $Z{\in}\cred$ and $a'\#Z$ and $z{\in}\lset{k}$ and $a'\#z$.
Then:
\begin{enumerate*}
\item
$Z[a\sm x]=((a'\ a)\act Z)[a'\sm x]$ and $z[a\sm x]=((a'\ a)\act z)[a'\sm x]$.
\item\label{alpha.supp}
$\supp(Z[a\sm x])\subseteq(\supp(Z){\setminus}\{a\})\cup\supp(x)$ and
$\supp(z[a\sm x])\subseteq(\supp(z){\setminus}\{a\})\cup\supp(x)$.
\item\label{alpha.fresh}
If $a\#x$ then $a\#Z[a\sm x]$ and $a\#z[a\sm x]$.
\end{enumerate*}
\end{lemm}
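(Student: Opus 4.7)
\medskip

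\noindent\textbf{Proof plan.}
All three parts are proved simultaneously by well-founded induction on the lexicographic measure $(\level(a),\age(Z))$ (respectively $(\level(a),\age(z))$), exactly as in Proposition~\ref{prop.X.sigma}. Part~3 drops out of part~2 as an observation: if $a\#x$, then $a$ belongs to neither $\supp(Z){\setminus}\{a\}$ nor $\supp(x)$, so $a\#Z[a\sm x]$. So the real work is parts~1 and~2, and part~2 is itself partially `free': applying Theorem~\ref{thrm.equivar}(3) to the function $(Z,a,x)\mapsto Z[a\sm x]$ gives $\supp(Z[a\sm x])\subseteq\supp(Z)\cup\{a\}\cup\supp(x)$ for nothing, and the only content of part~2 is strengthening this by removing $a$ from the right-hand side. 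So in practice part~2 amounts, in each inductive case, to verifying that the concrete computation of $Z[a\sm x]$ never reintroduces $a$ itself.

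The cases for $\myand{\mathcal X}$, $\myneg{X}$, and $\myatm{b}$ (for $b\neq a$) are direct from the inductive hypothesis. In the binding cases $\myall{b}{X}$ and $\myst{c}{X}$ I would first use Lemma~\ref{lemm.tall.fresh}(\ref{tall.fresh.alpha.2}) to choose $b$ (resp.\ $c$) fresh for $a$, $a'$, and $x$, so that \rulefont{\sigma\credall} and \rulefont{\sigma\credst} apply on both sides of the equation, and then apply the inductive hypothesis to the body; part~2 in these cases follows because support removes the bound atom and the inductive bound contributes nothing new. The atom cases $\myatm{a}$, $\myatm{b}$ are by \rulefont{\sigma a}, \rulefont{\sigma b} and direct computation.

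The case $\myelt{y}{a}$ is the one where the induction is delicate and is where I expect the main obstacle. Split on $x$:
\begin{itemize*}
\item If $x=\myatm{n}$, then by \rulefont{\sigma\credelt\atm} both sides reduce to an expression with outer atom $n$ and inner subterm $y[a\sm\myatm{n}]$ resp.\ $((a'\ a)\act y)[a'\sm\myatm{n}]$; equality follows from the inductive hypothesis on $y$, which has strictly smaller $\age$ at the same $\level(a)$. For part~2, $\supp$ of the result is $\{n\}\cup\supp(y[a\sm\myatm{n}])$, and the inductive bound on the latter together with Corollary~\ref{corr.stuff} gives the required set.
\item If $x=\myst{a''}{X'}$, then by \rulefont{\sigma\credelt a} (choosing $a''$ fresh for $a,a',y,x$ using Lemma~\ref{lemm.tall.fresh}(\ref{tall.fresh.alpha.2})), the left-hand side is $X'[a''\sm y[a\sm x]]$ and the right-hand side is $X'[a''\sm ((a'\ a)\act y)[a'\sm x]]$. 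The inner equality $y[a\sm x]=((a'\ a)\act y)[a'\sm x]$ is the inductive hypothesis on $y$ (smaller $\age$, same $\level(a)$), so both outer substitutions are into the same term. Part~2 here is the critical moment: the outer substitution is at $a''{\in}\atomsi{i\minus 1}$, which is strictly lower level than $a$, so the outer call falls into the well-founded fragment and the inductive hypothesis on part~2 applies to bound $\supp(X'[a''\sm\cdot])$. The point freshness of $a''$ with respect to $a,a',x,y$, together with Proposition~\ref{prop.X.sigma}, lets us verify that $a$ does not reappear in the support of the outer result.
\end{itemize*}

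The hardest step is the last sub-case above, because the recursion goes through a substitution whose head atom $a''$ lies at a strictly lower level rather than at a structurally smaller subterm, so it is essential that the induction is ordered primarily by $\level(a)$; this is exactly what Proposition~\ref{prop.X.sigma} already set up, and once the same measure is reused here, everything lines up. The internal-set clauses (for $z\in\lset{k}$) are then straightforward: atoms are immediate, and internal comprehensions reduce via Lemma~\ref{lemm.conc.set}(2) and \rulefont{\sigma\credst} to a $\cred$-case already handled.
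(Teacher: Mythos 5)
Your plan is correct, but it diverges from the paper's proof in two linked respects, both centred on how part~2 is obtained. The paper proves part~1 by induction on $\age(Z)$ and $\age(z)$ \emph{alone}, with no level component and no $\minlev$ bookkeeping: as you yourself observe, in the critical case $\myelt{y}{a}$ with $x=\myst{a''}{X'}$ both sides collapse by \rulefont{\sigma\credelt a} to the same outer substitution $X'[a''\sm\,\cdot\,]$ applied to terms equated by the inductive hypothesis on $y$, so no recursive call ever passes through the outer substitution for part~1. The reason you need the heavier lexicographic measure $(\level(a),\age(Z))$ with the fixed-lower-bound-$k$ device of Proposition~\ref{prop.X.sigma} is that you prove part~2 case-by-case inside the induction, and in the \rulefont{\sigma\credelt a} case you apply the part-2 hypothesis to $X'[a''\sm w]$ at the strictly lower level $\level(a'')=i\minus 1$. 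The paper avoids this entirely: part~2 is derived from part~1 without induction, by noting that $Z[a\sm x]=((a'\ a)\act Z)[a'\sm x]$ for fresh $a'$, bounding the support of each presentation by conservation of support (Theorem~\ref{thrm.no.increase.of.supp}) and Proposition~\ref{prop.pi.supp}, and intersecting the two bounds using $a'\#Z$; part~3 then follows exactly as in your plan. Both routes are sound: the paper's buys a lighter lemma, reserving the level-based well-foundedness machinery for Propositions~\ref{prop.X.sigma} and~\ref{prop.sigma.sigma}, while yours buys a uniform concrete verification at the cost of re-importing that machinery here. If you keep your route, tidy three small points: the slogan that the computation ``never reintroduces $a$'' is too strong when $a\in\supp(x)$ (the content of part~2 is that any $a$ in the result must come from $\supp(x)$, not from $\supp(Z)$); in the critical case the needed bound $\supp(X')\subseteq\supp(x)\cup\{a''\}$ comes from Lemmas~\ref{lemm.supp.abs} and~\ref{lemm.conc.set}, not from Proposition~\ref{prop.X.sigma}, whose content is well-definedness and $\minlev$ bounds rather than support; and the routine case $\myelt{y}{b}$ with $b$ distinct from $a$ is missing from your case list, though it is handled directly by \rulefont{\sigma\credelt b} and the inductive hypothesis on $y$.
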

\begin{proof}
By induction on 
$$
\age(Z)
\quad\text{and}\quad \age(z).
$$
We consider the possibilities for $Z{\in}\cred$:
\begin{itemize}
\item
\emph{The case of $\myand{\mathcal X}$ for $\mathcal X\finsubseteq\cred$.}\quad
By Lemma~\ref{lemm.finpow.support} $a\#X'$ for every $X'{\in}\mathcal X$, so by the inductive hypothesis $X'[a\sm x]=((a'\ a)\act X')[a'\sm x]$.
We use \figref{fig.sub}{\sigma\credand} and Theorem~\ref{thrm.equivar}.
\item
\emph{The case of $\myneg{X}$ for $X{\in}\cred$.}\quad
By \figref{fig.sub}{\sigma\credneg} and the inductive hypothesis for $X$.
\item
\emph{The case of $\myall{b}{X}$ for $X{\in}\cred$ and $b{\in}\atomsj{j}$ for some $j{\in}\integer$.}\quad
Using Lemma~\ref{lemm.tall.fresh}(\ref{tall.fresh.alpha.1}) we may assume without loss of generality that $b\#x$.
We use \figref{fig.sub}{\sigma\credall} and the inductive hypothesis.
\item
\emph{The case of $\myelt{y}{a}$ for some $y{\in}\lset{i\minus 1}$.}\quad
There are two sub-cases:
\begin{itemize}
\item
\emph{Suppose $x{=}\myatm{n}$ for some $n{\in}\atomsi{i}$.}\quad
We reason as follows:
$$
\begin{array}{r@{\ }l@{\quad}l}
(\myelt{y}{a})[a\sm \myatm{n}]
=&
\myelt{y[a\sm \myatm{n}]}{n}
&\figref{fig.sub}{\sigma\credelt\atm}
\\
=&
\myelt{((a'\ a)\act y)[a'\sm \myatm{n}]}{n}
&\text{Ind hyp for $y$}
\\
=&
(\myelt{((a'\ a)\act y)}{a'})[a'\sm \myatm{n}]
&\figref{fig.sub}{\sigma\credelt\atm}
\\
=&
((a'\ a)\act (\myelt{y}{a}))[a'\sm \myatm{n}]
&\text{Theorem~\ref{thrm.equivar}}
\end{array}
$$
\item
\emph{Suppose $x{=}\myst{b'}{X'}$ where $X'=x\at b'$ for some fresh $b'{\in}\atomsi{i\minus 1}$ (so $b'\#x,y,z$).}\quad
\\
We reason as follows:
$$
\begin{array}{@{\hspace{-2em}}r@{\ }l@{\quad}l}
(\myelt{y}{a})[a\sm x]
=&
X'[b'\sm y[a\sm x]]
&\figref{fig.sub}{\sigma\credelt a}
\\
=&
X'[b'\sm ((a'\ a)\act y)[a'\sm x]]
&\text{Ind hyp for $y$}
\\
=&
(\myelt{(a'\ a)\act y}{a'})[a'\sm x]
&\figref{fig.sub}{\sigma\credelt a}
\end{array}
$$
\end{itemize}
\item
\emph{The case $\myelt{y}{b}$ for $j{\in}\integer$ and $b{\in}\atomsj{j}$ and $y{\in}\lset{j\minus 1}$.}\quad
We reason as follows:
$$
\begin{array}{r@{\ }l@{\quad}l}
(\myelt{y}{b})[a\sm x]
=&
\myelt{y[a\sm x]}{b}
&\figref{fig.sub}{\sigma\credelt b}
\\
=&
\myelt{((a'\ a)\act y)[a'\sm x]}{b}
&\text{Ind hyp for $y$}
\\
=&
(\myelt{((a'\ a)\act y)}{b})[a'\sm x]
&\figref{fig.sub}{\sigma\credelt b}
\\
=&
((a'\ a)\act (\myelt{y}{b}))[a'\sm x]
&\text{Theorem~\ref{thrm.equivar}}
\end{array}
$$
\end{itemize}
We consider the possibilities for $z{\in}\lset{k}$:
\begin{itemize}
\item
\emph{The case that $z$ is an internal atom.}\quad
We use \rulefont{\sigma a} or \rulefont{\sigma b} of Figure~\ref{fig.sub}.
\item
\emph{The case that $z$ is an internal comprehension.}\quad
We use 
Lemma~\ref{lemm.conc.set}(2\&3)
for a fresh $c{\in}\atomsi{k\minus 1}$ (so $c\#z$), \rulefont{\sigma\credst}, and the inductive hypothesis.
\end{itemize}

For part~2, we note that by Theorem~\ref{thrm.no.increase.of.supp} and Proposition~\ref{prop.pi.supp} 
$$
\begin{array}{r@{\ }l}
\supp(Z[a\sm x])\subseteq&\supp(Z)\cup\{a\}\cup\supp(x)
\quad\text{and}
\\
\supp(((a'\ a)\act Z)[a'\sm x])\subseteq&(a'\ a)\act \supp(Z)\cup\{a'\}\cup\supp(x) .
\end{array}
$$
We take a sets intersection.
The case of $z$ is similar.

Part~3 follows, recalling from Notation~\ref{nttn.fresh} that $a\#x$ means $a{\not\in}\supp(x)$. 
\end{proof}

\begin{rmrk}
\emph{Note for experts:}\ We could set Definition~\ref{defn.sigma} up differently: we could have $\sigma_i$ and $\sigma_{ij}$ input abstractions 
$$
\sigma_i:([\atomsi{i}]\cred)\times\lset{i}\to \cred
\quad\text{and}\quad
\sigma_{ij}:([\atomsi{i}]\lset{j})\times\lset{i}\to \lset{j}
.
$$
Then Lemma~\ref{lemm.sigma.alpha} would become immediate from Lemmas~\ref{lemm.supp.abs} and~\ref{lemm.abs.alpha} and Theorem~\ref{thrm.no.increase.of.supp}.
This is nice but note that we incur a well-definedness proof-obligation that the choice of name for the abstracted atom does not matter.
There is probably still a net gain but it is not quite as great as it might first seem.
For this reason, we use the more elementary set-up in Definition~\ref{defn.sigma}.
\end{rmrk}

\subsubsection{Property \rulefont{\sigma\#} (garbage collection)}

\begin{lemm}[\rulefont{\sigma\#}]
\label{lemm.sigma.hash.always}
Suppose $i{\in}\integer$ and $a{\in}\atomsi{i}$ and $x{\in}\lset{i}$ and $Z{\in}\cred$ and $z{\in}\lset{k}$ for $k{\in}\integer$.
Then 
$$
\begin{array}{l@{\ }r@{\ }l}
a\#Z\limp& Z[a\sm x]=&Z 
\\
a\#z\limp& z[a\sm x]=&z.
\end{array}
$$
\end{lemm}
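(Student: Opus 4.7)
The plan is to prove both statements simultaneously by induction on $\age(Z)$ and $\age(z)$, parallel to the case analysis in Lemma~\ref{lemm.sigma.alpha}. The key observation that makes the argument go through smoothly is that the hypothesis $a\#Z$ (respectively $a\#z$) rules out precisely the delicate case of Figure~\ref{fig.sub} in which the $\sigma$-action could trigger a nested substitution at a strictly lower level.

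For $Z\in\cred$ I would proceed by cases on the head constructor. For $\myand{\mathcal X}$, Lemma~\ref{lemm.finpow.support}(\ref{item.finpow.support.subset}) gives $a\#X'$ for every $X'{\in}\mathcal X$, and \figref{fig.sub}{\sigma\credand} together with the inductive hypothesis applied to each such $X'$ yields $(\myand{\mathcal X})[a\sm x]=\myand{\mathcal X}$. The case $\myneg{X}$ is a direct application of \figref{fig.sub}{\sigma\credneg} and the IH. For $\myall{b}{X}$, I would first use Lemma~\ref{lemm.tall.fresh}(\ref{tall.fresh.alpha.1}) to pick the bound atom $b$ fresh for $x,a$; then $a\#\myall{b}{X}$ together with $a\neq b$ forces $a\#X$ (by Lemma~\ref{lemm.tall.fresh}(\ref{tall.fresh.supp}) and Corollary~\ref{corr.stuff}), and the IH plus \figref{fig.sub}{\sigma\credall} finishes. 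For $\myelt{y}{c}$, the crucial point is that $a\#\myelt{y}{c}$ implies both $a\neq c$ and $a\#y$, so we are \emph{always} in case \rulefont{\sigma\credelt b}, never in \rulefont{\sigma\credelt\atm} or \rulefont{\sigma\credelt a}; the result then follows immediately from the IH applied to $y$.

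For $z\in\lset{k}$, the internal-atom case is handled by \rulefont{\sigma b} of Figure~\ref{fig.sub} (the case $\myatm{a}$ cannot arise because $a\#\myatm{a}$ fails). For an internal comprehension, I would choose a fresh $c{\in}\atomsi{k\minus 1}$ with $c\#x,z$, use Lemma~\ref{lemm.conc.set}(2,3) to write $z=\myst{c}{z\at c}$ with strictly smaller age, and then apply \figref{fig.sub}{\sigma\credst} together with the IH on $z\at c$ (noting that $a\#z$ and $c\neq a$ imply $a\#z\at c$).

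The step I expect to require the most care is the \myelt\ case: one must confirm rigorously that $a\#\myelt{y}{c}$ rules out $c=a$, since otherwise \rulefont{\sigma\credelt a} could invoke the $\sigma$-action at a lower level and the induction on age alone would not suffice. But this is precisely why the hypothesis $a\#Z$ is so well-adapted to the problem: the `dangerous' rewrite in \rulefont{\sigma\credelt a} fires only when the atom being substituted occurs as the right-hand argument of $\myelt{-}{-}$, which is exactly the situation that $a\#Z$ excludes.
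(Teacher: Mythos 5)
Your proposal is correct and follows essentially the same route as the paper's proof: induction on $\age(Z)$ and $\age(z)$, the same case analysis on head constructors using Lemma~\ref{lemm.finpow.support}, Lemma~\ref{lemm.tall.fresh}, Lemma~\ref{lemm.conc.set} and the rules of Figure~\ref{fig.sub}, with the key point that $a\#Z$ makes the $\myelt{y}{a}$ case (and hence \rulefont{\sigma\credelt a}/\rulefont{\sigma\credelt\atm}) impossible. The only differences are presentational: you merge the two $\credelt$ cases and spell out a few freshness deductions (e.g.\ $a\#X$ in the $\credall$ case, $a\#(z\at c)$ in the comprehension case) that the paper leaves implicit.
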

\begin{proof}
By induction on 
$$
\age(Z)\quad\text{and}\quad \age(z).
$$ 
We consider the possibilities for $Z{\in}\cred$:  
\begin{itemize*}
\item
\emph{The case of $\myand{\mathcal X}$ for $\mathcal X\finsubseteq\cred$.}\quad

By \figref{fig.sub}{\sigma\credand} $(\myand{\mathcal X})[a\sm x]=\myand{\{X[a\sm x]\mid X{\in}\mathcal X\}}$.
By Lemma~\ref{lemm.finpow.support}(\ref{item.finpow.support.subset}) $a\#X$ for every $X{\in}\mathcal X$.
We use the inductive hypothesis on each $X$.
\item
\emph{The case of $\myneg{X}$ for $X{\in}\cred$.}\quad

By \figref{fig.sub}{\sigma\credneg} $\myneg{X}[a\sm x]=\myneg{X[a\sm x]}$.
We use the inductive hypothesis on $X$.
\item
\emph{The case of $\myall{b}{X}$ for $X{\in}\cred$ and $b{\in}\atomsj{j}$ for some $j{\in}\integer$.}

Using Lemma~\ref{lemm.tall.fresh}(\ref{tall.fresh.alpha.1}) we may assume without loss of generality that $b\#x$.
By \figref{fig.sub}{\sigma\credall} $(\myall{b}{X})[a\sm x]=\myall{b}{(X[a\sm x])}$.
We use the inductive hypothesis on $X$.
\item
\emph{The case of $\myelt{y}{a}$ for $i{\in}\integer$ and $y{\in}\lset{i\minus 1}$.}\quad

This is impossible because we assumed $a\#Z$.
\item
\emph{The case of $\myelt{y}{b}$ for $j{\in}\integer$ and $b{\in}\atomsj{j}$ and $y{\in}\lset{j\minus 1}$.}\quad

By \figref{fig.sub}{\sigma\credelt b} $(\myelt{y}{b})[a\sm x]=\myelt{y[a\sm x]}{b}$.
We use the inductive hypothesis on $y$.
\end{itemize*}
We consider the possibilities for $z{\in}\lset{k}$:
\begin{itemize}
\item
If $z$ is an internal atom then we reason using \rulefont{\sigma a} or \rulefont{\sigma b} of Figure~\ref{fig.sub}.
\item
If $z$ is an internal comprehension then we use Lemma~\ref{lemm.conc.set}(2\&3)
for a fresh $c{\in}\atomsi{k\minus 1}$ (so $c\#z$), \rulefont{\sigma\credst}, and the inductive hypothesis.
\qedhere\end{itemize}
\end{proof}

Recall $\credfalse{=}\myor{\varnothing}$ and $\credtrue{=}\myand{\varnothing}$ from Example~\ref{xmpl.credemp}.
Corollary~\ref{corr.credemp.sigma} is an easy consequence of Lemma~\ref{lemm.sigma.hash.always} and will be useful later:
\begin{corr}
\label{corr.credemp.sigma} 
Suppose $i{\in}\integer$ and $a{\in}\atomsi{i}$ and $x{\in}\lset{i}$.
Then
$$
\credfalse[a\sm x]=\credfalse 
\quad\text{and}\quad
\credtrue[a\sm x]=\credtrue
.
$$
\end{corr}
\begin{proof}
By Theorem~\ref{thrm.no.increase.of.supp} $\supp(\credfalse){=}\varnothing$ so that $a\#x$.
We use Lemma~\ref{lemm.sigma.hash.always}.
Similarly for $\credtrue$.
\end{proof}

\subsubsection{$\sigma$ commutes with atoms-concretion}

Lemma~\ref{lemm.sigma.at.b} will be useful later, starting with Proposition~\ref{prop.sigma.sigma}:
\begin{lemm}[\rulefont{\sigma\at}]
\label{lemm.sigma.at.b}
Suppose $i{\in}\integer$ and $a{\in}\atomsi{i}$ and $x{\in}\lset{i}$.
Suppose $k{\in}\integer$ and $z\in\lset{k}$ is an internal comprehension and $c{\in}\atomsi{k\minus 1}$ and $c\#z,x$.
Then 
$$
(z\at c)[a\sm x]=z[a\sm x]\at c.
$$
($z\at c$ is from Notation~\ref{nttn.xata}.) 
\end{lemm}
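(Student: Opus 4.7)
The plan is to unfold $z$ using the given atom $c$, push the substitution past the binder via \rulefont{\sigma\credst}, and then reverse the abstraction by concretion.

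First, since $z\in\lset{k}$ is an internal comprehension and $c\#z$, Lemma~\ref{lemm.conc.set}(2) lets me write $z = \myst{c}{X}$ where I set $X = z\at c$, noting that $X\in\cred$ by Lemma~\ref{lemm.conc.set}(1). So the LHS is just $X[a\sm x]$.

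Next, by the permutative convention of Definition~\ref{defn.atoms} the atoms $a$ and $c$ are distinct; since $c\#x$ is assumed, \figref{fig.sub}{\sigma\credst} applies and gives
$$z[a\sm x] = (\myst{c}{X})[a\sm x] = \myst{c}{(X[a\sm x])}.$$
Then concretion inverts abstraction via Lemma~\ref{lemm.conc.set}(4):
$$z[a\sm x]\at c = (\myst{c}{X[a\sm x]})\at c = X[a\sm x] = (z\at c)[a\sm x],$$
which is the desired equality.

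There is no serious obstacle; the argument is essentially a one-line unfolding. The only points to verify are (i) the permutative convention forces $c\neq a$, so \rulefont{\sigma\credst} applies directly with no alpha-renaming needed, and (ii) the concretion $z[a\sm x]\at c$ on the right-hand side is well-defined. The latter becomes automatic once $z[a\sm x]$ has been rewritten as $\myst{c}{X[a\sm x]}$, since then $c\#z[a\sm x]$ by Lemma~\ref{lemm.supp.abs} (or alternatively by Lemma~\ref{lemm.sigma.alpha}(\ref{alpha.supp}) using $c\#z$ and $c\#x$).
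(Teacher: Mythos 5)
Your proof is correct and follows essentially the same route as the paper: write $z=\myst{c}{(z\at c)}$ via Lemma~\ref{lemm.conc.set}(2), push the substitution under the binder with \figref{fig.sub}{\sigma\credst} using $c\#x$, and undo the abstraction with Lemma~\ref{lemm.conc.set}(4). The additional remarks on $c\neq a$ (permutative convention) and the well-definedness of $z[a\sm x]\at c$ are sound and merely make explicit what the paper leaves implicit.
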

\begin{proof}
By Lemma~\ref{lemm.conc.set}(2) we may write $z=\myst{c}{Z}$ where $Z=z\at c$.
We reason as follows:
\begin{align*}
\begin{array}[b]{r@{\ }l@{\qquad}l}
(z\at c)[a\sm x]
=&
Z[a\sm x]
&Z=z\at c
\\
=&
\myst{c}{(Z[a\sm x])}\at c 
&\text{Lemma~\ref{lemm.conc.set}(2)}
\\
=&
(\myst{c}{Z}[a\sm x])\at c 
&\figref{fig.sub}{\sigma\credst}, c\#x
\\
=&
(z[a\sm x])\at c 
&z=\myst{c}{Z}                    \tag*{\qEd}
\end{array}
\end{align*}
\def\popQED{}
\end{proof}

\subsubsection{$\sigma$ commutes with itself: the `substitution lemma'}

\begin{prop}
\label{prop.sigma.sigma}
Suppose $Z{\in}\cred$ and $k{\in}\integer$ and $z{\in}\lset{k}$. 
Suppose $i{\in}\integer$ and $a{\in}\atomsi{i}$ and $x{\in}\lset{i}$ and suppose $j{\in}\integer$ and $b{\in}\atomsj{j}$ and $y{\in}\lset{j}$ and $a\#y$.
Then 
$$
\begin{array}{r@{\ }l}
Z[a\sm x][b\sm y] =& Z[b\sm y][a\sm x[b\sm y]]
\\
z[a\sm x][b\sm y] =& z[b\sm y][a\sm x[b\sm y]] .
\end{array}
$$
\end{prop}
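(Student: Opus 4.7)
The plan is to adapt the inductive strategy from Proposition~\ref{prop.X.sigma}: fix $k\in\integer$ with $k\leq\minlev(Z,a,b,x,y)$ (and likewise for $z$) and do a simultaneous induction on the lexicographic orderings $(\level(a)+\level(b),\,\age(Z))$ and $(\level(a)+\level(b),\,\age(z))$, each bounded below by $(2k,0)$ and so well-founded. Using the \emph{sum} of levels rather than their maximum is essential: when a recursion forces us to replace $a$ by $a'\in\atomsi{\level(a)-1}$, the level-sum strictly decreases by~$1$ regardless of $\level(b)$. As in Proposition~\ref{prop.X.sigma} one checks that the recursion never leaves the class $\minlev\geq k$, in particular because $\minlev(x)\leq\level(a')$ when $x=\myst{a'}{X'}$.

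Case analysis on $Z$ (resp.\ $z$) is routine outside the cases where $\figref{fig.sub}{\sigma\credelt a}$ fires non-structurally on a comprehension-valued substitutand. For $\myand{\mathcal X}$, $\myneg{X}$, $\myall{c}{X}$, and $\myelt{w}{c}$ with $c\notin\{a,b\}$, $\alpha$-convert bound atoms to be fresh for $x,y$ using Lemma~\ref{lemm.tall.fresh}(\ref{tall.fresh.alpha.1}) and apply the IH directly to subterms. The $\myatm{a}$ and $\myatm{b}$ cases are direct calculations using the assumption $a\#y$ together with Lemma~\ref{lemm.sigma.hash.always} ($\sigma\#$). Internal comprehensions $z=\myst{c}{Z'}$ reduce to the $\cred$-case via Lemma~\ref{lemm.conc.set} and the rule $\figref{fig.sub}{\sigma\credst}$ after choosing $c$ fresh.

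The case $Z=\myelt{w}{a}$ with $x=\myst{a'}{X'}$, having $\alpha$-converted so that $a'\#y$, is handled by two IH applications. Unfolding both sides using the rules of Figure~\ref{fig.sub} puts them into a form $X'[\cdots][\cdots]$. The IH on $X'$ with atoms $(a',b)$ is permitted because the level-sum strictly decreases (by~$1$), and the IH on $w$ with atoms $(a,b)$ is permitted because $\age(w)<\age(Z)$ at the same level-sum. Composing them aligns LHS and RHS.

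The trickiest case is $Z=\myelt{w}{b}$ with $y=\myst{b'}{Y'}$. Choosing $b'$ fresh for $x,y$ with $b'\neq a$, the rules in Figure~\ref{fig.sub} unfold LHS to $Y'[b'\sm w[a\sm x][b\sm y]]$ and RHS to $Y'[b'\sm w[b\sm y]][a\sm x[b\sm y]]$. The IH on $w$ rewrites LHS as $Y'[b'\sm w[b\sm y][a\sm x[b\sm y]]]$. The IH on $Y'$ with atoms $(b',a)$, permitted since $\level(b')<\level(b)$ and $b'\#x[b\sm y]$ by Lemma~\ref{lemm.sigma.alpha}(\ref{alpha.supp}), rewrites RHS as $Y'[a\sm x[b\sm y]][b'\sm w[b\sm y][a\sm x[b\sm y]]]$. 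The crucial final step: $a\#y$ together with $b'\neq a$ yields $a\#Y'$ (by Lemma~\ref{lemm.supp.abs} applied to $\myst{b'}{Y'}$), so Lemma~\ref{lemm.sigma.hash.always} collapses $Y'[a\sm x[b\sm y]]$ to $Y'$, matching LHS. This last case is the main obstacle: the naive IH application leaves an extra substitution on the RHS, and the freshness hypothesis $a\#y$ must be channelled through the previously established $\sigma\#$ lemma to eliminate it.
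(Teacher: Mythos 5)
Your proposal is correct and takes essentially the same route as the paper: the same lexicographic induction on (sum of levels, age) restricted to a $\minlev$-bounded class, the same pair of IH applications (one level-decreasing at $a'$, one age-decreasing on the subterm) in the $\myelt{w}{a}$-with-comprehension case, and the same combination of $a\#Y'$, \rulefont{\sigma\#}, and a level-decreasing IH at $b'$ in the $\myelt{w}{b}$ case, merely in the opposite order (the paper inserts the vacuous substitution via \rulefont{\sigma\#} and then commutes, you commute and then collapse it). The only divergence is that the paper also spells out the atom-valued sub-cases (e.g.\ $x=\myatm{b}$ with $y$ a comprehension), which you fold into `routine'; these are indeed discharged by the age-decreasing IH alone, so nothing essential is missing.
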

\begin{proof}
For brevity we may write 
$$
\sigma\text{ for }[a\sm x][b\sm y]\quad\text{and}\quad 
\sigma'\text{ for }[b\sm y][a\sm x[b\sm y]].
$$
Fix some $k{\in}\integer$. 
We prove the Lemma for all $Z,a,x,b,y$ and $z,a,x,b,y$ with $\minlev(Z,a,x,b,y){\geq}k$ and $\minlev(z,a,x,b,y){\geq}k$ (Definition~\ref{defn.minlev}), reasoning by induction on 
$$
(\level(a){+}\level(b),\age(Z))
\quad\text{and}\quad 
(\level(a){+}\level(b),\age(z))
$$ 
lexicographically ordered.
Since $k$ was arbitrary, this suffices to prove it for all $Z,a,x,b,y$ and $z,a,x,b,y$. 

We consider the possibilities for $Z{\in}\cred$:
\begin{itemize}
\item
\emph{The case of $\myand{\mathcal X}$ for $\mathcal X\finsubseteq\cred$.}
We use rule \rulefont{\sigma\credand} of Figure~\ref{fig.sub} and the inductive hypothesis.
\item
\emph{The case of $\myneg{X}$ for $X{\in}\cred$.}
We use \rulefont{\sigma\credneg} of Figure~\ref{fig.sub} and the inductive hypothesis.
\item
\emph{The case of $\myall{a'}{X}$ for $X{\in}\cred$ and $a'{\in}\atomsi{i'}$ for some $i'{\in}\integer$.}
We use Lemma~\ref{lemm.tall.fresh}(\ref{tall.fresh.alpha.1}) to assume without loss of generality that $a'\#x,y$, and then we use \rulefont{\sigma\credall} of Figure~\ref{fig.sub} and the inductive hypothesis.
\item
\emph{The case of $\myelt{z}{b}$ for $z{\in}\lset{j\minus 1}$ where $j{\in}\integer$.}\quad
There are two sub-cases:
\begin{itemize}
\item
\emph{Suppose $y{=}\myatm{n}$ for some $n{\in}\atomsi{i}$ other than $a$} (we assumed $a\#y$ so $n{=}a$ is impossible).\quad

We reason as follows:
$$
\begin{array}{@{\hspace{-2em}}r@{\ }l@{\quad}l}
(\myelt{z}{b})&[a\sm x][b\sm \myatm{n}]
\\
=&
(\myelt{z[a\sm x]}{b})[b\sm \myatm{n}]
&\figref{fig.sub}{\sigma\credelt b}
\\
=&
\myelt{z\sigma}{n}
&\figref{fig.sub}{\sigma\credelt\atm}
\\
=&
\myelt{z\sigma'}{n}
&\text{IH }\age(z){<}\age(\myelt{z}{b}),\ a\#y
\\
=&
(\myelt{z[b\sm\myatm{n}]}{n})[a\sm x[b\sm \myatm{n}]]
&\figref{fig.sub}{\sigma\credelt b}
\\
=&
(\myelt{z}{b})[b\sm \myatm{n}][a\sm x[b\sm \myatm{n}]]
&\figref{fig.sub}{\sigma\credelt\atm}
\end{array}
$$
\item
\emph{Suppose $y{=}\myst{b'}{Y'}$ where $Y'=y\at b'$ for some fresh $b'{\in}\atomsj{j\minus 1}$ (so $b'\#z,x,y$ and $k{\leq}\level(b')$).}\quad

Note by Theorem~\ref{thrm.no.increase.of.supp} that $a\#Y'$ and $b'\#x[b\sm y]$.
We reason as follows:
$$
\begin{array}{@{\hspace{-2em}}r@{\ }l@{\quad}l}
(\myelt{z}{b})&[a\sm x][b\sm y]
\\
=&
(\myelt{z[a\sm x]}{b})[b\sm y]
&\figref{fig.sub}{\sigma\credelt b}
\\
=&
Y'[b'\sm z\sigma]
&\figref{fig.sub}{\sigma\credelt a}
\\
=&
Y'[b'\sm z\sigma']
&\text{IH }\age(z){<}\age(\myelt{z}{b}),\ a\#y
\\
=&
Y'[a\sm x[b\sm y]][b'\sm z\sigma']
&\text{Lemma~\ref{lemm.sigma.hash.always}},\ a\#Y'
\\
=&
Y'[b'\sm z[b\sm y]]\,[a\sm x[b\sm y]]
&\text{IH }\level(b'){<}\level(b),\ b'\#x[b\sm y]
\\
=&
(\myelt{z}{b})[b\sm y][a\sm x[b\sm y]]
&\figref{fig.sub}{\sigma\credelt a}
\end{array}
$$
\end{itemize}
\item
\emph{The case of $\myelt{z}{a}$ for $z{\in}\lset{i\minus 1}$ where $i{\in}\integer$.}\quad
There are two sub-cases:
\begin{itemize}
\item
\emph{Suppose $x{=}\myatm{n}$ for some $n{\in}\atomsi{i}$.}\quad

If $n{\neq}b$ then we reason as follows:
$$
\begin{array}{@{\hspace{-2em}}r@{\ }l@{\quad}l}
(\myelt{z}{a})&[a\sm \myatm{n}][b\sm y]
\\
=&
(\myelt{z[a\sm\myatm{n}]}{n})[b\sm y]
&\figref{fig.sub}{\sigma\credelt\atm}
\\
=&
\myelt{z\sigma}{n}
&\figref{fig.sub}{\sigma\credelt b}
\\
=&
\myelt{z\sigma'}{n}
&\text{IH }\age(z){<}\age(\myelt{z}{a}),\ a\#y
\\
=&
\myelt{z[b\sm y][a\sm\myatm{n}]}{n}
&\rulefont{\sigma b}\ n{\neq} b
\\
=&
(\myelt{z[b\sm y]}{a})[a\sm\myatm{n}]
&\figref{fig.sub}{\sigma\credelt\atm}
\\
=&
(\myelt{z[b\sm y]}{a})[a\sm\myatm{n}[b\sm y]]
&\rulefont{\sigma b}\ n{\neq}b
\\
=&
(\myelt{z}{a})[b\sm y][a\sm\myatm{n}[b\sm y]]
&\figref{fig.sub}{\sigma\credelt b}
\end{array}
$$
If $n{=}b$ so that $x{=}\myatm{b}$, and $y{=}\myatm{m}$ for some $m{\in}\atomsj{j}$ other than $a$, then we reason as follows:
$$
\begin{array}{@{\hspace{-2em}}r@{\ }l@{\quad}l}
(\myelt{z}{a})&[a\sm \myatm{b}][b\sm \myatm{m}]
\\
=&
(\myelt{z[a\sm\myatm{b}]}{b})[b\sm \myatm{m}]
&\figref{fig.sub}{\sigma\credelt\atm}
\\
=&
\myelt{z[a\sm\myatm{b}][b\sm\myatm{m}]}{m}
&\figref{fig.sub}{\sigma\credelt\atm}
\\
=&
\myelt{z[b\sm\myatm{m}][a\sm\myatm{b}[b\sm\myatm{m}]]}{m}
&\text{IH }\age(z){<}\age(\myelt{z}{a}),\ a\#m
\\
=&
(\myelt{z[b\sm\myatm{m}]}{m})[a\sm\myatm{b}[b\sm\myatm{m}]]
&\figref{fig.sub}{\sigma\credelt b}
\\
=&
(\myelt{z}{b})[b\sm\myatm{m}][a\sm\myatm{b}[b\sm\myatm{m}]]
&\figref{fig.sub}{\sigma\credelt\atm}
\end{array}
$$
If $n{=}b$ so that $x{=}\myatm{b}$, and $y{=}\myst{b'}{Y'}$ where $Y'=y\at b'$ for some fresh $b'{\in}\atomsj{j\minus 1}$ (so $b'\#z,n,y$ and $k{\leq}\level(b')$) then we reason as follows (note by Theorem~\ref{thrm.no.increase.of.supp} that $a\#Y'$ and $b'\#x[b\sm y]$):
$$
\begin{array}{@{\hspace{-2em}}r@{\ }l@{\quad}l}
(\myelt{z}{a})&[a\sm \myatm{b}][b\sm y]
\\
=&
(\myelt{z[a\sm\myatm{b}]}{b})[b\sm y]
&\figref{fig.sub}{\sigma\credelt\atm}
\\
=&
Y'[b'\sm z\sigma]
&\figref{fig.sub}{\sigma\credelt a}
\\
=&
Y'[b'\sm z\sigma']
&\text{IH }\age(z){<}\age(\myelt{z}{a}),\ a\#y
\\
=&
Y'[a\sm\myatm{b}[b\sm y]][b'\sm z\sigma']
&\text{Lemma~\ref{lemm.sigma.hash.always}},\ a\#Y'
\\
=&
Y'[b'\sm z[b\sm y]][a\sm\myatm{b}[b\sm y]]
&\text{IH }\level(b'){<}\level(b),\ b'\#x[b\sm y]
\\
=&
(\myelt{z}{a})[b\sm y][a\sm\myatm{b}[b\sm y]]
&\figref{fig.sub}{\sigma\credelt b}
\end{array}
$$
\item
\emph{Suppose $x{=}\myst{a'}{X'}$ where $X'=x\at a'$ for some fresh $a'{\in}\atomsi{i\minus 1}$ (so $a'\#z,x,y$ and $k{\leq}\level(a')$).}\quad

We reason as follows:
$$
\begin{array}{@{\hspace{-2em}}r@{\ }l@{\quad}l}
(\myelt{z}{a})[a\sm x][b\sm y]
=&
X'[a'\sm z[a\sm x]]\,[b\sm y]
&\figref{fig.sub}{\sigma\credelt a}
\\
=&
X'[b\sm y][a'\sm z\sigma]
&\text{IH }k{\leq}\level(a'){=}\level(a)\minus 1,\ a'\#y
\\
=&
X'[b\sm y][a'\sm z\sigma']
&\text{IH }\age(z){<}\age(\myelt{z}{a}),\ a\#y
\\
=&
(x[b\sm y]\at a')[a'\sm z\sigma']
&\text{Lemma~\ref{lemm.sigma.at.b}},\ a'\#y
\\
=&
(\myelt{z[b\sm y]}{a})[a\sm x[b\sm y]]
&\figref{fig.sub}{\sigma\credelt a}
\\
=&
(\myelt{z}{a})[b\sm y][a\sm x[b\sm y]]
&\figref{fig.sub}{\sigma\credelt a},\ a\#y
\end{array}
$$
\end{itemize}
\item
\emph{The case of $\myelt{z}{c}$ for $k{\in}\integer$ and $c{\in}\atomsi{k}$ and $z{\in}\lset{k\minus 1}$.}\quad
We reason as follows:
$$
\begin{array}{@{\hspace{-2em}}r@{\ }l@{\quad}l}
(\myelt{z}{c})[a\sm x][b\sm y]
=&
\myelt{z\sigma}{c}
&\text{\figref{fig.sub}{\sigma\credelt b}, twice}
\\
=&
\myelt{z\sigma'}{c}
&\text{IH }\age(z){<}\age(\myelt{z}{a}),\ a\#y
\\
=&
(\myelt{z}{c})[b\sm y][a\sm x[b\sm y]]
&\text{\figref{fig.sub}{\sigma\credelt b}, twice}
\end{array}
$$
\end{itemize}
We consider the possibilities for $z{\in}\lset{k}$:
\begin{itemize}
\item
If $z$ is an internal atom then we reason using \rulefont{\sigma a} and \rulefont{\sigma b} of Figure~\ref{fig.sub}.
\item
If $z$ is an internal comprehension then we use 
Lemma~\ref{lemm.conc.set}(2\&3)
for a fresh $c{\in}\atomsi{k\minus 1}$ (so $c\#z$), \rulefont{\sigma\credst}, and the inductive hypothesis.
\qedhere\end{itemize}
\end{proof}

\begin{rmrk}
\label{rmrk.substitution.lemma}
Were Proposition~\ref{prop.sigma.sigma} about the syntax of first-order logic or the $\lambda$-calculus, then it could be called \emph{the substitution lemma}, and the proof would be a routine induction on syntax.

In fact, even in the case of first-order logic or the $\lambda$-calculus, the proof is not routine.  Issues with binders (\figref{fig.sub}{\sigma\credelt a}, and one explicit in \rulefont{\sigma\credst}) were the original motivation for my thesis \cite{gabbay:thesis} and for nominal techniques in general. 

For a standard non-rigorous non-nominal proof of the substitution lemma see \cite{barendregt:lamcss}; for a detailed discussion of the lemma in the context of Nominal Isabelle, see \cite{urban:sub-discussion} which includes many further references.

But the proof of Proposition~\ref{prop.sigma.sigma} is not just a replay of the proofs; neither in the `classic' sense of \cite{barendregt:lamcss} nor in the `nominal' sense of \cite{gabbay:thesis,urban:sub-discussion}.
This is because of the interaction of $\credelt$ with the $\sigma$-action, mostly because of \rulefont{\sigma\credelt a} (to a lesser extent also because of the nominal binder \rulefont{\sigma\credst}).
\end{rmrk}

\begin{corr}[\rulefont{\sigma swp}]
\label{corr.sigma.swap}
Suppose $Z{\in}\cred$ and $k{\in}\integer$ and $z{\in}\lset{k}$. 
Suppose $i{\in}\integer$ and $a{\in}\atomsi{i}$ and $x{\in}\lset{i}$ and suppose $j{\in}\integer$ and $b{\in}\atomsj{j}$ and $y{\in}\lset{j}$.
Suppose $a\#y$ and $b\#x$.
Then 
$$
\begin{array}{r@{\ }l}
Z[a\sm x][b\sm y] =& Z[b\sm y][a\sm x]
\\
z[a\sm x][b\sm y] =& z[b\sm y][a\sm x] .
\end{array}
$$
\end{corr}
\begin{proof}
From Proposition~\ref{prop.sigma.sigma} and Lemma~\ref{lemm.sigma.hash.always}. 
\end{proof}

\begin{corr}[\rulefont{\sigma asc}]
\label{corr.sigma.asc}
Suppose $Z{\in}\cred$ and $k{\in}\integer$ and $z{\in}\lset{k}$. 
Suppose $i{\in}\integer$ and $a{\in}\atomsi{i}$ and $x{\in}\lset{i}$ and suppose $j{\in}\integer$ and $b{\in}\atomsj{j}$ and $y{\in}\lset{j}$.
Suppose $a\#y$ and $b\#Z,z$.\footnote{We expect a stronger version of Corollary~\ref{corr.sigma.asc} to be possible in which we do not assume $a\#y$.
However, the proof would require an induction resembling the proof of Proposition~\ref{prop.sigma.sigma} ---  the proof assuming $a\#y$ can piggyback on the induction already given in Proposition~\ref{prop.sigma.sigma}.
We will not need this stronger version, so we do not bother.}
Then 
$$
\begin{array}{r@{\ }l}
Z[a\sm x[b\sm y]] =& Z[a\sm x][b\sm y]
\\
z[a\sm x[b\sm y]] =& z[a\sm x][b\sm y] .
\end{array}
$$
\end{corr}
\begin{proof}
From Proposition~\ref{prop.sigma.sigma} and Lemma~\ref{lemm.sigma.hash.always}. 
\end{proof}

\subsubsection{\rulefont{\sigma id}: substitution for atoms and its corollaries}

We called $\myatm{a}$ in Definition~\ref{defn.lsets} an \emph{internal atom}. 
Atoms in nominal techniques interpet variables, so if we call $\myatm{a}$ an internal atom this should suggest that $\myatm{a}$ should behave like a variable (or a variable symbol).
Rules \rulefont{\sigma a} and \rulefont{\sigma b} from Figure~\ref{fig.sub} are consistent with that, and Lemma~\ref{lemm.sigma.id} makes formal more of this intuition:
\begin{lemm}[\rulefont{\sigma id}]
\label{lemm.sigma.id}
Suppose $i{\in}\integer$ and $a{\in}\atomsi{i}$. 
Then:
\begin{enumerate*}
\item
If $Z{\in}\cred$ then $Z[a\sm\myatm{a}]=Z$.
\item
If $k{\in}\integer$ and $z{\in}\lset{k}$ then $z[a\sm\myatm{a}]=z$.
\end{enumerate*}
\end{lemm}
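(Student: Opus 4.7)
The plan is to prove both statements by simultaneous induction on $\age(Z)$ and $\age(z)$, case-splitting on the outermost constructor exactly as in the proofs of Lemma~\ref{lemm.sigma.hash.always} and Lemma~\ref{lemm.sigma.alpha}. The key observation is that $\supp(\myatm{a})=\{a\}$, so we cannot simply appeal to \rulefont{\sigma\#} (garbage collection): we really have to descend into $Z$ and $z$ and verify that the substitution is a no-op at the leaves.

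For internal atoms the base cases are immediate: \rulefont{\sigma a} gives $\myatm{a}[a\sm\myatm{a}]=\myatm{a}$, and \rulefont{\sigma b} gives $\myatm{b}[a\sm\myatm{a}]=\myatm{b}$ for $b\neq a$. For $\myand{\mathcal X}$ and $\myneg{X}$ we apply the corresponding $\sigma$-rule and the inductive hypothesis on each subformula. For the binder cases $\myall{b}{X}$ and an internal comprehension $z=\myst{c}{Z}$ (obtained by Lemma~\ref{lemm.conc.set}(2,3) after concreting at a fresh name), we use Lemma~\ref{lemm.tall.fresh}(\ref{tall.fresh.alpha.1}) to alpha-convert so that $b\#\myatm{a}$ respectively $c\#\myatm{a}$, then apply \rulefont{\sigma\credall} or \rulefont{\sigma\credst} and the inductive hypothesis (the age strictly decreases by Lemma~\ref{lemm.conc.set}(3) in the comprehension subcase). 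For $\myelt{y}{b}$ with $b\neq a$ we apply \rulefont{\sigma\credelt b} and the inductive hypothesis on $y$ (whose age is strictly less).

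The one delicate case is $\myelt{y}{a}$, where the outer atom literally matches the one being substituted. The critical point is that $x=\myatm{a}$ is an internal atom, \emph{not} a comprehension, so the applicable rule is \rulefont{\sigma\credelt\atm} (with $n=a$) and not \rulefont{\sigma\credelt a}. This gives
$$(\myelt{y}{a})[a\sm\myatm{a}]=\myelt{y[a\sm\myatm{a}]}{a},$$
and since $\age(y)<\age(\myelt{y}{a})$ the inductive hypothesis yields $y[a\sm\myatm{a}]=y$, closing the case.

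The main conceptual obstacle — such as it is — lies entirely in recognising this dichotomy between \rulefont{\sigma\credelt\atm} and \rulefont{\sigma\credelt a}: the rule that looked dangerous in earlier proofs (because it leaves the obvious structural induction on ages, forcing the level-based argument of Proposition~\ref{prop.X.sigma} and Proposition~\ref{prop.sigma.sigma}) is simply not triggered here, since $\myatm{a}$ is not of the form $\myst{a'}{X'}$. Consequently a plain age-induction suffices, and we do not need the two-level lexicographic measure of the earlier propositions.
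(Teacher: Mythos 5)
Your proof is correct and follows essentially the same route as the paper: a plain induction on $\age(Z)$ and $\age(z)$, with the crucial case $\myelt{y}{a}$ handled by observing that $\myatm{a}$ is an internal atom, so \rulefont{\sigma\credelt\atm} (not \rulefont{\sigma\credelt a}) applies and the inductive hypothesis on $y$ closes it. Your explicit remarks about why \rulefont{\sigma\#} cannot be invoked and why the lexicographic level measure is unnecessary are accurate elaborations of what the paper's proof leaves implicit.
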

\begin{proof}
We reason by induction on 
$$
\age(Z)
\quad\text{and}\quad \age(z).
$$
We consider the possibilities for $Z{\in}\cred$:
\begin{itemize}
\item
If $Z=\myand{\mathcal Z}$ for $\mathcal Z\finsubseteq\cred$ or $Z=\myneg{Z'}$ for $Z'{\in}\cred$ then we use rules \rulefont{\sigma\credand} and \rulefont{\sigma\credneg} of Figure~\ref{fig.sub} and the inductive hypothesis. 
\item
If $Z=\myall{a'}{Z'}$ for $Z'{\in}\cred$ and $a'{\in}\atomsi{i'}$ for some $i'{\in}\integer$ then we use \rulefont{\sigma\credall} of Figure~\ref{fig.sub} and the inductive hypothesis.
\item
If $Z=(\myelt{z}{b})$ for $j{\in}\integer$ and $b{\in}\atomsj{j}$ and $z{\in}\lset{j\minus 1}$ then we use rule \rulefont{\sigma\credelt b} of Figure~\ref{fig.sub} and the inductive hypothesis.
\item
If $Z=(\myelt{z}{a})$ for $z{\in}\lset{i\minus 1}$ then we use \rulefont{\sigma\credelt\atm} of Figure~\ref{fig.sub} and the inductive hypothesis for $z$.
\end{itemize}
We consider the possibilities for $z{\in}\lset{k}$:
\begin{itemize}
\item
If $z$ is an atom then we reason using \rulefont{\sigma a} or \rulefont{\sigma b} of Figure~\ref{fig.sub}.
\item
If $z$ is an internal comprehension then we use Lemma~\ref{lemm.conc.set}(2\&3)
for a fresh $c{\in}\atomsi{k\minus 1}$ (so $c\#z$), \rulefont{\sigma\credst}, and the inductive hypothesis.
\qedhere\end{itemize}
\end{proof}

Given what we have so far, Lemma~\ref{lemm.sigma.ren} is not hard to prove.
\begin{lemm}[\rulefont{\sigma ren}]
\label{lemm.sigma.ren}
Suppose $i{\in}\integer$ and $a,a'{\in}\atomsi{i}$. 
Then:
\begin{itemize*}
\item
If $Z{\in}\cred$ and $a'\#Z$ then $Z[a\sm\myatm{a'}]=(a'\ a)\act Z$.
\item
If $k{\in}\integer$ and $z{\in}\lset{k}$ and $a'\#z$ then $z[a\sm\myatm{a'}]=(a'\ a)\act z$.
\end{itemize*}
\end{lemm}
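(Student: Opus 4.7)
The plan is to derive Lemma~\ref{lemm.sigma.ren} as an immediate two-line corollary of Lemmas~\ref{lemm.sigma.alpha} and~\ref{lemm.sigma.id}. Intuitively, substituting $\myatm{a'}$ for $a$ in $Z$ should be the same as renaming $a$ to $a'$ throughout $Z$; the $\alpha$-conversion law \rulefont{\sigma\alpha} together with the identity law \rulefont{\sigma id} already encode exactly this decomposition.

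Concretely, since $a'\#Z$ by assumption, applying Lemma~\ref{lemm.sigma.alpha} (setting its $b,b',y$ to our $a,a',\myatm{a'}$) yields
$$
Z[a\sm\myatm{a'}] = ((a'\ a)\act Z)[a'\sm\myatm{a'}].
$$
Now $(a'\ a)\act Z \in \cred$ because $\cred$ carries a permutation action, so Lemma~\ref{lemm.sigma.id} applies to the right-hand side and gives
$$
((a'\ a)\act Z)[a'\sm\myatm{a'}] = (a'\ a)\act Z.
$$
Chaining the two equalities produces $Z[a\sm\myatm{a'}] = (a'\ a)\act Z$, as required. The case of $z \in \lset{k}$ is entirely analogous, using the $z$-clauses of Lemmas~\ref{lemm.sigma.alpha} and~\ref{lemm.sigma.id} in place of the $Z$-clauses; no independent induction is needed.

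There is no real obstacle here: all the delicate work --- the induction on $\age$, the case analysis on the rules of Figure~\ref{fig.sub}, and the awkward interaction with \rulefont{\sigma\credelt a} --- has already been discharged in the proofs of $\sigma\alpha$ and $\sigma id$, so the renaming property falls out by pure equational reasoning. The only minor wrinkle to note is the degenerate case $a = a'$: then $(a'\ a) = \id$ and $\myatm{a'} = \myatm{a}$, and the claim reduces to $Z[a\sm\myatm{a}] = Z = \id\act Z$, which is just \rulefont{\sigma id} itself and requires no separate treatment.
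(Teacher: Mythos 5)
Your proof is correct and is essentially identical to the paper's: apply Lemma~\ref{lemm.sigma.alpha}(1) using $a'\#Z$ to rewrite $Z[a\sm\myatm{a'}]$ as $((a'\ a)\act Z)[a'\sm\myatm{a'}]$, then conclude by Lemma~\ref{lemm.sigma.id}, with the $z$ case handled the same way. (Your remark about $a=a'$ is harmless but unnecessary, since the paper's permutative convention already makes $a$ and $a'$ distinct.)
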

\begin{proof}
Suppose $Z{\in}\cred$ and $a'\#Z$.
We note by Lemma~\ref{lemm.sigma.alpha}(1) (since $a'\#Z$) that $Z[a\sm\myatm{a'}]=((a'\ a)\act Z)[a'\sm\myatm{a'}]$ and use Lemma~\ref{lemm.sigma.id}(1).
The case of $z{\in}\lset{k}$ is exactly similar.
\end{proof}

\subsection{Sigma-algebras and \theory{SUB}}

We can now observe that our sigma-action is consistent with the nominal algebra literature in the following sense:
\begin{thrm}
\label{thrm.capasn}
The syntaxes of internal predicates and internal terms, with the sigma-action from Definition~\ref{defn.sigma}, are \emph{sigma-algebras} in the sense of \cite{gabbay:semooc,gabbay:repdul}, and models of \theory{SUB} in the sense of \cite{gabbay:capasn-jv}.

Concretely, this means that the sigma-action from Definition~\ref{defn.sigma} and Figure~\ref{fig.sigma} should
\begin{itemize*}
\item
distribute through $\credand$, $\credneg$, and 
\item
distribute in a capture-avoiding manner through $\credall$, and $\credst$, and
\item
should act on $\credatm$ by direct substitution (see \rulefont{\sigma a} and \rulefont{\sigma b} in Figure~\ref{fig.sigma}), and
\item
should satisfy the equalities in Figure~\ref{fig.nomalg}.
\end{itemize*}
\end{thrm}
\begin{proof}
Immediate from the definitions and lemmas thus far, which were designed to verify these properties.
\end{proof}

\begin{rmrk}
There is redundancy in Figure~\ref{fig.nomalg}.  
For instance, a nominal algebra that satisfies \rulefont{\sigma\alpha} satisfies \rulefont{\sigma id} if and only if it satisfies \rulefont{\sigma ren}.
One half of this implication is implicit in the proof of Lemma~\ref{lemm.sigma.ren}, which derives \rulefont{\sigma ren} from (the lemmas corresponding to) \rulefont{\sigma\alpha} and \rulefont{\sigma id}; going in the other direction is no harder.

Likewise \rulefont{\sigma swp} can be derived from \rulefont{\sigma\sigma} and \rulefont{\sigma\#}.
This does no harm: in this paper we are interested in exploring the good properties of Definition~\ref{defn.sigma}, rather than studying minimal sets of axioms for their own sake (for which see \cite{gabbay:capasn-jv}).
\end{rmrk}

\begin{rmrk}
We do not demand that the sigma-action should distribute through $\credelt$ but this is because this is syntactically impossible: $\myelt{y}{a}[a\sm x]$ cannot be equal to $\myelt{y[a\sm x]}{x}$ because $\myelt{y[a\sm x]}{x}$ is not syntax according to Figure~\ref{fig.int.syntax}.

We shall see in Subsection~\ref{subsect.sugar}, however, that this all works after all, in a suitable sense, and this will become an important observation when interpreting TST in internal syntax in Section~\ref{sect.tst}.
\end{rmrk}

\begin{rmrk}
Another way to approach the proofs in this paper would be to admit $\myelt{y}{x}$ and an explicit substitution term-former, and orient Figure~\ref{fig.sigma} as rewrite rules.
We would obtain a \emph{nominal rewrite system} \cite{gabbay:nomr-jv}.
Essentially this would amount to converting Figure~\ref{fig.sigma} (and the proofs that use it) to a `small-step' presentation, from the current `big-step' form.
\end{rmrk}

\subsection{The sugar $y\tin x$ and its properties}
\label{subsect.sugar}

Figure~\ref{fig.int.syntax} only permits the syntax $\myelt{y}{a}$, not the syntax $\myelt{y}{x}$.
We can obtain the power of $\myelt{y}{x}$ via a more sophisticated operation which we construct out of components already available:
\begin{nttn}
\label{nttn.tin}
\label{nttn.btina}
\begin{itemize}
\item
Suppose $i{\in}\integer$ and $x{\in}\lset{i}$ is an internal comprehension\footnote{Terminology from Notation~\ref{nttn.internal.comprehension}. So $x$ has the form $\myst{b}{x\at b}$ for some $b{\in}\atomsi{i\minus 1}$ with $b\#x$, and $x$ does not have the form $\myatm{a}$ for any $a{\in}\atomsi{i1}$.} and $y{\in}\lset{i\minus 1}$.
Then define $y\tin x$ by
$$
y\tin x = (x\at b)[b\sm y]
$$
where we choose $b{\in}\atomsi{i\minus 1}$ fresh (so $b\#x,y$).
\item
Suppose $i{\in}\integer$ and $a{\in}\atomsi{i}$ and $y{\in}\lset{i\minus 1}$.
Then define $y\tin\myatm{a}$ and $y\tin a$ by
$$
y\tin \myatm{a} = \myelt{y}{a} = y\tin a.
$$
\end{itemize}
\end{nttn}

\begin{rmrk}
Two natural sanity properties for Notation~\ref{nttn.tin} are that 
\begin{enumerate*}
\item
it should interact well with sets comprehension on the right-hand side, and 
\item
it should interact well with the sigma-action substituting variables for terms.
\end{enumerate*}
This is Lemmas~\ref{lemm.Xax.tin} and~\ref{lemm.tin.sigma}.
\end{rmrk}

\begin{lemm}
\label{lemm.Xax.tin} 
Suppose $X{\in}\cred$ and $i{\in}\integer$ and $a{\in}\atomsi{i}$ and $x{\in}\lset{i}$ and $a\#x$.
Then (using Notation~\ref{nttn.tin})
$$
x\tin\myst{a}{X} = X[a\sm x]
.
$$
\end{lemm}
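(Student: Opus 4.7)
The plan is to unfold Notation~\ref{nttn.tin} for the left-hand side and then invoke the standard atoms-abstraction machinery together with $\rulefont{\sigma\alpha}$. Since $a{\in}\atomsi{i}$ we have $\myst{a}{X}{\in}\lset{i\plus 1}$, so $\myst{a}{X}$ is an internal comprehension whose level is one higher than that of $x{\in}\lset{i}$. Hence the first clause of Notation~\ref{nttn.tin} applies (with the role of its ``$i$'' played by $i\plus 1$), giving
$$
x\tin\myst{a}{X} \;=\; (\myst{a}{X}\at b)[b\sm x]
$$
for a fresh $b{\in}\atomsi{i}$ with $b\#\myst{a}{X}, x$.

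The key step is to compute the concretion $\myst{a}{X}\at b$. By Lemma~\ref{lemm.tall.fresh}(\ref{tall.fresh.supp}) we have $\supp(\myst{a}{X})\subseteq\supp(X){\setminus}\{a\}$, so freshness of $b$ for $\myst{a}{X}$ (combined with $b\#x$) may without loss of generality be strengthened to $b\neq a$ and $b\#X,x$. Then Lemma~\ref{lemm.abs.basic}(1) yields $(\myst{a}{X})\at b = (b\ a)\act X$, so that
$$
x\tin\myst{a}{X} \;=\; ((b\ a)\act X)[b\sm x].
$$

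To finish, I apply Lemma~\ref{lemm.sigma.alpha}(1) (i.e.\ $\rulefont{\sigma\alpha}$): since $b\#X$, we obtain $((b\ a)\act X)[b\sm x] = X[a\sm x]$. Chaining the equalities gives the desired conclusion. There is no substantive obstacle here; the argument is essentially a three-step unfolding. The only thing requiring any care is the level bookkeeping in Notation~\ref{nttn.tin}, which is indexed so that the ``$x$'' side sits one level above the ``$y$'' side, and verifying the freshness conditions needed for $\rulefont{\sigma\alpha}$, both of which are immediate from the choice of $b$ and from Lemma~\ref{lemm.tall.fresh}.
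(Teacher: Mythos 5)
Your proof is correct: unfolding Notation~\ref{nttn.tin} with a fresh $b$, computing the concretion via Lemma~\ref{lemm.abs.basic}(1), and renaming back with \rulefont{\sigma\alpha} (Lemma~\ref{lemm.sigma.alpha}(1)) is a valid chain, and the freshness and level bookkeeping you do is right. The paper's proof is the same unfold-and-concretize idea but shorter: it takes the concretion atom to be $a$ itself, which is a legitimate choice in Notation~\ref{nttn.tin} precisely because $a\#\myst{a}{X}$ (Lemma~\ref{lemm.supp.abs}) and $a\#x$ (the hypothesis), and then Lemma~\ref{lemm.conc.set}(4) gives $(\myst{a}{X})\at a=X$ immediately, with no need for \rulefont{\sigma\alpha}. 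A side-effect of your detour through a fresh $b\neq a$ is that you never actually use the hypothesis $a\#x$; that hypothesis is what licenses the paper's shortcut (and is what one has available anyway at the point of use, e.g.\ in Lemma~\ref{lemm.comprehension}, where $a$ is a bound name that can be $\alpha$-renamed). Both arguments lean equally on the (implicit) independence of Notation~\ref{nttn.tin} from the particular fresh atom chosen, so yours is at the same level of rigour, just one renaming step longer.
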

\begin{proof}
Note by Lemma~\ref{lemm.supp.abs} that $a\#\myst{a}{X}$.
By Notation~\ref{nttn.tin} (since $a\#x,\myst{a}{X}$) $x\tin\myst{a}{X}$ is equal to $((\myst{a}{X})\at a)[a\sm x]$ and by Lemma~\ref{lemm.conc.set}(4) 
this is equal to $X[a\sm x]$.
\end{proof}

\begin{lemm}
\label{lemm.tin.sigma}
Suppose $i,j{\in}\integer$ and $x{\in}\lset{i{+}1}$ and $y{\in}\lset{i}$ and $a{\in}\atomsj{j}$ and $u{\in}\lset{j}$. 
Then:
\begin{enumerate}
\item\label{tin.distrib}
$(y\tin x)[a\sm u]=y[a\sm u]\tin x[a\sm u]$.
\item\label{tin.distrib.a}
$(y\tin a)[a\sm u]=y[a\sm u]\tin u$, where $j=i\plus 1$.
\item\label{tin.distrib.b}
$(y\tin b)[a\sm u]=y[a\sm u]\tin b$, where $b{\in}\atomsi{i\plus 1}$.
\end{enumerate}
\end{lemm}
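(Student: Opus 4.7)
The plan is to prove Part~\ref{tin.distrib} first, by splitting on the structure of $x\in\lset{i\plus 1}$ (internal atom versus internal comprehension, per Notation~\ref{nttn.internal.comprehension}), and then to observe that Parts~\ref{tin.distrib.a} and~\ref{tin.distrib.b} are essentially instances of Part~\ref{tin.distrib} applied to $x=\myatm{a}$ and $x=\myatm{b}$ respectively, combined with the second clause of Notation~\ref{nttn.btina}.

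For Part~\ref{tin.distrib}, suppose first that $x=\myatm{c}$ for some $c\in\atomsi{i\plus 1}$, so that $y\tin x=\myelt{y}{c}$. If $c\neq a$ (always the case unless $j{=}i\plus 1$ and $c{=}a$), then $x[a\sm u]=\myatm{c}$ by \rulefont{\sigma b}, and $(\myelt{y}{c})[a\sm u]=\myelt{y[a\sm u]}{c}$ by \rulefont{\sigma\credelt b}, which equals $y[a\sm u]\tin c=y[a\sm u]\tin x[a\sm u]$ by Notation~\ref{nttn.btina}. If $c=a$ (so $j=i\plus 1$), then $x[a\sm u]=u$ by \rulefont{\sigma a}, and we sub-split on whether $u$ is an internal atom $\myatm{n}$ or an internal comprehension $\myst{a'}{U'}$ (choosing $a'$ fresh); rules \rulefont{\sigma\credelt\atm} and \rulefont{\sigma\credelt a} from Figure~\ref{fig.sub} yield $(\myelt{y}{a})[a\sm u]$ equal to $\myelt{y[a\sm u]}{n}$ or $U'[a'\sm y[a\sm u]]$ respectively, and in each case Notation~\ref{nttn.btina} unfolds $y[a\sm u]\tin u$ to exactly the same term.

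The second case for Part~\ref{tin.distrib} is when $x$ is an internal comprehension. Choose $b'\in\atomsi{i}$ fresh so that $b'\#x,y,a,u$; then by Notation~\ref{nttn.btina} we have $y\tin x=(x\at b')[b'\sm y]$. Applying the $\sigma$-action to both sides and using Proposition~\ref{prop.sigma.sigma} (the substitution lemma, available because $b'\#u$) gives
\[
(y\tin x)[a\sm u]=(x\at b')[b'\sm y][a\sm u]=(x\at b')[a\sm u][b'\sm y[a\sm u]].
\]
By Lemma~\ref{lemm.sigma.at.b} (available because $b'\#u$) this equals $x[a\sm u]\at b'\,[b'\sm y[a\sm u]]$. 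Now Lemma~\ref{lemm.sigma.alpha}(\ref{alpha.supp}) gives $b'\#x[a\sm u]$ and $b'\#y[a\sm u]$, so unfolding the right-hand side of Part~\ref{tin.distrib} by Notation~\ref{nttn.btina} with this same $b'$ yields the same expression, closing the case.

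Parts~\ref{tin.distrib.a} and~\ref{tin.distrib.b} then follow: Part~\ref{tin.distrib.b} is immediate from \rulefont{\sigma\credelt b} (since $b\neq a$ by the typing, as $b\in\atomsi{i\plus 1}$ may or may not lie in $\atomsj{j}$, but in either case $b$ can be chosen distinct from $a$; if $b=a$ then we are in the setting of Part~\ref{tin.distrib.a}). Part~\ref{tin.distrib.a} follows from the $c=a$ sub-case of Part~\ref{tin.distrib} applied at $x=\myatm{a}$, again using Notation~\ref{nttn.btina} to identify $y\tin a$ with $\myelt{y}{a}$. The main obstacle is purely bureaucratic: keeping the freshness choices of $b'$ aligned on both sides of the equation so that the substitution lemma and Lemma~\ref{lemm.sigma.at.b} apply and both uses of Notation~\ref{nttn.btina} unfold to the same term. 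All the required freshness follows from Theorem~\ref{thrm.no.increase.of.supp} and Lemma~\ref{lemm.sigma.alpha}(\ref{alpha.supp}).
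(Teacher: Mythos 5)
Your proposal is correct and follows essentially the same route as the paper: parts~\ref{tin.distrib.a} and~\ref{tin.distrib.b} are reduced to part~\ref{tin.distrib} via \rulefont{\sigma a} and \rulefont{\sigma b}, and part~\ref{tin.distrib} is proved by the same case split on $x$ (atom equal to $a$, other atom, internal comprehension), using \rulefont{\sigma\credelt\atm}/\rulefont{\sigma\credelt a}/\rulefont{\sigma\credelt b} in the atom cases and Notation~\ref{nttn.tin}, Proposition~\ref{prop.sigma.sigma} and Lemma~\ref{lemm.sigma.at.b} with a fresh abstracted atom in the comprehension case. Your extra remark that $b'\#x[a\sm u]$ and $b'\#y[a\sm u]$ (via Lemma~\ref{lemm.sigma.alpha}(\ref{alpha.supp})) justifies unfolding the right-hand side with the same $b'$ is a point the paper leaves implicit, but it is the same argument.
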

\begin{proof}
First, suppose we have proved part~\ref{tin.distrib} of this result.
Then part~\ref{tin.distrib.a} follows using \figref{fig.sigma}{\sigma a} and part~\ref{tin.distrib.b} follows using \figref{fig.sigma}{\sigma b}.

To prove part~\ref{tin.distrib} there are three cases:
\begin{itemize}
\item
\emph{Suppose $x{=}\myatm{a'}$ for some $a'{\in}\atomsi{i\plus 1}$ not equal to $a$.}\quad

We reason as follows:
$$
\begin{array}[b]{r@{\ }l@{\qquad}l}
(y\tin \myatm{a'})[a\sm u]
=&
(\myelt{y}{a'})[a\sm u]
&\text{Notation~\ref{nttn.tin}}
\\
=&
\myelt{y[a\sm u]}{a'}
&\figref{fig.sub}{\sigma\credelt b}
\\
=&
y[a\sm u]\tin \myatm{a'}
&\text{Notation~\ref{nttn.tin}}
\\
=&
y[a\sm u]\tin (\myatm{a'}[a\sm u])
&\figref{fig.sub}{\sigma b}
\end{array}
$$
\item
\emph{Suppose $x{=}\myatm{a}$ (so that $j{=}i\plus 1$).}\quad

There are two sub-cases:
\begin{itemize}
\item
\emph{Suppose $u{=}\myatm{n}$ for some $n{\in}\atomsi{i}$.}\quad
We reason as follows:
$$
\begin{array}{r@{\ }l@{\qquad}l}
(y\tin \myatm{a})[a\sm \myatm{n}]
=&
(\myelt{y}{a})[a\sm \myatm{n}]
&\text{Notation~\ref{nttn.tin}}
\\
=&
\myelt{y[a\sm \myatm{n}]}{n}
&\figref{fig.sub}{\sigma\credelt\atm}
\\
=&
y[a\sm\myatm{n}]\tin \myatm{n}
&\text{Notation~\ref{nttn.tin}}
\end{array}
$$
\item
\emph{Suppose $u{=}\myst{a'}{U'}$ where $U'=u\at a'$ for some fresh $a'{\in}\atomsi{i\minus 1}$ (so $a'\#u,y$).}\quad
We reason as follows:
$$
\begin{array}{r@{\ }l@{\qquad}l}
(y\tin \myatm{a})[a\sm u]
=&
(\myelt{y}{a})[a\sm u]
&\text{Notation~\ref{nttn.tin}}
\\
=&
U'[a'\sm y[a\sm u]]
&\figref{fig.sub}{\sigma\credelt a}
\\
=&
y[a\sm u]\tin u
&\text{Notation~\ref{nttn.tin}}
\end{array}
$$
\end{itemize}
\item
\emph{Suppose $x$ is an internal comprehension (not an internal atom).}\quad

Choose $b{\in}\atomsi{i}$ fresh (so $b\#x,y,u$).
We reason as follows:
\begin{align*}
\begin{array}[b]{r@{\ }l@{\qquad}l}
(y\tin x)[a\sm u]
=&
(x\at b)[b\sm y][a\sm u]
&\text{Notation~\ref{nttn.tin}}
\\
=&
(x\at b)[a\sm u][b\sm y[a\sm u]]
&\text{Proposition~\ref{prop.sigma.sigma}}\ b\#u
\\
=&
(x[a\sm u]\at b)[b\sm y[a\sm u]]
&\text{Lemma~\ref{lemm.sigma.at.b}}\ b\#x,u
\\
=&
y[a\sm u]\tin x[a\sm u] 
&\text{Notation~\ref{nttn.tin}}\tag*{\qEd}
\end{array}
\end{align*}
\end{itemize}
\def\popQED{}
\end{proof}

\begin{corr}
\label{corr.tin.tin}
Suppose $k{\in}\integer$ and $c{\in}\atomsi{k}$ and $x{\in}\lset{k}$ and $y{\in}\lset{k\minus 1}$ and $c\#y$.
Then 
$$
\text{if}\ \ z{=}\myst{c}{(y\tin c)}{\in}\lset{k{+}1} \ \ \text{then}\ \ 
x\tin z = y\tin x .
$$
\end{corr}
\begin{proof}
We reason as follows:
\begin{align*}
\begin{array}[b]{r@{\ }l@{\qquad}l}
x\tin z =& x\tin\myst{c}{y\tin c}
&\text{Assumption}
\\
=& (y\tin c)[c\sm x]
&\text{Lemma~\ref{lemm.Xax.tin}} 
\\
=& y[c\sm x]\tin x
&\text{Lemma~\ref{lemm.tin.sigma}(\ref{tin.distrib})}
\\
=& y\tin x
&\text{Lemma~\ref{lemm.sigma.hash.always}}\,c\#y\tag*{\qEd}
\end{array}
\end{align*}
\def\popQED{}
\end{proof}

\begin{rmrk}
It is quite interesting to reflect on the inductive measures used in the proofs above.
Collecting them a list, they are:
\begin{itemize*}
\item
$\age(Z)$ and $\age(z)$ in 
Lemmas~\ref{lemm.sigma.alpha},\ \ref{lemm.sigma.hash.always},\ and~\ref{lemm.sigma.id}. 
\item
$(\level(a),\age(Z))$ and $(\level(a),\age(z))$ in Proposition~\ref{prop.X.sigma}. 
\item
$(\level(a){+}\level(b),\age(Z))$ and $(\level(a){+}\level(b),\age(z))$ in Proposition~\ref{prop.sigma.sigma}.
\end{itemize*}
So we see that the inductive proofs fall into two categories: 
\begin{enumerate*}
\item
those inductive proofs that are by a direct induction on structure and are `fairly simple', and 
\item
those inductive proofs that depend on the hierarchy of levels and are `slightly harder'.
\end{enumerate*}
Looking deeper at the slightly harder results, the inductive quantities seem to follow a slogan of
\begin{quote}
\emph{take the sum of the levels of relevant atoms, and the age of the relevant terms, lexicographically ordered.}
\end{quote}
These inductive quantities are simple, though a certain amount of thinking was required to develop them in the first place.
For future work, if e.g. a package for handling stratified syntax is implemented in a theorem-prover, then the slogan above might form the basis of a generic automated proof-method. 

Though the proofs above are probably susceptible to automation by a sufficiently advanced tactic, they are not all the same.
\end{rmrk}

\section{The language of Typed Sets}
\label{sect.tst}

We now have everything we need to develop the syntax of Typed Set Theory.
 
\subsection{Syntax of Stratified Sets}
\label{subsect.language}

\begin{defn}
\label{defn.sets.syntax}
Let \deffont{formulae} and \deffont{terms} be inductively defined as in Figure~\ref{fig.NF.syntax}.
In that figure, $a$ ranges over atoms (Definition~\ref{defn.atoms}) and $\tall$ is taken to bind $a$ and we quotient by $\alpha$-equivalence.
We write $\phi[a\ssm s]$ and $r[a\ssm s]$ for the usual capture-avoiding substitution on syntax.
\end{defn}

\begin{rmrk}
\label{rmrk.usual.syntax}
Quotienting by $\alpha$-equivalence means that a formula $\phi$ is actually an $\alpha$-equivalence class of syntax-trees and similarly for a term $r$.
This is a typical treatment but we could just as easily set things up differently, e.g. using nominal abstract syntax or de Bruijn indexes.
Definition~\ref{defn.sets.syntax} as written is designed to be close to what one might find in a typical paper on TST+ or NF if the syntax were specified.\footnote{\dots which it typically is not.  For instance~\cite{forster:settus} does not formally define its syntax e.g. via an inductive definition in the style of Definition~\ref{defn.sets.syntax}, and this is not unusual.} 
\end{rmrk}

\begin{figure}
$$
\begin{array}{r@{\ }l}
\phi,\psi ::=& \tbot \mid \tneg\phi \mid \phi\tand\phi \mid \tall a.\phi \mid 
s\tin s
\\
s,t,r::=& a \mid \tst{a}{\phi}
\end{array}
$$
\caption{The syntax of Stratified Sets}
\label{fig.NF.syntax}
\end{figure}

Definition~\ref{defn.levels} is standard: 
\begin{defn}
\label{defn.levels}
Suppose $t$ is a term (Definition~\ref{defn.sets.syntax}).
Then extend $\level(a)$ from Definition~\ref{defn.atoms} from atoms to all terms by:
$$
\begin{array}{r@{\ }l}
\level(\tst{a}{\phi})=&\level(a){+}1
\end{array}
$$
Call a formula $\phi$ or term $t$ \deffont{stratified} when:
\begin{quote}
if $s'\tin s$ is a subterm of $\phi$ or $t$ then $\level(s)=\level(s'){+}1$.
\end{quote}
\end{defn}

\begin{xmpl}
Suppose $a{\in}\atomsi{2}$, $b{\in}\atomsi{3}$, and $c{\in}\atomsi{4}$.
Then $a\tin b$ and $b\tin c$ are stratified, and $a\tin c$, $b\tin a$, and $a\tin a$ are not stratified.
\end{xmpl}

\begin{defn}
\label{defn.tst}
The language of \deffont{Stratified Sets} consists of stratified formulae and terms.
\end{defn}

\begin{rmrk}
We only care about \emph{stratified} formulae and terms henceforth --- that is, we restrict attention to those formulae and terms that are stratified.

So for all terms and formulae considered from now on, the reader should assume they are stratified, where:
\begin{itemize*}
\item
$\tin$ polymorphically takes two terms of type $i$ and $i{+}1$ to a formula for each $i\in\integer$,
and
\item
sets comprehension $\tst{a}{\phi}$ takes an atom of type $i\in\integer$ and a formula $\phi$ to a term of type $i{+}1$. 
\end{itemize*}
We further assume that levels are arranged to respect stratification where this is required, so for example when we write $[a\ssm s]$ it is understood that we assume $a{\in}\atomsi{\level(s)}$.
\end{rmrk}

\begin{rmrk}
We could add equality $s\teq t$ to our syntax in Figure~\ref{fig.NF.syntax}, at some modest cost in extra cases in inductive arguments.
The pertinent stratification condition would be that if $s\teq t$ is a subterm then $\level(s)=\level(t)$.
Our results extend without issues to the syntax with equality.
\end{rmrk}

\subsection{Interpretation for formulae and terms}

\begin{defn}
\label{defn.interp}
Define an \deffont{interpretation} of stratified formulae $\phi$ and terms $s$ as in Figure~\ref{fig.interp}, mapping $\phi$ to $\finte{\phi}\in\cred$ and $s$ of level $i{\in}\integer$ to $\finte{s}\in\lset{i}$. 
\end{defn}

\begin{rmrk}
For the reader's convenience we give pointers for the notation used in the right-hand sides of the equalities in Figure~\ref{fig.interp}: 
\begin{itemize*}
\item
$\credfalse$ is from Example~\ref{xmpl.credemp}.
\item
$\credneg{}$ is from Definition~\ref{defn.cred}.
\item
$\finte{t}\tin\finte{s}$ is from Notation~\ref{nttn.tin}.
\item
$\myst{a}{\finte{\phi}}$ is from Definitions~\ref{defn.abs} and~\ref{defn.cred}. 
\item
$\atm$ is from Definition~\ref{defn.lsets}.
\end{itemize*}
Note that the translation in Figure~\ref{fig.interp} from the syntax of formulae $\phi$ and terms $s$ from Figure~\ref{fig.NF.syntax} to the syntax of internal predicates and internal sets from Definition~\ref{defn.cred} is not entirely direct: $t\tin s$ is primitive in formulae but only primitive in internal predicates if $s$ is an atom.
\end{rmrk}

\begin{figure}[t]
$$
\begin{array}[t]{r@{\ }l}
\finte{\tbot}=&\credfalse
\\
\finte{\tneg\phi}=&\myneg{\finte{\phi}}
\\
\finte{\phi\tand\psi}=&\mybinaryand{\finte{\phi}}{\finte{\psi}}
\\
\finte{\tall a.\phi}=&\myall{a}{\finte{\phi}}
\end{array}
\qquad
\begin{array}[t]{r@{\ }l}
\finte{t\tin s}=&\finte{t}\tin\finte{s}
\\
\finte{\tst{a}{\phi}}=&\myst{a}{\finte{\phi}} 
\\
\finte{a}=&
\myatm{a} 
\end{array}
$$
\caption{Interpretation of formulae and terms}
\label{fig.interp}
\end{figure}

\begin{defn}
\label{defn.size}
Define the \deffont{size} of a stratified formula $\phi$ and stratified term $t$ inductively as follows:
$$
\begin{array}{r@{\ }l@{\quad}r@{\ }l}
\size(a)=&1
&
\size(\tst{a}{\phi})=&\size(\phi){+}1
\\
\size(\tbot)=&1
&
\size(\phi\tand\psi)=&\size(\phi)+\size(\psi)+1
\\
\size(\tneg\phi)=&\size(\phi){+}1
&
\size(\tall a.\phi)=&
\size(\phi){+}1
\\
\size(t\tin s)=&\size(t){+}\size(s)+1
\end{array}
$$
\end{defn}

\begin{lemm}
\label{lemm.level.denot.s}
Suppose $\phi$ is a stratified formula and $s$ is a stratified term with $\level(s)= i\in\integer$.
Then 
$$
\finte{\phi}\in\cred
\quad\text{and}\quad
\finte{s}\in\lset{i}.
$$
\end{lemm}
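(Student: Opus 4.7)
The plan is to prove both statements simultaneously by induction on $\size(\phi)$ and $\size(s)$ (Definition~\ref{defn.size}), inspecting each clause of the interpretation in Figure~\ref{fig.interp}. Most cases are immediate: $\finte{\tbot}=\credfalse\in\cred$ (Example~\ref{xmpl.credemp}); $\finte{\tneg\phi}=\myneg{\finte{\phi}}\in\cred$ by the inductive hypothesis and the constructor rule for $\credneg$ in Figure~\ref{fig.int.syntax}; $\finte{\phi\tand\psi}=\mybinaryand{\finte{\phi}}{\finte{\psi}}$ lies in $\cred$ because $\{\finte\phi,\finte\psi\}\finsubseteq\cred$ by the inductive hypothesis; $\finte{\tall a.\phi}=\myall{a}{\finte{\phi}}\in\cred$ by the inductive hypothesis; and $\finte{a}=\myatm{a}\in\lset{\level(a)}$ directly.

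For $\finte{\tst{a}{\phi}}=\myst{a}{\finte\phi}$ we use stratification: if $\level(\tst{a}{\phi})=i$, then by Definition~\ref{defn.levels} we have $a\in\atomsi{i{-}1}$, so the constructor rule for $\credst$ in Figure~\ref{fig.int.syntax} gives $\myst{a}{\finte\phi}\in\lset{i}$. This leaves the one nontrivial case, namely $\finte{t\tin s}=\finte{t}\tin\finte{s}$, which unpacks via Notation~\ref{nttn.tin}. By stratification, if $\level(s)=i$ then $\level(t)=i{-}1$, so the inductive hypothesis yields $\finte{s}\in\lset{i}$ and $\finte{t}\in\lset{i{-}1}$. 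Now split on the form of $\finte{s}$:
\begin{enumerate*}
\item If $s=a$ so $\finte{s}=\myatm{a}$ with $a\in\atomsi{i}$, then $\finte{t}\tin\myatm{a}=\myelt{\finte{t}}{a}$, which is in $\cred$ by the constructor rule for $\credelt$ (the level condition is exactly the stratification condition).
\item If $s=\tst{a'}{\psi}$ so $\finte{s}=\myst{a'}{\finte\psi}$ is an internal comprehension (with $a'\in\atomsi{i{-}1}$), then Notation~\ref{nttn.tin} picks fresh $b\in\atomsi{i{-}1}$ and sets $\finte{t}\tin\finte{s}=(\finte{s}\at b)[b\sm\finte{t}]$. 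Lemma~\ref{lemm.conc.set}(1) gives $\finte{s}\at b\in\cred$, and Proposition~\ref{prop.X.sigma}(1) then gives $(\finte{s}\at b)[b\sm\finte{t}]\in\cred$.
\end{enumerate*}

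The only mildly delicate step is the $\tin$ case: one must remember that Notation~\ref{nttn.tin} is not a primitive constructor but a derived operation defined via concretion and the $\sigma$-action, so well-definedness and membership in $\cred$ are not syntactic but require Lemma~\ref{lemm.conc.set} and Proposition~\ref{prop.X.sigma}. Once one has internalised that the interpretation is really a translation between two different syntactic conventions (binary $\tin$ on arbitrary terms versus $\credelt$ only on atoms), the proof is essentially a case analysis. I expect the whole argument to be short and mechanical, roughly one line per clause of Figure~\ref{fig.interp}, with the only substantive appeal being to the results of Section~\ref{sect.sigma-action} in the comprehension-on-the-right case of $\tin$.
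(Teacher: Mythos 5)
Your proposal is correct and follows essentially the same route as the paper: induction on $\size$, one case per clause of Figure~\ref{fig.interp}, with the $\tin$ case handled via Notation~\ref{nttn.tin}, Lemma~\ref{lemm.conc.set} and Proposition~\ref{prop.X.sigma}. Your explicit atom-versus-comprehension split in the $\tin$ case just spells out what the paper leaves implicit in its appeal to those results.
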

\begin{proof}
By induction on $\size(\phi)$ and $\size(s)$:\footnote{A structural induction on nominal abstract syntax~\cite{gabbay:newaas-jv} would also work, and in a nominal mechanised proof might be preferable.  Similarly for Proposition~\ref{prop.interp.sub}.} 
\begin{itemize*}
\item
\emph{The case of $a$.}\quad
By Figure~\ref{fig.interp} $\finte{a}{=}\myatm{a}$.
By Definition~\ref{defn.lsets} $\myatm{a}{\in}\lset{i}$.
\item
\emph{The case of $\tst{b}{\phi}$ for $j{\geq}1$ and $b{\in}\atomsj{j}$.}\quad 
By Figure~\ref{fig.interp} $\finte{\tst{b}{\phi}}{=}\myst{b}{\finte{\phi}}$. 
By Definition~\ref{defn.levels} $\level{\tst{b}{\phi}}=j{+}1$.
By inductive hypothesis $\finte{\phi}{\in}\cred$ and by Definition~\ref{defn.lsets} $\myst{b}{\finte{\phi}}\in\lset{j{+}1}$.
\item
\emph{The case of $\tbot$.}\quad
By Figure~\ref{fig.interp} $\finte{\tbot}{=}\credfalse\in\cred$. 
\item
\emph{The case of $\tneg\phi$.}\quad
From Figure~\ref{fig.interp} and Definition~\ref{defn.cred} using the inductive hypothesis. 
\item
\emph{The case of $\phi\tand\psi$.}\quad
From Figure~\ref{fig.interp} and Definition~\ref{defn.cred} using the inductive hypothesis. 
\item
\emph{The case of $\tall a.\phi$.}\quad
From Figure~\ref{fig.interp} and Definition~\ref{defn.cred} using the inductive hypothesis. 
\item
\emph{The case of $t\tin s$.}\quad
We refer to Notation~\ref{nttn.tin} and use Lemma~\ref{lemm.conc.set} and Proposition~\ref{prop.X.sigma}.
\qedhere\end{itemize*}
\end{proof}

\subsection{Properties of the interpretation}

\begin{prop}
\label{prop.interp.sub}
Suppose $\phi$ is a stratified formula and $t$, and $r$ are stratified terms and $b{\in}\atomsi{\level(t)}$.
Then: 
$$
\begin{array}{r@{\ }l}
\finte{\phi}[b\sm\finte{t}]
=&
\finte{\phi[b\ssm t]} 
\\
\finte{r}[b\sm\finte{t}]
=&
\finte{r[b\ssm t]}
\end{array}
$$
\end{prop}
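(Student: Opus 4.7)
The plan is to prove both equalities simultaneously by induction on $\size(\phi){+}\size(r)$ from Definition~\ref{defn.size} (or equivalently by a joint structural induction on the nominal abstract syntax of formulae and terms). The interpretation $\finte{\cdot}$ in Figure~\ref{fig.interp} is defined homomorphically except in the $s'\tin s$ case, where it factors through Notation~\ref{nttn.tin}, and this will be the only clause that is not immediate from the inductive hypothesis together with a matching clause of Figure~\ref{fig.sub}.

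First I would dispatch the term cases. If $r$ is an atom distinct from $b$, then \rulefont{\sigma b} gives the result; if $r=b$, then \rulefont{\sigma a} does. For $r=\tst{a}{\phi}$, by Remark~\ref{rmrk.usual.syntax} (quotienting by $\alpha$-equivalence) I may $\alpha$-rename so that $a$ is fresh for $b$ and $t$, and one verifies $a\#\finte{t}$ using Theorem~\ref{thrm.equivar}(3) (conservation of support) applied to the equivariant function $\finte{\cdot}$, so that $a\#\supp(\finte{t})\subseteq\supp(t)$. Then the capture-avoiding substitution pushes under $\tst{a}{\compressthis{-}}$ on one side, and \rulefont{\sigma\credst} of Figure~\ref{fig.sub} pushes the $\sigma$-action under $\myst{a}{\compressthis{-}}$ on the other, so the inductive hypothesis on $\phi$ closes the case.

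Next I would handle the formula cases. The case of $\tbot$ uses Corollary~\ref{corr.credemp.sigma}. The cases of $\tneg\phi$ and $\phi\tand\psi$ follow from \rulefont{\sigma\credneg} and \rulefont{\sigma\credand} together with the inductive hypothesis (for conjunction, note that $\mybinaryand{\finte{\phi}}{\finte{\psi}}$ is a finite-set conjunction, so the set-image under the $\sigma$-action lines up with the interpretation of the substituted conjuncts). The case $\tall a.\phi$ mirrors the $\tst{a}{\phi}$ case above: $\alpha$-rename so $a\#b,t$, apply \rulefont{\sigma\credall} to move the $\sigma$-action inside $\myall{a}{\compressthis{-}}$ (valid because $a\#\finte{t}$), and invoke the inductive hypothesis.

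The key obstacle --- and the only place where anything nontrivial happens --- is the clause $\phi=(s'\tin s)$. Here Figure~\ref{fig.interp} gives $\finte{s'\tin s}=\finte{s'}\tin\finte{s}$ in the sense of Notation~\ref{nttn.tin}, not in the sense of Figure~\ref{fig.int.syntax}, so we cannot just push $[b\sm\finte{t}]$ inside using a single rule of Figure~\ref{fig.sub}. Instead I would invoke Lemma~\ref{lemm.tin.sigma}(\ref{tin.distrib}), which gives exactly
\[
(\finte{s'}\tin\finte{s})[b\sm\finte{t}]=\finte{s'}[b\sm\finte{t}]\tin\finte{s}[b\sm\finte{t}],
\]
and then apply the inductive hypothesis to $s'$ and $s$ separately. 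This is the point where the whole edifice of Section~\ref{sect.sigma-action} --- in particular Proposition~\ref{prop.sigma.sigma} and Lemma~\ref{lemm.sigma.at.b}, which powered Lemma~\ref{lemm.tin.sigma} --- is being cashed in: without it the interpretation would not commute with substitution at the $\tin$ node, because internally $\myelt{y}{x}$ is not syntax when $x$ is a comprehension.
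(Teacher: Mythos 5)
Your proposal is correct and follows essentially the same route as the paper's proof: induction on size, homomorphic cases dispatched by the matching rules of Figure~\ref{fig.sub} (with $\alpha$-renaming and Theorem~\ref{thrm.no.increase.of.supp} supplying $a\#\finte{t}$ in the binder cases), and the $\tin$ case handled by Lemma~\ref{lemm.tin.sigma}(\ref{tin.distrib}). Your identification of the $\tin$ clause as the only genuinely nontrivial step, cashing in the $\sigma$-action machinery, matches the paper exactly.
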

\noindent Note by Lemma~\ref{lemm.level.denot.s} that $\finte{t}{\in}\lset{\level(t)}$ so that the $\sigma$-action $[b\sm\finte{t}]$ above is well-defined (Definition~\ref{defn.sub}).
\begin{proof}
By induction on $\size(\phi)$ and $\size(r)$. 
We consider each case in turn; the interesting case is for $\tin$, where we use Lemma~\ref{lemm.tin.sigma}:
\begin{itemize*}
\item
\emph{The case of $\tbot$.}\quad
We reason as follows:
$$
\begin{array}{r@{\ }l@{\qquad}l}
\finte{\tbot}[b\sm\finte{t}]
=&\credfalse[b\sm\finte{t}]
&\text{Figure~\ref{fig.interp}}
\\
=&\credfalse
&\text{Corollary~\ref{corr.credemp.sigma}}
\\
\finte{\tbot[b\ssm t]}
=&\finte{\tbot}
&\text{Fact of syntax}
\\
=&\credfalse
&\text{Figure~\ref{fig.interp}}
\end{array}
$$
\item
\emph{The case of $\tneg\phi$.}\quad
We reason as follows:
$$
\begin{array}{r@{\ }l@{\quad}l}
\finte{\tneg\phi}[b\sm\finte{t}]
=&(\myneg{\finte{\phi}})[b\sm\finte{t}]
&\text{Figure~\ref{fig.interp}}
\\
=&\myneg{\finte{\phi}[b\sm\finte{t}]}
&\text{Figure~\ref{fig.sub}} 
\\
=&\myneg{\finte{\phi[b\ssm t]}}
&\text{IH} \ \size(\phi){<}\size(\tneg\phi)
\\
=&\finte{\tneg(\phi[b\ssm t])}
&\text{Figure~\ref{fig.sub}} 
\\
=&\finte{(\tneg\phi)[b\ssm t]}
&\text{Fact of syntax}
\end{array}
$$
\item
\emph{The case of $\phi\tand\psi$.}\quad
We reason as follows:
$$
\begin{array}{r@{\ }l@{\quad}l}
\finte{\phi\tand\psi}[b\sm\finte{t}]
=&(\mybinaryand{\finte{\phi}}{\finte{\psi}})[b\sm\finte{t}]
&\text{Figure~\ref{fig.interp}}
\\
=&\mybinaryand{(\finte{\phi}[b\sm\finte{t}])}{(\finte{\psi}[b\sm\finte{t}])}
&\text{Figure~\ref{fig.sub}} %
\\
=&\mybinaryand{\finte{\phi[b\ssm t]}}{\finte{\psi[b\ssm t]}}
&\text{IH} \ \size(\phi),\size(\psi){<}\size(\phi\tand\psi)
\\
=&\finte{\phi[b\ssm t]\tand(\psi[b\ssm t])}
&\text{Figure~\ref{fig.sub}} 
\\
=&\finte{(\phi\tand\psi)[b\ssm t]}
&\text{Fact of syntax}
\end{array}
$$
\item
\emph{The case of $\tall a.\phi$.}\quad
We reason as follows, where we $\alpha$-rename if necessary to assume $a\#t$ (from which it follows by Theorem~\ref{thrm.no.increase.of.supp} that $a\#\finte{t}$):
$$
\begin{array}{r@{\ }l@{\qquad}l}
\finte{\tall a.\phi}[b\sm\finte{t}]
=&\myall{a}{\finte{\phi}}[b\sm\finte{t}]
&\text{Figure~\ref{fig.interp}}
\\
=&\myall{a}{(\finte{\phi}[b\sm\finte{t}])}
&\text{Figure~\ref{fig.sub}}\ a\#\finte{t}
\\
=&\myall{a}{\finte{\phi[b\ssm t]}}
&\text{IH} \ \size(\phi){<}\size(\tall a.\phi)
\\
=&\finte{\tall a.(\phi[b\ssm t])}
&\text{Figure~\ref{fig.interp}}
\\
=&\finte{(\tall a.\phi)[b\ssm t]}
&\text{Fact of syntax},\ a\#t
\end{array}
$$
\item
\emph{The case of $b$.}\quad
By Figure~\ref{fig.interp} $\finte{b}{=}\myatm{b}$.
By assumption $\finte{t}{\in}\lset{\level(b)}$ so by \figref{fig.sub}{\sigma a}
$$
\myatm{b}[b\sm \finte{t}]=\finte{t}.
$$
\item
\emph{The case of $a$ (any atom other than $b$).}\quad
By Figure~\ref{fig.interp} $\finte{a}{=}\myatm{a}$.
We use rule \rulefont{\sigma b} of Figure~\ref{fig.sub}.
\item
\emph{The case of $\tst{a}{\phi}$.}\quad
$\alpha$-converting if necessary assume $a$ is fresh (so $a\#t$, and by Theorem~\ref{thrm.no.increase.of.supp} also $a\#\finte{t}$).
We reason as follows:
$$
\begin{array}{r@{\ }l@{\qquad}l}
\finte{\tst{a}{\phi}}[b\sm\finte{t}]
=&(\myst{a}{\finte{\phi}})[b\sm \finte{t}]
&\text{Figure~\ref{fig.interp}}
\\
=&\myst{a}{\finte{\phi}[b\sm\finte{t}]}
&\figref{fig.sub}{\sigma\credst},\ a\#\finte{t}
\\
=&\myst{a}{\finte{\phi[b\ssm t]}}
&\text{IH} \ \size(\phi){<}\size(\tst{a}{\phi})
\\
=&\finte{\tst{a}{\phi[b\ssm t]}}
&\text{Figure~\ref{fig.interp}},\ a\#t
\\
=&\finte{\tst{a}{\phi}[b\ssm t]} 
&\text{Fact of syntax}
\end{array}
$$
\item
\emph{The case of $t'\tin s'$.}\quad
We reason as follows:
\begin{align*}
\begin{array}[b]{r@{\ }l@{\quad}l}
\finte{t'\tin s'}[b\sm \finte{t}]
=&
(\finte{t'}\tin\finte{s'})[b\sm\finte{t}]
&\text{Figure~\ref{fig.interp}}
\\
=&
(\finte{t'}[b\sm\finte{t}])\tin(\finte{s'}[b\sm\finte{t}])
&\text{Lemma~\ref{lemm.tin.sigma}}
\\
=&
(\finte{t'[b\ssm t]})\tin(\finte{s'[b\ssm t]})
&\text{IH} \ \size(t'),\size(s'){<}\size(t'\tin s')
\\
=&
\finte{(t'[b\ssm t])\tin (s'[b\ssm t])}
&\text{Figure~\ref{fig.interp}}\tag*{\qEd}
\end{array}
\end{align*}
\end{itemize*}
\def\popQED{}
\end{proof}

\begin{lemm}
\label{lemm.comprehension}
\label{lemm.finte.to.sm}
Suppose $\phi$ is a stratified formula and $s$ is a stratified term.
Suppose $a\in\atomsi{i{+}1}$ and $\level(s)=i$.
Then:
\begin{enumerate}
\item
$\finte{s\tin\tst{a}{\phi}}=\finte{\phi[a\ssm s]}$.
\item
$\finte{s\tin\tst{a}{\phi}} = \finte{\phi}[a\sm\finte{s}]$.
\end{enumerate}
\end{lemm}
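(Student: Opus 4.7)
The plan is to see that both parts of the lemma reduce, essentially mechanically, to results we have already proved about the $\sigma$-action and the sugar $\tin$; the interpretation in Figure~\ref{fig.interp} was designed precisely so that this would be the case.

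First I would alpha-convert $\tst{a}{\phi}$ to assume without loss of generality that $a\#s$. Both sides of each equation are stable under this renaming (for part 1 by the standard fact that $\phi[a\ssm s]$ is defined up to $\alpha$-equivalence in $a$, and for part 2 by Lemma~\ref{lemm.sigma.alpha}), so no generality is lost. By Theorem~\ref{thrm.no.increase.of.supp} applied to the denotation function, $a\#s$ then gives $a\#\finte{s}$, which is the freshness side-condition needed below. Note also that for $s\tin\tst{a}{\phi}$ to be stratified with $\level(s)=i$, we must have $\level(a)=i$, so $\finte{s}\in\lset{i}$ and $a\in\atomsi{i}$ match in level (Lemma~\ref{lemm.level.denot.s}); this is what allows all the constructions below to typecheck.

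I would prove part 2 first and derive part 1 from it. For part 2, unfold the left-hand side using Figure~\ref{fig.interp}:
$$
\finte{s\tin\tst{a}{\phi}} \;=\; \finte{s}\tin\finte{\tst{a}{\phi}} \;=\; \finte{s}\tin\myst{a}{\finte{\phi}}.
$$
Now Lemma~\ref{lemm.Xax.tin} applies directly with $X:=\finte{\phi}$, $x:=\finte{s}$, and this $a$: its hypotheses are $\finte{\phi}\in\cred$ (Lemma~\ref{lemm.level.denot.s}), $\finte{s}\in\lset{i}$ and $a\in\atomsi{i}$ (the matching-levels remark above), and $a\#\finte{s}$ (our alpha-conversion). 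Its conclusion is $\finte{\phi}[a\sm\finte{s}]$, which is the right-hand side of part 2.

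For part 1, I would chain part 2 with Proposition~\ref{prop.interp.sub}, instantiated at $\phi$, $t:=s$, $b:=a$ (the level condition $b\in\atomsi{\level(t)}$ is exactly the stratification fact above): this gives $\finte{\phi}[a\sm\finte{s}]=\finte{\phi[a\ssm s]}$, so composing with part 2 yields part 1.

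There is no real obstacle; the only subtlety is the bookkeeping of freshness and levels at the single step where Lemma~\ref{lemm.Xax.tin} is invoked, and this has been arranged by the alpha-conversion at the start. All the hard work --- in particular the substitution lemma (Proposition~\ref{prop.sigma.sigma}) that underpins Proposition~\ref{prop.interp.sub}, and the soundness of the sugar $\tin$ under the $\sigma$-action (Lemma~\ref{lemm.tin.sigma}) that underpins Lemma~\ref{lemm.Xax.tin} --- has already been done in Section~\ref{sect.sigma-action}.
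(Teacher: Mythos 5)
Your proof is correct and takes essentially the same route as the paper's: unfold the interpretation via Figure~\ref{fig.interp}, apply the comprehension/$\tin$ fact (the paper inlines Notation~\ref{nttn.tin} and Lemma~\ref{lemm.conc.set}(4), which is exactly the content of Lemma~\ref{lemm.Xax.tin} that you cite), and finish with Proposition~\ref{prop.interp.sub}. Your explicit $\alpha$-conversion to secure $a\#s$ (hence $a\#\finte{s}$), and your note that stratification forces $\level(a)=\level(s)$, are careful touches the paper leaves implicit, but they do not change the argument.
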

\begin{proof}
We reason as follows:
\begin{align*}
\begin{array}[b]{r@{\ }l@{\qquad}l}
\finte{s\tin\tst{a}{\phi}}
=&
\finte{s}\tin\finte{\tst{a}{\phi}}
&\text{Figure~\ref{fig.interp}}
\\
=&
\finte{s}\tin\myst{a}{\finte{\phi}}
&\text{Figure~\ref{fig.interp}}
\\
=&
(\myst{a}{\finte{\phi}}\at a)[a\sm\finte{s}]
&\text{Notation~\ref{nttn.tin}}
\\
=&
\finte{\phi}[a\sm\finte{s}]
&\text{Lemma~\ref{lemm.conc.set}(4)}
\\
=&
\finte{\phi[a\ssm s]}
&\text{Proposition~\ref{prop.interp.sub}}\tag*{\qEd}
\end{array}
\end{align*}
\def\popQED{}
\end{proof}

\subsection{Confluence}

\begin{defn}
\label{defn.to}
\begin{enumerate}
\item
Let $\to$ be a rewrite relation on the language of Stratified Sets (Definition~\ref{defn.tst})
defined by the rules in Figure~\ref{fig.rewrite}.
\item
Write $\to^*$ for the transitive reflexive closure of $\to$ (so the least transitive reflexive relation containing $\to$).
\end{enumerate}
\end{defn}

\begin{figure}[t]
$$
\begin{array}{c@{\qquad}c@{\qquad}c@{\qquad}c}
\begin{prooftree}
\phantom{h}
\justifies
t\tin \tst{a}{\phi}\to \phi[a\ssm t] 
\end{prooftree}
&
\begin{prooftree}
\phi\to\phi'
\justifies
\tneg\phi\to\tneg\phi'
\end{prooftree}
& 
\begin{prooftree}
\phi\to\phi'
\justifies
\phi\tand\phi''\to\phi'\tand\phi''
\end{prooftree}
&
\begin{prooftree}
\phi\to\phi'
\justifies
\phi''\tand\phi\to\phi''\tand\phi'
\end{prooftree}
\\[4ex]
\begin{prooftree}
\phi\to\phi'
\justifies
\tst{a}{\phi}\to\tst{a}{\phi'}
\end{prooftree}
&
\begin{prooftree}
\phi\to\phi'
\justifies
\tall a.\phi\to\tall a.\phi'
\end{prooftree}
&
\begin{prooftree}
s\to s'
\justifies
t\tin s\to t\tin s'
\end{prooftree}
&
\begin{prooftree}
t\to t'
\justifies
t\tin s\to t'\tin s
\end{prooftree}
\end{array}
$$
\caption{Rewrite system on formulae and terms}
\label{fig.rewrite}
\end{figure}

\begin{nttn}
\label{nttn.inclusive}
The natural injection of internal predicates and internal terms into stratified formulae and terms is clear;
to save notation we elide it, thus effectively treating the syntax of internal predicates and terms from Definition~\ref{defn.cred} as a direct subset of the syntax of stratified formulae and terms from Figure~\ref{fig.NF.syntax}.
The reader who dislikes this abuse of notation can fill in an explicit injection function $\iota$ as required, to map the former injectively into the latter.  
Either way, the meaning will be the same.
\end{nttn}

\begin{nttn}
\label{nttn.reduct}
Call a formula of the form $t\tin\tst{a}{\phi}$ a \deffont{reduct}.
\end{nttn}

\begin{lemm}
\label{lemm.finte.normal.form} 
$\finte{\phi}$ considered as a formula, is a $\to$-normal form, and similarly for $\finte{s}$. 
\end{lemm}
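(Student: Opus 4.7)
The plan is to strengthen the statement and prove the following by a straightforward structural induction on internal syntax: \emph{every} $Z \in \cred$ and every $z \in \lset{i}$, when viewed as a stratified formula or term via the injection of Notation~\ref{nttn.inclusive}, contains no reduct (Notation~\ref{nttn.reduct}) as a subterm, and so is a $\to$-normal form. By Lemma~\ref{lemm.level.denot.s} we have $\finte{\phi}\in\cred$ and $\finte{s}\in\lset{\level(s)}$, so the stronger claim implies the lemma.

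The central observation is purely syntactic. In the grammar of Figure~\ref{fig.int.syntax}, the only constructor producing a membership judgement is $\myelt{x}{a}$, and its right-hand side is required to be an atom $a\in\atomsi{i+1}$; there is no clause of the grammar permitting an internal comprehension $\myst{a}{X}$ on the right-hand side of $\credelt$. Under the injection into stratified syntax, $\myelt{x}{a}$ becomes $x'\tin a$, where $a$ is a variable, not a comprehension. Since a reduct has the form $t\tin\tst{a}{\phi}$ --- i.e.\ a sets comprehension on the right of $\tin$ --- no reduct can ever appear in the image of the injection.

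To make this formal, I would induct on $\age(Z)$ and $\age(z)$ (Definition~\ref{defn.age}). The cases of $\myand{\mathcal X}$, $\myneg{X}$, $\myall{a}{X}$, $\myst{a}{X}$, and $\myatm{a}$ are handled immediately by the inductive hypothesis on the respective subterms (note for $\myand{\mathcal X}$ that by Lemma~\ref{lemm.finpow.support} each $X\in\mathcal X$ has $\age(X){<}\age(\myand{\mathcal X})$). The case $\myelt{x}{a}$ is the only one that introduces a $\tin$: the inductive hypothesis rules out reducts inside $x$, and the outermost application is not itself a reduct because its right-hand side is the atom $a$, not a comprehension.

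There is no real obstacle: once one notes that the internal syntax is, by design, unable to express the shape $y\tin\myst{a}{X}$ (this is precisely the content of Notation~\ref{nttn.tin}, which \emph{defines away} such would-be terms via the $\sigma$-action), the lemma is essentially immediate. The only minor subtlety is bookkeeping across the mutual recursion between $\cred$ and $\lset{i}$, which is exactly what $\age$ was introduced to manage.
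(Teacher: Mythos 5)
Your proposal is correct and is essentially the paper's own argument: the paper proves the lemma by exactly the observation you identify, namely that Figure~\ref{fig.int.syntax} only permits $\myelt{y}{a}$ with an atom (never a comprehension) on the right of the membership, so no reduct $t\tin\tst{a}{\phi}$ can occur in the image of the injection. You merely spell out the structural induction that the paper leaves implicit (with the harmless quibble that the age bound for elements of $\mathcal X$ in the $\credand$ case comes from the inductive construction in Figure~\ref{fig.int.syntax}, not from Lemma~\ref{lemm.finpow.support}).
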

\begin{proof}
Reducts are impossible because the internal syntax from Figure~\ref{fig.int.syntax} only allows us to form $y\tin x$ (written $\myelt{y}{x}$ in that figure) when $x$ is an atom and not a comprehension.
\end{proof}

We can now state a kind of converse to Lemma~\ref{lemm.finte.normal.form}: 
\begin{thrm}
\label{thrm.confluent}
\begin{enumerate*}
\item
$\phi\to^*\finte{\phi}$ and $s\to^*\finte{s}$.

Here we use Notation~\ref{nttn.inclusive} to treat $\finte{\phi}$ and $\finte{s}$ directly as stratified syntax.
\item
If $\phi\to\phi'$ then $\finte{\phi}=\finte{\phi'}$.
If $s\to s'$ then $\finte{s}=\finte{s'}$.
\item
$\phi\to^*\phi'$ then $\phi'\to^*\finte{\phi}$, and if $s\to^*s'$ then $s'\to^*\finte{s'}$. 
As a corollary, the rewrite relation $\to$ from Figure~\ref{fig.rewrite} is confluent. 
\end{enumerate*}
\end{thrm} 
\begin{proof}
\begin{enumerate}
\item
By induction on syntax.
The interesting case is for $t\tin\tst{a}{\phi}$.
Suppose $\phi\to^*\finte{\phi}$.
Then
$$
\begin{array}{r@{\ }l@{\qquad}l}
t\tin\tst{a}{\phi} 
\to^*& \finte{t}\tin\tst{a}{\phi}
&\text{IH, Figure~\ref{fig.rewrite}}
\\
\to& \finte{\phi}[a\sm \finte{t}]
&\text{Figure~\ref{fig.rewrite}}
\\
=&\finte{t\tin\tst{a}{\phi}}
&\text{Lemma~\ref{lemm.finte.to.sm}}
\end{array}
$$
\item
By induction on the derivation of the rewrite (that is, on the term-context in which the rewrite takes place).
We consider three cases:
\begin{itemize}
\item
Suppose $t\tin\tst{a}{\phi}\to\phi[a\ssm t]$.
By Lemma~\ref{lemm.finte.to.sm} $\finte{t\tin\tst{a}{\phi}}=\finte{\phi[a\ssm t]}$.
\item
Suppose $t\tin\tst{a}{\phi}\to t\tin\tst{a}{\phi'}$ because $\tst{a}{\phi}\to\tst{a}{\phi'}$ because $\phi\to\phi'$.
By induction hypothesis $\finte{\phi}=\finte{\phi'}$.
Then using Lemma~\ref{lemm.finte.to.sm} we have 
$$
\finte{t\tin \tst{a}{\phi}} 
\stackrel{L\ref{lemm.finte.to.sm}}= 
\finte{\phi}[a\sm \finte{t}] 
= 
\finte{\phi'}[a\sm\finte{t}] 
\stackrel{L\ref{lemm.finte.to.sm}}= 
\finte{t\tin\tst{a}{\phi}}
.
$$
\item
Suppose $t\tin\tst{a}{\phi}\to t'\tin\tst{a}{\phi}$ because $t\to t'$.
By induction hypothesis $\finte{t}=\finte{t'}$.
We have
$$
\finte{t\tin \tst{a}{\phi}} 
\stackrel{L\ref{lemm.finte.to.sm}}= 
\finte{\phi}[a\sm \finte{t}] 
=
\finte{\phi}[a\sm \finte{t'}] 
\stackrel{L\ref{lemm.finte.to.sm}}= 
\finte{t'\tin\tst{a}{\phi}}
.
$$
\end{itemize}
\item
We combine parts~1 and~2 of this result.
\qedhere\end{enumerate}
\end{proof}

\subsection{Strong normalisation}

\begin{nttn}
\begin{itemize*}
\item
Call a comprehension $\tst{a}{\phi}$ \deffont{ternary} when $a$ occurs in $\phi$ \emph{at least} three times.\footnote{This is arguably an abuse of notation; `ternary' might suggest \emph{exactly} three times. But we need a name.}

If $a$ occurs in $\phi$ zero, one, or two times then we call $\tst{a}{\phi}$ \deffont{non-ternary}.
\item
Call a formula $\phi$ \deffont{ternary} if every comprehension in it is ternary. 
\item
Call a term $t$ \deffont{ternary} if every comprehension in it is ternary. 
\end{itemize*}
\end{nttn}

\begin{defn}
\label{defn.pad}
Suppose $\phi$ is a formula and $t$ is a term.
Write $\f{na}(\phi)$ and $\f{na}(t)$ for the predicate or term obtained by padding every non-ternary comprehension $\tst{a'}{\phi'}$ in $\phi$ or $t$ to 
$$
\tst{a'}{\phi'\tand\texi c.(a'\tin c\tand a'\tin c\tand a'\tin c)}
$$ 
where $c\in\atomsi{\level(a')\plus 1}$.\footnote{The precise choice of $c$ does not matter since we bind it with $\texi$.  The level matters, so the result is stratified.} 
\end{defn}

\begin{lemm}
\label{lemm.affine}
Suppose $\phi$ is a formula and $t$ is a term.
Then:
\begin{enumerate*}
\item
$\f{na}(\phi)$ and $\f{na}(t)$ are ternary.
\item
If $\phi$ and $s$ are ternary then $\phi[a\ssm s]$ is ternary.
Similarly for $t[a\ssm s]$.
\item
If $\phi$ is ternary and $\phi\to\phi'$ then $\phi'$ is ternary.
Similarly for $t\to t'$.
\end{enumerate*}
\end{lemm}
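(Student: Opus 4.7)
The plan is a routine structural argument once we pin down the right observation about how bound variables interact with substitution. I attack the three parts in order, with part~2 doing most of the real work.

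For part~1, I proceed by induction on the structure of $\phi$ (and $t$), treating $\f{na}$ as defined by homomorphic recursion except at comprehensions. The only interesting case is a comprehension $\tst{a'}{\phi'}$: by induction $\f{na}(\phi')$ is ternary, and then $\f{na}(\tst{a'}{\phi'})$ is either $\tst{a'}{\f{na}(\phi')}$ (if already ternary) or the padded version $\tst{a'}{\f{na}(\phi')\tand\texi c.(a'\tin c\tand a'\tin c\tand a'\tin c)}$, which by construction contains $a'$ at least three times; the newly introduced inner comprehension (hidden inside $\texi c$, i.e.\ $\tneg\tall c.\tneg(\dots)$) binds $c$, which occurs three times in its body, so it is ternary too.

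For part~2, I argue by induction on the size of $\phi$ (respectively $t$). The propositional, quantifier and membership cases are immediate from the induction hypothesis once one observes that the ternarity of a comprehension depends only on the count of the bound atom inside its body. The key case is a comprehension $\tst{b}{\psi}$ occurring inside $\phi$: by $\alpha$-conversion we may choose $b$ fresh for $s$, so that $(\tst{b}{\psi})[a\ssm s]=\tst{b}{\psi[a\ssm s]}$, and since $b\neq a$ and $b$ does not occur in $s$, each syntactic occurrence of $b$ in $\psi$ corresponds to exactly one occurrence of $b$ in $\psi[a\ssm s]$, and no new occurrences are introduced by substitution. Hence $\tst{b}{\psi[a\ssm s]}$ has at least three occurrences of $b$ whenever $\tst{b}{\psi}$ does. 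The other source of comprehensions in $\phi[a\ssm s]$ is copies of $s$ inserted at the free occurrences of $a$, and these are ternary by assumption on $s$. The case for $t[a\ssm s]$ is analogous.

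For part~3, I do induction on the derivation of $\phi\to\phi'$ (or $t\to t'$) in Figure~\ref{fig.rewrite}. All congruence rules are immediate from the induction hypothesis, since ternarity of a formula/term is a conjunction of ternarity conditions on its comprehension subterms. The only base case is $t\tin\tst{a}{\phi}\to\phi[a\ssm t]$; assuming the left-hand side is ternary, both $\phi$ and $t$ are ternary, so by part~2 $\phi[a\ssm t]$ is ternary.

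The main obstacle, such as it is, lies in part~2: one must be careful that substitution neither destroys occurrences of bound variables nor creates new ones in an uncontrolled way. Both risks are disarmed by the standard $\alpha$-renaming convention that bound atoms of comprehensions encountered during substitution are fresh for the substituted term. After that, the counting argument is straightforward.
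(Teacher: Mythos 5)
Parts 1 and 2 of your argument are sound, and the freshness/counting observation in part 2 (choose the bound atom of each comprehension fresh for $s$, note that substituting for $a\neq b$ cannot destroy free occurrences of $b$) is exactly the right point. The gap is in part 3, at the congruence case for comprehensions, which you declare ``immediate''. From $\psi\to\psi'$ the rule gives $\tst{b}{\psi}\to\tst{b}{\psi'}$, and ternarity of $\tst{b}{\psi'}$ requires not only that every comprehension \emph{inside} $\psi'$ is ternary (which is all your induction hypothesis supplies) but also that $b$ itself still occurs free at least three times in $\psi'$. That extra condition is not automatic: a rewrite step can in general delete occurrences of an atom, e.g.\ $b\tin\tst{c}{\tbot}\to\tbot$ erases the occurrence of $b$. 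It is only because the contracted comprehension is itself ternary (so its bound atom occurs at least once in its body) that occurrence counts cannot drop, and this has to be stated and proved rather than waved through --- it is precisely where the ternarity hypothesis does its work in part 3, beyond what part 2 gives.

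The standard repair is to strengthen part 3 to: if $\phi$ is ternary and $\phi\to\phi'$ then $\phi'$ is ternary \emph{and} $\f{occ}(b,\phi')\geq\f{occ}(b,\phi)$ for every atom $b$, where $\f{occ}$ counts free occurrences. All congruence cases, including the one under $\tst{b}{\hbox{-}}$, are then genuinely immediate, and in the base case $t\tin\tst{a}{\phi}\to\phi[a\ssm t]$ one computes $\f{occ}(b,\phi[a\ssm t])=\f{occ}(b,\phi)+\f{occ}(a,\phi)\cdot\f{occ}(b,t)\geq\f{occ}(b,\phi)+\f{occ}(b,t)$, using $\f{occ}(a,\phi)\geq 3\geq 1$ from ternarity of $\tst{a}{\phi}$; ternarity of $\phi[a\ssm t]$ itself is your part 2. (A minor aside: in part 1 the padding $\texi c.(\dots)$ unfolds to $\tneg\tall c.\tneg(\dots)$, which introduces a quantifier rather than a comprehension, so there is no new comprehension to check there.) The paper records all three parts as easy calculations, so the comparison is only with what a complete calculation in fact requires; as written, your part 3 would fail at the comprehension congruence step without the strengthened invariant.
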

\begin{proof}
\begin{enumerate}
\item
By construction.
\item
By an easy calculation.
\item
By an easy calculation from Figure~\ref{fig.rewrite} and part~2 of this result.
\qedhere\end{enumerate}
\end{proof}

\begin{defn}
\label{defn.complexity}
Define the \deffont{complexity} of a stratified formula $\phi$ and stratified term $t$ as follows:
$$
\begin{array}{r@{\ }l@{\quad}l}
\complexity(a)=&1
\\
\complexity(\tst{a}{\phi})=&1+\complexity(\phi)
\\
\complexity(\tbot)=&3
\\
\complexity(\phi\tand\psi)=&\complexity(\phi)+1+\complexity(\psi)
\\
\complexity(\tneg\phi)=&1+\complexity(\phi)
\\
\complexity(\tall a.\phi)=&1+\complexity(\phi)
\\
\complexity(b\tin\tst{a}{\phi})=&\complexity(\phi)
\\
\complexity(t\tin s)=&\complexity(t)+1+\complexity(s)
&\begin{array}[t]{l}\text{$t$ not an atom, or}\\ \text{$s$ not a comprehension}\end{array}
\end{array}
$$
Define the \deffont{number of atomic reducts} of $\phi$ and $t$ as follows:
$$
\begin{array}{r@{\ }l@{\quad}l}
\f{atomic}(a)=&0
\\
\f{atomic}(\tst{a}{\phi})=&\f{atomic}(\phi)
\\
\f{atomic}(\tbot)=&0
\\
\f{atomic}(\phi\tand\psi)=&\f{atomic}(\phi)+\f{atomic}(\psi)
\\
\f{atomic}(\tneg\phi)=&\f{atomic}(\phi)
\\
\f{atomic}(\tall a.\phi)=&\f{atomic}(\phi)
\\
\f{atomic}(t\tin b)=&\f{atomic}(t)
\\
\f{atomic}(t\tin\tst{a}{\phi})=&\f{atomic}(\phi)+1
&t\text{ an atom}
\\
\f{atomic}(t\tin s)=&\f{atomic}(t)+\f{atomic}(s)
&t\text{ not an atom}
\end{array}
$$
\end{defn}

\begin{rmrk}
\begin{enumerate}
\item
If we view $\tst{a}{\phi}$ as $\lambda a.\phi$ and $t\tin s$ as $s\,t$ then an \emph{atomic reduct} is just a term of the form $(\lambda a.\phi)b$.
Then $\f{atomic}$ counts the number of atomic reducts in formulae $\phi$ and terms $t$, and $\complexity$ is a measure of size in which atomic reducts are skipped.
Indeed, though we never use this directly, from Definition~\ref{defn.size} we see that 
$$
\size(\phi) = \complexity(\phi) + 2*\f{atomic}(\phi) ,
$$
and similarly for $t$.
\item
Intuitively, $\complexity(\tbot)=3$ means ``$\tbot$ has the same complexity as $b\tin a$''.
Technically,  $\complexity(\tbot)=3$ makes Lemma~\ref{lemm.geq.3}(2) hold, and allows the arithmetic in Lemma~\ref{lemm.complexity.geq} to go through. 
\end{enumerate}
\end{rmrk}

\begin{lemm}
\label{lemm.geq.3}
Suppose $s$ is a term and $\phi$ is a predicate.
Then:
\begin{enumerate*}
\item
$\complexity(s)\geq 1$.
\item
$\complexity(\phi)\geq 3$.
\item
If $s$ is not an atom (so it is a comprehension) then $\complexity(s)\geq 4$.
\end{enumerate*}
\end{lemm}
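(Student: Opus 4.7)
My plan is a direct structural induction on terms and formulae, using the clauses of Definition~\ref{defn.complexity}, taking the three parts in the order $1,2,3$ since each feeds into the next.

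For part~1, I would just do a case split on $s$: either $s$ is an atom $a$, in which case $\complexity(s)=1$ by definition, or $s$ is a comprehension $\tst{a}{\phi}$, in which case $\complexity(s)=1+\complexity(\phi)\geq 1$. No induction is needed.

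For part~2, I would do a structural induction on $\phi$. The base case $\tbot$ has complexity $3$ by definition, which is exactly why Definition~\ref{defn.complexity} stipulates $\complexity(\tbot)=3$. The inductive cases $\tneg\phi'$, $\phi'\tand\psi'$, and $\tall a.\phi'$ are immediate from the inductive hypothesis, since each adds strictly positive padding. The interesting cases are the two membership cases: for a reduct $b\tin\tst{a}{\phi'}$ we have $\complexity(b\tin\tst{a}{\phi'})=\complexity(\phi')\geq 3$ directly by the inductive hypothesis on $\phi'$; for the non-reduct case $t\tin s$ (where $t$ is not an atom or $s$ is not a comprehension) we get $\complexity(t)+1+\complexity(s)\geq 1+1+1=3$ using part~1 twice. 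This is the one step that genuinely requires having part~1 already in hand.

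For part~3, the only way $s$ can fail to be an atom is $s=\tst{a}{\phi}$, and then $\complexity(s)=1+\complexity(\phi)\geq 1+3=4$ by part~2.

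There is no real obstacle here --- the whole point of the somewhat odd-looking stipulation $\complexity(\tbot)=3$ in Definition~\ref{defn.complexity} is precisely to make part~2 hold uniformly, so that atoms, $\tbot$, and atomic $\tin$-formulae all sit above the same threshold and the arithmetic used later in Lemma~\ref{lemm.complexity.geq} goes through. The only mild subtlety is remembering to handle the reduct and non-reduct subcases of $\tin$ separately in part~2, since Definition~\ref{defn.complexity} assigns them different clauses.
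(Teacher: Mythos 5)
Your proposal is correct and follows essentially the same route as the paper, which proves the lemma by calculation and induction on Definition~\ref{defn.complexity} with the base cases $\tbot$ and $t\tin s$ (in particular $b\tin a$) singled out, exactly the cases you identify as the crux. Your write-up merely spells out the details (the case split for part~1, using part~1 inside the non-reduct $\tin$ case of part~2, and deriving part~3 from part~2), which is a faithful expansion of the paper's terse argument.
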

\begin{proof}
By calculations and induction using Definition~\ref{defn.complexity}; 
the base cases are $\tbot$ and $t\tin s$ (in particular, $b\tin a$).
\end{proof}

\begin{lemm}
\label{lemm.complexity.sub}
Suppose $\phi$ is a predicate and $t$ is a term.
Suppose $a$ is an atom and $s$ is a term and $\level(a)=\level(s)$ (so the substitution $[a\ssm s]$ is well-defined).
Then:
\begin{enumerate*}
\item
If $s$ is an atom then 
$$
\complexity(\phi[a\ssm s])=\complexity(\phi)
\quad\text{and}\quad
\complexity(t[a\ssm s])=\complexity(t)
.
$$
\item
If $s$ is not an atom (so is a comprehension)
then 
$$
\begin{array}{r@{\ }l}
\complexity(\phi[a\ssm s])\geq&\complexity(\phi)+n*(\complexity(s)\minus 1) 
\quad\text{and}\quad
\\
\complexity(t[a\ssm s])\geq&\complexity(t)+n*(\complexity(s)\minus 1)
\end{array}
$$ 
where $n$ is the number of instances of $a$ in $\phi$ or $t$.
\end{enumerate*}
\end{lemm}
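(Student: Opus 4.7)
The plan is to prove both parts simultaneously by induction on $\size(\phi)$ and $\size(t)$ (Definition~\ref{defn.size}), treating $\phi$ and $t$ uniformly case by case on their leading constructor. The cases for logical connectives and binders ($\tbot$, $\tneg\phi$, $\phi\tand\psi$, $\tall b.\phi$, $\tst{b}{\phi}$) reduce directly to the inductive hypothesis together with routine arithmetic from Definition~\ref{defn.complexity}, after $\alpha$-renaming so that the bound $b$ is fresh both from $a$ and from $s$ (as in the proof of Proposition~\ref{prop.interp.sub}).

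For part~(1), since $s$ is an atom, every atom $a$ occurring in the term is replaced by another atom $s$, so every subterm preserves the syntactic distinction ``atom vs.\ comprehension''. In particular, every $\tin$-subterm stays in, or out of, the ``reduct'' clause of Definition~\ref{defn.complexity}. Since $\complexity(a) = 1 = \complexity(s)$ for atoms, the local contributions tally exactly, and the inductive hypothesis handles all recursive subterms; the case $b\tin\tst{a'}{\phi'}$ only requires noting that $b[a\ssm s]$ is still an atom and $\tst{a'}{\phi'[a\ssm s]}$ is still a comprehension, so we remain in the reduct clause.

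For part~(2) the interesting case is $t'\tin s'$, which I would analyse by splitting on whether $t'$ is an atom or a comprehension, whether $s'$ is an atom or a comprehension, and further on whether $t' = a$ or $s' = a$. This governs whether the $\tin$-subterm sits in the reduct clause of Definition~\ref{defn.complexity} before and after substitution, which in turn governs which arithmetic equality or inequality we need. The easy sub-cases ($t'$ a comprehension, or $t'$ an atom $b\neq a$ with $s'$ unchanged in form) feed the IH through the recursive components.

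The main obstacle is the sub-case where the original term is an atomic reduct whose head variable is the substituted atom: $(a\tin\tst{a''}{\phi'})[a\ssm s] = s\tin\tst{a''}{\phi'[a\ssm s]}$. Here the original has complexity $\complexity(\phi')$ (reduct clause), whereas the substituted term is no longer a reduct (since $s$ is a comprehension, not an atom), so by the general clause its complexity is $\complexity(s) + 2 + \complexity(\phi'[a\ssm s])$. Combining with the IH on $\phi'$, and noting that the original term contains $1 + n'$ occurrences of $a$ where $n'$ is the number of occurrences in $\phi'$, the required bound is obtained with three units of slack coming from the overhead $\complexity(s) + 2$ of the non-reduct clause relative to the increment $(\complexity(s)\minus 1)$ charged to the new leading occurrence of $a$. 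The symmetric sub-case, where the substitution creates a new reduct, is handled by the same bookkeeping in reverse, again using that $\complexity$ is bounded below by Lemma~\ref{lemm.geq.3}.
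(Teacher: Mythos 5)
Your overall strategy -- structural induction with clause-by-clause arithmetic on Definition~\ref{defn.complexity} -- is essentially the paper's own, and your ``three units of slack'' computation for $(a\tin\tst{a''}{\phi'})[a\ssm s]$ reproduces exactly the paper's calculation for its case $a\tin\tst{a'}{\phi'}$; part~(1) and the routine connective/binder cases are also as in the paper.

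The genuine gap is your last sentence of part~(2). The ``symmetric sub-case, where the substitution creates a new reduct'' is the sub-case of a subformula $b\tin a$ with $b$ an atom distinct from $a$: after substitution this becomes $b\tin s$ with $s$ a comprehension, which falls under the special clause $\complexity(b\tin\tst{c}{\psi})=\complexity(\psi)$, not the general one. That occurrence of $a$ therefore contributes only $(\complexity(s)\minus 1)-3=\complexity(s)\minus 4$ rather than the required $\complexity(s)\minus 1$: creating a reduct \emph{loses} the $+3$ overhead instead of supplying slack, so the bookkeeping does not work ``in reverse'', and no appeal to Lemma~\ref{lemm.geq.3} repairs it. Concretely, take $\phi=b\tin a$ (levels arranged so this is stratified) and $s=\tst{c}{\tbot}$ with $\level(s)=\level(a)$: then $\complexity(\phi)=3$, $\complexity(s)=4$, $n=1$, yet $\complexity(\phi[a\ssm s])=\complexity(b\tin\tst{c}{\tbot})=\complexity(\tbot)=3$, which is strictly less than $\complexity(\phi)+n*(\complexity(s)\minus 1)=6$. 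So the deferred step is not merely unproved; the claimed inequality fails there, and any correct treatment must restrict or re-weight occurrences of $a$ appearing as the right-hand side of $\tin$ under an atomic left-hand side (or modify $\complexity$), not cite symmetry. To be fair, the paper's proof shares this blind spot: its case $t'\tin a$ opens with $\complexity(t'[a\ssm s]\tin s)=\complexity(t'[a\ssm s])+1+\complexity(s)$, which presupposes $t'[a\ssm s]$ is not an atom and so silently excludes $t'$ an atom other than $a$ -- precisely the sub-case above. You identified the right case split but did not notice that one branch of it actually breaks the stated bound.
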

\begin{proof}
\begin{enumerate}
\item
By a routine induction on $\phi$ and $t$ using Definition~\ref{defn.complexity}.
\item
Essentially this is clear because we replace $n$ instances of $a$ each with complexity $1$, with $n$ instances of $s$ each with complexity $\complexity(s)$.
The proof is by induction on syntax using Definition~\ref{defn.complexity}.
Interesting cases of the induction are $\phi$ or $t$ that do not mention $a$ (so $n=0$), $t'\tin a$, $a\tin b$, and $a\tin\tst{a'}{\phi'}$.
We consider each in turn:
\begin{itemize*}
\item
If $\phi$ or $t$ do not mention $a$ then $\phi[a\ssm s]=\phi$ and $t[a\ssm s]=t$ and the result follows.
\item
It is a fact that $\complexity(t'\tin a)=\complexity(t')+2$, and we calculate as follows:
$$
\begin{array}{r@{\ }l@{\quad}l}
\complexity(t'[a\ssm s]\tin s)
=&\complexity(t'[a\ssm s])+1+\complexity(s)
\\ 
\geq&\complexity(t')+(n\minus 1)*(\complexity(s)\minus 1)+1+\complexity(s)
\\ 
=&\complexity(t')+(n\minus 1)*(\complexity(s)\minus 1)+2+(\complexity(s)\minus 1)
\\ 
=&\complexity(t')+2+n*(\complexity(s)\minus 1)
\\ 
=&\complexity(t'\tin a)+n*(\complexity(s)\minus 1)
\end{array}
$$
\item
It is a fact that $\complexity(a\tin b)=3$, and we calculate as follows:
$$
\begin{array}{r@{\ }l@{\quad}l}
\complexity(s\tin b)
=&\complexity(s)+2
\\ 
=&3+1*(\complexity(s)\minus 1)
\\
=&\complexity(a\tin b)+1*(\complexity(s)\minus 1)
\end{array}
$$
\item
It is a fact that $\complexity(a\tin\tst{a'}{\phi'})=\complexity(\phi')$, and we calculate as follows:
\begin{align*}
\begin{array}[b]{r@{\ }l@{\quad}l}
\complexity(s\tin\tst{a'}{\phi'[a\ssm s]})
=&\complexity(s)+2+\complexity(\phi'[a\ssm s])
\\ 
\geq&\complexity(s)+2+\complexity(\phi')+(n\minus 1)*(\complexity(s)\minus 1)
\\ 
\gneq&(\complexity(s)\minus 1)+\complexity(\phi')+(n\minus 1)*(\complexity(s)\minus 1)
\\ 
=&\complexity(\phi')+n*(\complexity(s)\minus 1)
\end{array}
    \tag*{\qEd}
\end{align*}
\end{itemize*} 
\end{enumerate}
\def\popQED{}
\end{proof}

\begin{lemm}
\label{lemm.complexity.geq}
Suppose $\tst{a}{\phi}$ is ternary (so $\phi$ mentions $a$ free at least three times) and $s$ is a term and $\level(s)=\level(a)$.
Then:
\begin{enumerate*}
\item
If $s$ is an atom then
$$
\complexity(s\tin\tst{a}{\phi})=\complexity(\phi[a\ssm s]) = \complexity(\phi) .
$$ 
\item
If $s$ is not an atom then 
$$
\complexity(s\tin\tst{a}{\phi})\lneq\complexity(\phi[a\ssm s]) .
$$ 
\end{enumerate*}
\end{lemm}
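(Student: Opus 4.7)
My plan is to prove each case by unfolding Definition~\ref{defn.complexity} at the top level and then invoking Lemma~\ref{lemm.complexity.sub} to bound the complexity of the substituted formula.

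For part~1, when $s$ is an atom, the special clause $\complexity(b\tin\tst{a}{\phi})=\complexity(\phi)$ applies directly and gives $\complexity(s\tin\tst{a}{\phi})=\complexity(\phi)$. Then Lemma~\ref{lemm.complexity.sub}(1) gives $\complexity(\phi[a\ssm s])=\complexity(\phi)$, so both quantities equal $\complexity(\phi)$ and we are done.

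For part~2, when $s$ is a comprehension, the special clause does not apply, so the general clause gives
$$\complexity(s\tin\tst{a}{\phi})=\complexity(s)+1+\complexity(\tst{a}{\phi})=\complexity(s)+2+\complexity(\phi).$$
On the other side, since $\tst{a}{\phi}$ is ternary, $a$ occurs in $\phi$ at least $n\geq 3$ times, so Lemma~\ref{lemm.complexity.sub}(2) gives $\complexity(\phi[a\ssm s])\geq \complexity(\phi)+3(\complexity(s)-1)$. Subtracting, the gap is at least $2\complexity(s)-5$. Since $s$ is a comprehension, Lemma~\ref{lemm.geq.3}(3) yields $\complexity(s)\geq 4$, hence $2\complexity(s)-5\geq 3>0$, giving the strict inequality.

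The only mild obstacle is keeping the bookkeeping of $\complexity$ clauses honest: one must check that the general clause rather than the special clause applies when $s$ is a comprehension (which it does, since the special clause requires $t$ to be an atom --- and here $t=s$ is not), and that the threshold $\complexity(s)\geq 4$ from Lemma~\ref{lemm.geq.3}(3) is exactly what is needed. The role of requiring \emph{ternary} (rather than binary) comprehensions is precisely to get the factor $3$ that dominates the $+2$ arising from unfolding $\tin$ and $\tst{a}{\,\cdot\,}$ on the left-hand side; any smaller lower bound on $n$ would not suffice.
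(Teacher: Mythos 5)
Your proof is correct and follows essentially the same route as the paper's: part~1 via the special clause for $b\tin\tst{a}{\phi}$ together with Lemma~\ref{lemm.complexity.sub}(1), and part~2 by unfolding the general clause, applying Lemma~\ref{lemm.complexity.sub}(2) and Lemma~\ref{lemm.geq.3}(3), and doing the arithmetic. The only difference is bookkeeping: you substitute $n\geq 3$ immediately and bound the gap by $2\complexity(s)-5\geq 3$, whereas the paper keeps $n$ symbolic and reduces the inequality to $2\lneq n$; both are equivalent, and your closing observation about why ternary (rather than binary) is exactly what forces strictness matches the paper's intent.
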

\begin{proof}
\begin{enumerate}
\item
From Lemma~\ref{lemm.complexity.sub}(1).
\item
Using Definition~\ref{defn.complexity} and Lemma~\ref{lemm.complexity.sub}(2) we have the following facts:
\begin{itemize*}
\item
$\complexity(s\tin\tst{a}{\phi})=\complexity(\phi)+3+(\complexity(s)\minus 1)$.
\item
$\complexity(\phi[a\ssm s])\geq\complexity(\phi)+n*(\complexity(s)\minus 1)$.
\end{itemize*}
We can drop the $\complexity(\phi)$ on both sides and do some arithmetic:
$$
\begin{array}{r@{\ }l@{\quad}l}
3+(\complexity(s)\minus 1) 
\lneq n*(\complexity(s)\minus 1) 
\liff&
3 \lneq (n\minus 1)*(\complexity(s)\minus 1) 
\\
\liff&
3 \lneq (n\minus 1)*3
&\text{Lemma~\ref{lemm.geq.3}(3)}
\\
\liff&
2\lneq n .
\end{array}
$$
We assumed $\phi$ is ternary, which means precisely that $n\geq 3$, so this is true.
\qedhere\end{enumerate}
\end{proof}

\begin{corr}
\label{corr.affine.2}
If $\phi$ is ternary and $\phi\to\phi'$ then precisely one of the following must hold:
\begin{itemize*}
\item
$\complexity(\phi)\lneq\complexity(\phi')$ (in words: \emph{complexity increases}).
\item
$\complexity(\phi')=\complexity(\phi)$ and $\f{atomic}(\phi)\gneq\f{atomic}(\phi')$ (in words: \emph{atomic reducts decrease}).
\end{itemize*}
Similarly for $t\to t'$.
\end{corr}
\begin{proof}
Consider a reduct $s\tin\tst{a}{\phi'}$.
\begin{itemize*}
\item
If $s$ is not an atom then we use Lemma~\ref{lemm.complexity.geq}(2).
\item
If $s$ is an atom then using Lemma~\ref{lemm.complexity.geq}(1) complexity is unchanged; however the number of atomic reducts decrements.
\qedhere\end{itemize*}
\end{proof}

\begin{prop}
\label{prop.termination}
The rewrite system from Figure~\ref{fig.rewrite} is terminating (no infinite chain of rewrites).
\end{prop}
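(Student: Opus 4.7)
The plan is to establish termination first for ternary formulas, and then to reduce the general case to the ternary one via the $\f{na}$-padding of Definition~\ref{defn.pad}.

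Assume $\phi$ is ternary. Lemma~\ref{lemm.affine}(3) ensures that all $\to^*$-descendants of $\phi$ remain ternary, so Corollary~\ref{corr.affine.2} applies throughout. By Theorem~\ref{thrm.confluent}(3) any such descendant $\phi'$ satisfies $\phi' \to^* \finte{\phi}$; chaining the non-decreasing complexity property from Corollary~\ref{corr.affine.2} along this path yields $\complexity(\phi') \leq \complexity(\finte{\phi})$. Hence along any rewrite sequence starting at $\phi$, the complexity values are integer-valued, non-decreasing, and bounded above, and so must stabilise after finitely many steps. Once stable, Corollary~\ref{corr.affine.2} forces $\f{atomic}$ to strictly decrease at each further step; since $\f{atomic}$ is a non-negative integer, the sequence must be finite.

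For general $\phi$, consider $\f{na}(\phi)$, which is ternary by Lemma~\ref{lemm.affine}(1). The plan is to show that each rewrite $\phi \to \phi'$ lifts to a rewrite of $\f{na}(\phi)$ that reaches a formula equal to $\f{na}(\phi')$ except that some inert `garbage' $G$ is conjoined in place of the reduced redex. Here $G$ is obtained by substitution into the padding $\texi c.(a\tin c \tand a\tin c \tand a\tin c)$ and contains no reducts, since the bound atom $c$ is matched only against atomic positions; moreover $G$ is itself ternary. Iterating, any infinite rewrite sequence on $\phi$ induces an infinite rewrite sequence on $\f{na}(\phi)$ through ternary formulas (with inert garbage accumulating), contradicting the ternary case. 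The main technical hurdle is the commutation $\f{na}(\phi[a\ssm t]) = \f{na}(\phi)[a \ssm \f{na}(t)]$, which I would prove by induction on syntax, using that substituting for $a$ does not change the number of free occurrences of any different atom $b$ in a body $\psi$ (after $\alpha$-renaming to keep $b$ out of $t$), so $\f{na}$'s padding sites coincide on both sides.
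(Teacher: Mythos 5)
Your proposal is correct and follows essentially the same route as the paper: reduce to the ternary case via $\f{na}$ (the paper compresses your simulation step into the one-line remark that $\f{na}(\phi)$ is an annotated copy of $\phi$, whereas you spell out the lifting and the commutation $\f{na}(\phi[a\ssm t])=\f{na}(\phi)[a\ssm\f{na}(t)]$), then use confluence to bound $\complexity$ above by $\complexity(\finte{\phi})$ and conclude with the measure $\bigl(\complexity(\finte{\phi})-\complexity(\phi'),\f{atomic}(\phi')\bigr)$, which is exactly your ``complexity stabilises, then atomic reducts strictly decrease'' argument. The only quibble is your side-claim that the garbage $G$ contains no reducts --- it can, since $\f{na}(t)$ may contain reducts --- but this is harmless because your simulation never contracts redexes inside the garbage.
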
 
\begin{proof}
We consider just the case of reducing formulae; reducing terms is no harder.

$\f{na}(\phi)$ is just an annotated copy of $\phi$, so if $\phi$ has an infinite chain of rewrites then so must $\f{na}(\phi)$.\footnote{In a machine implementation we would probably want to refine Definition~\ref{defn.pad} to
annotate with $a'\tin c\tand a'\tin c\tand a'\tin c$ for $c$ a fresh constant-symbol or variable symbol (one for each level), instead of $\texi c.(a'\tin c\tand a'\tin c\tand a'\tin c)$.  This would make it easier to automatically track the annotations.} 
We see that it would suffice to prove that reductions from $\f{na}(\phi)$ are terminating.

So suppose $\f{na}(\phi)=\phi$, and $\phi$ is ternary (apply $\f{na}$ if required).

Consider $\phi'$ and suppose $\phi\to^* \phi'$. 
From Theorem~\ref{thrm.confluent} $\phi'\to^*\finte{\phi}$, and by Corollary~\ref{corr.affine.2}(1) 
$$
\complexity(\phi)\leq\complexity(\phi')\leq\complexity(\finte{\phi}).
$$
Thus the set $\{\complexity(\phi')\mid \phi\to^*\phi'\}$ 
is bounded above by $\complexity(\finte{\phi})$. 
It follows using Corollary~\ref{corr.affine.2}(1\&2) and considering the measure 
$$
\bigl(\complexity(\finte{\phi})-\complexity(\phi'),\f{atomic}(\phi')\bigr) ,
$$ 
lexicographically ordered, that any chain of reductions from $\phi$ must terminate. 
\end{proof}

\begin{rmrk}
The proof of Proposition~\ref{prop.termination} is not difficult.\footnote{\dots but not trivial.  Thanks to an anonymous referee for spotting my errors.}
This is in itself interesting:

We noticed in Remark~\ref{rmrk.lambda} how stratified syntax can be viewed as a fragment of the simply-typed $\lambda$-calculus, where $t\in\{a\mid\phi\}$ corresponds to a $\beta$-reduct and extensionality $s=\{b\mid b\in s\}$ corresponds to an $\eta$-expansion.
Yet, the direct proof of strong normalisation for the simply-typed $\lambda$-calculus is quite different and seems harder than the proof of Proposition~\ref{prop.termination} (a concise but clear presentation is in Chapter~6 of \cite{girard:prot}). 
\end{rmrk}

\begin{thrm}
\label{thrm.tst.strong}
Formulae and terms of Stratified Sets, with the rewrites from Figure~\ref{fig.rewrite}, are confluent and strongly normalising. 
\end{thrm}
\begin{proof}
Confluence and \emph{weak normalisation} (every formula/term has some rewrite to a normal form) are from Theorem~\ref{thrm.confluent}. 

\emph{Strong normalisation} follows from weak normalisation and termination (Proposition~\ref{prop.termination}).
\end{proof}

\begin{rmrk}
So from Theorems~\ref{thrm.tst.strong} and~\ref{thrm.capasn}
we see that:
\begin{enumerate*}
\item
the syntax of formulae and terms has normal forms, and furthermore
\item
normal forms with the natural substitution action given by substitute-then-renormalise, corresponds precisely to the theory of internal predicates and terms from Definition~\ref{defn.cred} and~\ref{defn.sigma}, and furthermore
\item
this theory of normal forms is an instance of the notion of nominal algebras for substitution, also called sigma-algebras, as used in the previous literature studying $\lambda$-calculus, first-order logic, and pure substitution \cite{gabbay:repdul,gabbay:semooc,gabbay:capasn-jv}.
\end{enumerate*}
\end{rmrk}

Recall from Remarks~\ref{rmrk.lanss} and~\ref{rmrk.ambiguity} that in stratifiable syntax, as used in Quine's NF, variables do not have predefined levels but we insist on a \emph{stratifiability} condition that $\phi$ and $s$ are only legal if we \emph{could} assign levels to their variables to stratify them.
We obtain as an easy corollary:
\begin{thrm}
\label{thrm.nf}
Formulae and terms of stratifiable syntax, with the rewrites from Figure~\ref{fig.rewrite}, are confluent and strongly normalising. 
\end{thrm}
\begin{proof}
The result follows from Theorem~\ref{thrm.tst.strong} by taking a stratifiable $\phi$, and stratifying it so that we now have $\phi'$ in the language of Typed Sets.
Rewrites on $\phi'$ clearly correspond 1-1 with rewrites on $\phi$, since Figure~\ref{fig.rewrite} makes no reference to the levels of variables.
\end{proof}

\section{Conclusions and future work}
\label{sect.conclusions}

Stratified Sets occupy a nice middle ground between ZF sets and simple types.
They typically appear used as a foundational syntax.
However, we have seen in this paper that Typed-Sets-the-syntax in and of itself forms a well-behaved rewrite system, and a well-behaved nominal algebra. 
This had not previously been noted, and this paper gives a reasonably full and detailed account of how rewriting and nominal algebra apply. 
This account is intended to be suitable for  
\begin{itemize*}
\item
readers familiar with rewriting who are unfamiliar with stratified sets syntax\footnote{Stratified sets syntax is not hard to define --- but it requires experience to learn what kinds of predicates are and are not stratifiable.  In use, stratifiability is a subtle and powerful condition.}
\item
readers familiar with stratified sets syntax but unfamiliar with techniques from rewriting and nominal algebra.
\end{itemize*}

We have also tried to smooth a path to implementing these proofs in a machine, hopefully in a nominal context.
We have designed the proofs to be friendly to such an implementation as future work, yet without compromising readability for humans.
Where we have cut corners (relative to a machine implementation), we tried to signpost this fact (see for instance Remark~\ref{rmrk.slip.in} and Notation~\ref{nttn.inclusive}).

Concerning other applications, it is often possible to use normal forms to build denotations.
In some contexts, the normal form \emph{is} the denotation of the terms that reduce to it.  That will not work for Stratified Sets because we are usually interested in imposing additional axioms.  But there are standard things that can be done about that,
and this has been investigated in a nominal context in papers like \cite{gabbay:semooc,gabbay:repdul}.
These papers build denotations for first-order logic and the $\lambda$-calculus using maximally consistent sets, and using nominal techniques to manage binding in denotations (extending how we used nominal techniques in this paper to manage binding in syntax). 
Having normal forms is useful here and the ideas in this paper can be used to give a denotational analysis of theories in the languages of Stratified and Stratifiable Sets.
This is future work.

We can ask about a converse to Theorems~\ref{thrm.tst.strong} and~\ref{thrm.nf}.
We have shown that a stratifiable formula rewrites to a normal form.
Now if a formula (without levels) rewrites to normal form, is it stratifiable?
We see that we cannot hope for a perfect converse by the following easy example: if we write $\varnothing=\{a\mid \bot\}$ then $\varnothing\in\{a\mid a\not\in a\}$ is not stratifiable but it rewrites to $\varnothing\not\in\varnothing$, which is stratifiable for instance as $\{a^0\mid\bot\}\not\in\{a^1\mid\bot\}$, which we could also write just as $\varnothing^1\not\in\varnothing^2$.
However there may be special cases in which stratification information can be recovered from normalisation, and this is future work. 

\renewcommand\href[2]{#2}

\hyphenation{Mathe-ma-ti-sche}

\end{document}